\newtheorem{theorem}{Theorem}
\newtheorem{corollary}{Corollary}
\newtheorem{lemma}{Lemma}[section]
\newtheorem{proposition}{Proposition}
\newtheorem{problem}{Problem}
\newtheorem{remark}{Remark}
\newtheorem*{theorem*}{Theorem}
\DeclareMathAlphabet{\mathcal}{OMS}{cmsy}{m}{n} 
\DeclareMathAlphabet\mathbfcal{OMS}{cmsy}{b}{n}
\newcommand{\tmop}[1]{\ensuremath{\operatorname{#1}}}
\DeclareMathOperator*{\minimize}{\operatorname{minimize}}
\DeclareMathOperator*{\maximize}{\operatorname{maximize}}
\def \st {\operatorname*{subject\ to\ }}
\def \nn        {\nonumber}
\def \eye       {\mathbf{I}}
\def \zero      {\mathbf{0}}
\def \lg        {\langle}
\def \rg        {\rangle}
\def \diag      {\tmop{diag}}
\def \dif       {\tmop{d}}
 \newcommand{\bs}[1]{\boldsymbol{#1}} 
\def \valpha    {{\bs{\alpha}}}
\def \vbeta     {{\bs{\beta}}}
\def \vgamma    {{\bs{\gamma}}} 
\def \vlambda   {{\bs{\lambda}}}
\def \vtheta    {{\bs{\theta}}}
\def \mSigma    {{\bs{\Sigma}}}
\newcommand{\vct}[1]{\boldsymbol{#1}}
\def \a {\vct{a}}
\def \b {\vct{b}}
\def \c {\vct{c}}
\def \e {\vct{e}}
\def \y {\vct{t}}
\def \u {\vct{u}}
\def \v {\vct{v}}
\def \w {\vct{w}}
\def \x {\vct{x}}
\def \y {\vct{y}}
\def \z {\vct{z}}
\newcommand{\mtx}[1]{\mathbf{#1}}
\def \mA {\mtx{A}}
\def \mB {\mtx{B}}
\def \mC {\mtx{C}}
\def \mH {\mtx{H}}
\def \mI {\mtx{I}}
\def \mP {\mtx{P}}
\def \mQ {\mtx{Q}}
\def \mU {\mtx{U}}
\def \mV {\mtx{V}}
\def \mW {\mtx{W}}
\def \mX {\mtx{X}}
\newcommand{\tensor}[1]{\mathbfcal{#1}}
\def \tQ {\tensor{Q}}
\def \tR {\tensor{R}}
\def \tT {\tensor{T}}
\def \calF {\mathcal{F}}
\def \calL {\mathcal{L}}
\def \calM {\mathcal{M}}
\def \calN {\mathcal{N}}
\def \bbK {\mathbb{K}}
\def \bbN {\mathbb{N}}
\def \bbS {\mathbb{S}} 
\def \S {{\mathbb{S}}}
\def \R {{\mathbb{R}}}
\def \A {{\mathcal{A}}}
\begin{document}

\title{A Super-Resolution Framework for Tensor Decomposition}

\author{Qiuwei Li\thanks{Alibaba DAMO Academy, Bellevue, WA}      
\and
Ashley Prater\thanks{Force Research Laboratory, NY}
\and
Lixin Shen\thanks{Department of Mathematics, Syracuse University, NY}
\and
Gongguo Tang\thanks{Department of Electrical, Computer \& Energy Engineering, University of Colorado Boulder, CO}
}

\maketitle 

\begin{abstract}
{This work considers a super-resolution framework for overcomplete tensor decomposition. Specifically, we view tensor decomposition as a  super-resolution problem of recovering a sum of Dirac measures on the sphere and solve it by minimizing a continuous analog of the $\ell_1$ norm on the space of measures.  The optimal value of this optimization defines the tensor nuclear norm. Similar to the separation condition in the super-resolution problem, by explicitly constructing a dual certificate,  we develop incoherence conditions of the tensor factors so that they form the unique optimal solution of the continuous analog of $\ell_1$ norm minimization. Remarkably, the derived incoherence conditions are satisfied with high probability by random tensor factors uniformly distributed on the sphere, implying global identifiability of random tensor factors.}
{Atomic norm minimization,  Dual certificate, Nonconvex, Tensor decomposition, Tensor nuclear norm, Super-resolution}
\end{abstract}

\section{Introduction}\label{sec:intro}
Tensors provide natural representations for massive multi-mode datasets encountered in  many applications including  image and video processing \cite{Liu:2013bh}, collaborative filtering \cite{Chen:2005jn}, array signal processing \cite{Lim:2010if}, convolutional networks design \cite{ge2017learning} and psychometrics \cite{Smilde:2005wf}.
Tensor methods also form the backbone of many machine learning, signal processing, and statistical algorithms, including independent component analysis (ICA) \cite{Cardoso:1989eo}, latent graphical model learning \cite{anandkumar2017analyzing}, dictionary learning \cite{Barak:2014vw}, and Gaussian mixture estimation \cite{Hsu:2013iz}.	The utility of tensors in such diverse applications is mainly due to the ability to identify {\em overcomplete}, {\em non-orthogonal} factors from tensor data as already suggested by \textsf{Kruskal's theorem} \cite{Kruskal:1977dx}. This is
known as tensor decomposition, which describes the problem of decomposing a tensor into a linear combination of a small number
of rank-1 tensors. The identifiability of  tensor factors is in sharp contrast to the inherent ambiguous nature of matrix decompositions without additional assumptions such as orthogonality and non-negativity.

In addition to its practical applicability, tensor decomposition is also of fundamental theoretical interest in solving linear inverse problems involving low-rank tensors. For one thing, theoretical results for tensor decomposition inform what types of rank-1 tensor combinations are identifiable given full observations. For another, a dual polynomial is constructed to certify a particular decomposition, which is useful in investigating the regularization power of the tensor nuclear norm for tensor inverse problems, including tensor completion, tensor denoising, and robust tensor principal component analysis. We expect that the {\em dual certificate} constructed in this work will play an important role in these tensor inverse problems similar to  that of  the subdifferential characterization of matrix nuclear norm in matrix completion and low-rank matrix recovery \cite{Candes:2009kja, Recht:2010hta}.

\subsection{The Tensor Decomposition Problem}
In this work, we focus on third-order  nonsymmetric tensors that can be decomposed  into a linear combination of unit-norm, rank-1  tensors  of the form $\u\otimes\v\otimes\w\in\R^{n_1}\otimes\R^{n_2}\otimes\R^{n_3}$. More precisely, consider the following nonsymmetric tensor decomposition
\begin{align}\label{eqn:true}
\tT=\sum_{p=1}^r  \lambda_p^\star \u_p^\star\otimes \v_p^\star \otimes \w_p^\star.
\end{align}
Through this work, we assume the rank-1 tensor factors $\{(\u_p^\star,\v_p^\star,\w_p^\star)\}$ are living on the unit spheres and might be {\em overcomplete}, that is,   $r$  is potentially greater than the individual tensor dimensions $n_1, n_2$ and $n_3$. Without loss of generality, we assume that the coefficients $\{\lambda_p^\star\}$  are positive as their signs can be absorbed into the factors.

\begin{problem}
The tensor decomposition problem is the inverse problem of retrieving those ground-truth rank-1 tensor factors $\{(\u_p^\star,\v_p^\star,\w_p^\star)\}_{p=1}^r$ from the tensor data $\tT$ in \eqref{eqn:true} \cite{li2015overcomplete}.	
\end{problem}
 
\subsection{The Super-Resolution Framework}
Tensor decomposition is  an extremely challenging problem \cite{Hillar:2013by}.  This is because we lack proper theories for basic tensor concepts and operations such as singular values, vectors, and singular value decompositions. To address these challenging issues, we will consider a \emph{super-resolution} framework for tensor decomposition. More precisely, we can view tensor decomposition as a problem of \emph{measure estimation} from moments. This is because we can rewrite the tensor decomposition \eqref{eqn:true} as a integral on the unit spheres  $\bbK:=\S^{n_1-1}\times\S^{n_2-1}\times\S^{n_3-1}$:
\begin{align}
\tT=\int_{\mathbb{K}} \u\otimes\v\otimes\w  \dif \mu^\star.
\end{align}
and then the problem of retrieving the rank-1 tensor factors $\{(\u_p^\star,\v_p^\star,\w_p^\star)\}$ from the observed tensor entries in $\tT$ is equivalent to recovering a linear combination of Dirac measures defined on the unit spheres $\bbK$:
\begin{align}\label{eqn:true_measure}
\mu^\star = \sum_{p=1}^r \lambda_p^\star \delta(\u-\u_p^\star,\v-\v_p^\star,\w-\w_p^\star)
\end{align}

Several advantages are offered by this super-resolution framework. First, it provides a natural way to extend the $\ell_1$ norm minimization in finding sparse representations for finite dictionaries~\cite{spark} to tensor decomposition.   By viewing the set of rank-1 tensors $\A = \{\u\otimes \v \otimes \w: (\u,\v,\w)\in\bbK\}$ as a dictionary with an infinite number of atoms, this formulation allows   us  to find a \emph{sparse}\footnote{The decomposition \eqref{eqn:true} is sparse, because in most practical scenarios,  $r$ is much smaller than the product $n_1n_2n_3$.
}  representation of $\tensor{T}$  by minimizing the $\ell_1$ norm   of the representation coefficients  with respect to the  dictionary $\A$. More precisely, we recover $\mu^\star$ from the tensor $\tensor{T}$ by solving a continuous analog of $\ell_1$ norm minimization (a.k.a. the total mass minimization over the space of measures)
\begin{align}
\minimize_{\mu \in \calM_+(\bbK)} \mu(\bbK) \ \st \tensor{T} = \int_\bbK \u \otimes \v \otimes \w \dif \mu \label{eqn:method}
\end{align}
where $\mathcal{M}_+ ( \mathbb{K} )$ is the set of (nonnegative) Borel measures on $\mathbb{K}$, and $\mu(\mathbb{K})$ is the total measure/mass of the set $\mathbb{K}$  measured by the Borel measure $\mu \in \calM_+(\bbK)$. Second, the optimal value of the total mass minimization defines precisely the {\em tensor nuclear norm} \cite[Proposition 3.1]{friedland2016nuclear}, which is a special case of atomic norms \cite[Eq. (2)]{Chandrasekaran:2010hl} corresponding to the atomic set $\A$. The tensor nuclear norm is useful in many tensor inverse problems,  such as, tensor completion \cite{cai2019nonconvex} and robust tensor principal component analysis \cite{lu2016tensor}.

\section{Main Results}
The main focus of this work is on characterizing the conditions when
the tensor factors   $\{(\u_p^\star, \v_p^\star, \w_p^\star)\}_{p=1}^r$ correspond to the unique optimal solution of the continuous analog of $\ell_1$ norm minimization \eqref{eqn:method}, which is extension of the incoherence condition in matrix completion problem~\cite{Candes:2009kja}, the minimum separation condition in mathematical super-resolution \cite{Candes:2014br}, and the wrap-around distance condition in line spectral estimation \cite{Tang:2013fo}. More precisely, we develop the following three assumptions, namely, incoherence condition, bounded spectral norm condition, and Gram isometry condition. For ease of exposition, in what follows, these assumptions and the main result of this work will be presented for square tensors with $n_1 = n_2 = n_3 = n$.

\medskip
\noindent{\bf Assumption I: Incoherence condition.} 
\begin{align}\label{cond:incoherence}
\Delta&:=\max_{p \neq q} \max\{ | \langle \u_{p}^{\star} ,\u_{q}^{\star} \rangle | , |
\langle \v_{p}^{\star} ,\v_{q}^{\star} \rangle | , | \langle \w_{p}^{\star}
,\w_{q}^{\star} \rangle | \}
\leq \frac{\tau(\log n)}{\sqrt{n}},
\end{align}
where $\tau(\cdot)$ is a polynomial function of its argument.\footnote{That being said, $\tau(\cdot)$ is of the form $\tau(x)=a_{m} x^{m}+a_{m-1} x^{m-1}+\ldots+a_{2} x^{2}+a_{1} x+a_{0}$ with some (positive) real numbers $a$'s being the coefficients of the polynomial and some positive integer $m$ being the degree of the polynomial.}

\medskip
\noindent{\bf Assumption II: Bounded spectral norm condition.} 
 \begin{align}\label{cond:spectralnorm}
\max \{ \| \mU \| , \| \mV \| , \| \mW \| \} \leq 1+c \sqrt{\frac{r}{n}}
\end{align}
for some constant $c>0$, where $\mU:= \begin{bmatrix}\u_{1}^{\star} & \cdots  & \u_{r}^{\star}\end{bmatrix}$, $\mV:= \begin{bmatrix}\v_{1}^{\star} & \cdots & \v_{r}^{\star}\end{bmatrix}$, and $\mW:= \begin{bmatrix}\w_{1}^{\star}  & \cdots & \w_{r}^{\star}\end{bmatrix}$.

\medskip
\noindent{\bf Assumption III: Gram isometry condition.} 
\begin{align}\label{cond:gram}
\|( \mU^\top \mU ) \odot ( \mV^\top \mV )-\eye\| \leq \kappa(\log n)\frac{\sqrt{r}}{n},
\end{align}
where $\kappa(\cdot)$ is a polynomial function of its argument.
Similar bounds hold for $\mU, \mW$, and $\mV, \mW$.
 
\begin{theorem}\label{thm:main}
Suppose the target tensor $\tT\in\R^{n\times n\times n}$ admits a decomposition \eqref{eqn:true} with the normalized tensor factors $\{(\u^\star_p,\v^\star_p,\w^\star_p)\}_{p=1}^r$  satisfying Assumptions I, II, III and 
\begin{align}\label{bound:r}
r\leq \frac{n^{17/16}}{32c^2\sqrt{15\tau(\log n)}}
\end{align}
with the polynomial $\tau(\cdot)$ given in \eqref{cond:incoherence}, the constant $c$ given in \eqref{cond:spectralnorm}, and $n$ being large enough.
Then the true factors $\{(\u^\star_p,\v^\star_p,\w^\star_p)\}_{p=1}^r$ correspond to  the unique optimal solution of the continuous analog of $\ell_1$ norm minimization \eqref{eqn:method} up to a sign ambiguity.
\end{theorem}
A few remarks follow. Firstly, since $r=O\big( {n^{17/16}}/{\sqrt{\tau(\log n)}}\big)\gg n$, total mass minimization is guaranteed to recover \emph{overcomplete} tensor decompositions. Secondly, the incoherence condition is reasonable as we argue in the following. Tensor decomposition using total mass minimization is an atomic decomposition problem. The latter determines the conditions under which a decomposition in terms of atoms in an atomic set $\A$ achieves the corresponding atomic norm. For example, the singular value decomposition is an atomic decomposition for the set of unit-norm, rank-one matrices. Finally, if the incoherence bound in Assumption I is further strengthened to $O(\frac{1}{n\alpha(\log n)})$ for some polynomial $\alpha(\cdot)$, then  Assumptions II and III are consequences of Assumption I. So if the rank-one factors of an overcomplete tensor are incoherent enough, without needing  Assumptions II and III, its CP decomposition can always be uniquely identified.

We note that Assumptions I, II and III hold with high probability if the  tensor factors are generated independently according to uniform distributions on the unit spheres \cite[Lemmas 25, 31]{anandkumar2015learning}.

\begin{corollary}\label{cor:main}
If  the true tensor factors $\{(\u^\star_p,\v^\star_p,\w^\star_p)\}_{p=1}^r$ in \eqref{eqn:true} are uniformly distributed on the unit spheres, and if $r$ satisfies \eqref{bound:r}, then with high probability, 
the true tensor factors correspond to the unique optimal solution of the continuous analog of $\ell_1$ norm minimization \eqref{eqn:method}  up to a sign ambiguity.
\end{corollary}

\section{Prior Art and Inspirations}\label{sec:prior:art}
Despite the advantages provided by tensor methods in many applications, their widespread adoption has been slow due to inherent computational intractability.  Although the decomposition~\eqref{eqn:true} is a multi-mode generalization of the singular value decomposition for matrices,  extracting the decomposition from a given tensor is a nontrivial problem that is still under active investigation (cf.~\cite{comon:2009ur,kolda2009tensor}).  Indeed, even determining the rank of a third-order tensor is an NP-hard problem~\cite{Hillar:2013by}. A common strategy used to compute a tensor decomposition is to apply an alternating minimization scheme.  Although efficient, this approach has the drawback of not providing global convergence guarantees~\cite{comon:2009ur}.   Recently, an approach combining alternating minimization with power iteration has gained popularity due to its ability to guarantee the tensor decomposition results  under certain assumptions~\cite{anandkumar2015learning,Jain:2014wm}.

Tensor decomposition is a special case of atomic decomposition which is to determine when a decomposition with respect to some given atomic set $\A$
achieves the atomic norm \cite{Chandrasekaran:2010hl}. For finite atomic sets, it is now well-known that if the atoms satisfy certain  conditions such as the restricted isometry property, then a sparse decomposition achieves the atomic norm \cite{Candes:2008wp}. For the set of rank-1, unit-norm matrices, the atomic norm (the matrix nuclear norm), is achieved by orthogonal decompositions \cite{Recht:2010hta}.
When the atoms are complex sinusoids parameterized by the frequency, Cand\`es and Fernandez-Granda showed that atomic decomposition is solved by atoms with well-separated frequencies \cite{Candes:2014br}. Similar separation conditions also show up when the atoms are translations of a known waveform \cite{Tang:2013:ad, chi2014compressive,eftekhari2019sparse}, spherical harmonics \cite{Bendory:2014tl}, and radar signals parameterized by translations and modulations \cite{Heckel:2014ts}. Tang and Shah in \cite{Tang:2015gt} employed the same atomic norm idea but focused on symmetric tensors. In addition, the result of \cite{Tang:2015gt} does not apply to overcomplete decompositions. Under a set of conditions, including the incoherence condition ensuring the separation of  tensor factors, this work characterizes a class of {\em nonsymmetric} and {\em overcomplete} tensor
decompositions that achieve the tensor nuclear norm $\|\tT\|_\ast$.

Another closely related line of work is matrix recovery \cite{davenport2016overview} and tensor recovery. Low-rank matrix recovery  based on the idea of
nuclear norm minimization has received a great deal of attention in recent years \cite{Candes:2009kja, Recht:2010hta, Recht:2011up}. A direct generalization of this approach to tensors would have been using tensor nuclear norm to perform low-rank tensor recovery. However, this approach was not pursued due to the NP-hardness of computing the tensor nuclear norm \cite{Hillar:2013by} and the lack of analysis tools for tensor problems. The mainstream tensor recovery approaches are based on various forms of matricization  \cite{Gandy:2011bh, Liu:2013bh, Mu:2013tz}. Alternating minimization can also be applied to tensor recovery with performance guarantees established in recent work \cite{Huang:2014vc}. More recently, gradient descent with a good initialization is applied to the noisy symmetric tensor completion and achieves near-optimal statistical guarantees \cite{cai2019nonconvex}. {Note that all the above mentioned works study the low-rank tensor recovery problems, i.e., the number of rank-1 tensor factors is less than the factor size $n$.  While in general calculating tensor decomposition is NP-hard, the theoretical computer science community has developed some interesting algorithms for overcomplete tensor decomposition. For example, Anandkumar et al. \cite{anandkumar2015learning,anandkumar2015learning,anandkumar2017analyzing} apply the iterative power method with good initialization to the overcomplete tensor decomposition problem and provide guarantees for the linear-overcomplete case (i.e., $r\le \beta n$). In addition to these local search algorithms such as gradient descent, power method, and alternating minimization,  another line of algorithms for overcomplete tensor decomposition are based on the sum-of-squares (SoS) semidefinite programming (SDP)
hierarchy \cite{hopkins2019robust,ma2016polynomial,potechin2017exact}. Although the SoS relaxation approaches provide provable guarantees for overcomplete tensor
decomposition, they are essentially SDPs, which is not scalable to high-dimensional tensors.}

In contrast, we expect that the atomic norm, when specialized to tensors,  will achieve the information theoretical limit for tensor completion as it does for compressive sensing, matrix completion \cite{Recht:2011up,davenport2016overview}, and line spectral estimation with missing data \cite{Tang:2013fo}. Given a set of atoms, the atomic norm is an abstraction of $\ell_1$-type regularization that favors simple models. 
Using the notion of descent cones, Chandrasekaran et al. in \cite{Chandrasekaran:2010hl} argued that the atomic norm is the best possible convex proxy for recovering simple models. Particularly, atomic norms are shown in many problems beyond compressive sensing and matrix completion to be able to recover simple models from minimal number of linear measurements. For example, when specialized to the atomic set formed by complex exponentials, the atomic norm can recover signals having sparse representations in the continuous frequency domain with the number of measurements approaching the information theoretic limit without noise \cite{Tang:2013fo}, as well as achieving near minimax denoising performance \cite{Tang:2013gd}. Continuous frequency estimation using the atomic norm is also an instance of measure estimation from (trigonometric) moments.

\section{Tensor Decomposition, Atomic Norms, and Duality}\label{sec:model}
In this work, we view tensor decomposition in the frameworks of both atomic norms and measure estimation. The unit sphere of $\R^n$ is denoted by $\S^{n-1}$, and the direct product of three unit spheres $\S^{n-1}\times \S^{n-1} \times \S^{n-1}$ by $\bbK$. The tensor atomic set is denoted by  $\A = \{\u\otimes \v \otimes \w: (\u,\v,\w)\in\bbK\}$ parameterized by the set $\bbK$, where $\u\otimes \v \otimes \w$ is a rank-1 tensor  with the $(i,j,k)$th entry being $u_i v_j w_k$.  For any tensor $\tT$, its  atomic norm with respect to $\A$ is defined by \cite[Eq. (2)]{Chandrasekaran:2010hl}
\begin{align}\label{eqn:atomic:tensor}
\|\tT\|_\A
& = \inf \{ t: \tT \in t \operatorname*{conv}(\A)\}\nonumber\\
& = \inf \bigg\{ \sum_p \lambda_p: \tT = \sum_{p} \lambda_p \u_p\otimes \v_p \otimes \w_p,\lambda_p > 0,
(\u_p, \v_p, \w_p) \in \bbK\bigg\},
\end{align}
where $\operatorname*{conv}(\A)$ is the convex hull of the atomic set $\A$, and a scalar multiplying a set scales every element in the set. Therefore, the tensor atomic norm is the minimal $\ell_1$ norm of  its expansion coefficients among all valid expansions in terms of unit-norm, rank-1 tensors. The atomic norm $\|\tT\|_\A$ defined in \eqref{eqn:atomic:tensor} is also called the tensor nuclear norm and denoted by $\|\tT\|_*$ in \cite[Eq. (2.7)]{friedland2016nuclear}. We will use these two names and notations interchangeably in the following. The way of defining the tensor nuclear norm is precisely the same as that of defining the matrix nuclear norm.

We argue that the two lines in the definition \eqref{eqn:atomic:tensor} are consistent and are also equivalent to \eqref{eqn:method} as follows. Since $\operatorname*{conv}(\A) = \{\tT: \tT = \int_\bbK \u\otimes \v\otimes \w \dif \mu, \mu \in \calM_+(\bbK), \mu(\bbK) \leq 1\},$ the first line in the definition \eqref{eqn:atomic:tensor} implies that $\|\tT\|_\A$ is equal to the optimal value of \eqref{eqn:method}. Compared with the measure optimization \eqref{eqn:method}, the feasible region of the minimization defining the atomic norm in the second line of \eqref{eqn:atomic:tensor} is restricted to discrete measures. However, these two optimizations share the same optimal value as a consequence of \textsf{Carath\'eodory's convex hull theorem}, which states that if a point $\x \in \R^d$ lies in the convex hull of a set, then $\x$ can be written as a convex combination of at most $d+1$ points of that set \cite[Theorem 2.3]{Barvinok:2002vr}. Since $\tT \in \|\tT\|_\A \operatorname*{conv}(\A) =  \operatorname*{conv}(\|\tT\|_\A \A)$, $\tT$ can be expressed as a convex combination of at most $n^3 + 1$ points of the set $\|\tT\|_\A \A$, implying that the optimal value is achieved by a discrete measure with support size at most $n^3 + 1$. This argument establishes that the two lines in \eqref{eqn:atomic:tensor} as well as the measure optimization \eqref{eqn:method} are equivalent. Therefore, the atomic norm framework and the measure optimization framework are two different formulations of the same problem, with the former setting the stage in the finite dimensional space and the latter in the infinite-dimensional space of measures.

Given an abstract atomic set, the problem of atomic decomposition seeks the conditions under which a decomposition in terms of the given atoms achieves the atomic norm. In this sense, the tensor decomposition considered in this work is an atomic decomposition problem.

\subsection{Duality}\label{sec:duality}
Duality plays an important role in analyzing atomic tensor decomposition. We again approach duality from both perspectives of atomic norms and measure estimation.

First, we find the dual problem of the optimization problem {\eqref{eqn:method}}. Given $\tQ,\tT \in \R^{n \times n \times n}$, we define the tensor inner product $\langle \tQ,\tT \rangle := \sum_{i,j,k} Q_{i j k} T_{i j k}$. Standard Lagrangian analysis shows that the dual problem of {\eqref{eqn:method}} is the following semi-infinite program,
which has an infinite number of constraints:
\begin{align}
\operatorname*{maximize}_{\tQ\in \R^{n\times n \times n}}  &\ \langle \tQ,\tT \rangle \nonumber\\
\tmop{subject}   \tmop{to}  &\ \langle \tQ,\u \otimes \v \otimes \w \rangle \leq
1, \forall ( \u,\v,\w ) \in \bbK \label{eqn:dual}
\end{align}
The polynomial $q(\u, \v, \w):=\langle \tQ,\u\otimes \v\otimes \w\rangle=\sum_{i,j,k}Q_{ijk}u_iv_jw_k$ corresponding to a dual feasible solution $\tQ$ of \eqref{eqn:dual} is called a dual polynomial. The dual polynomial associated with an optimal dual solution can be used to certify the optimality of a particular decomposition, as demonstrated by the following proposition.


\begin{proposition}\label{pro:bip}
Suppose the set of rank-1 tensors $\{ \u_{p}^{\star} \otimes \v_{p}^{\star} \otimes \w_{p}^{\star}\}_{p=1}^r$   given in \eqref{eqn:true} are linearly independent. If there exists a dual solution
$\tQ \in \R^{n \times n \times n}$ to \eqref{eqn:dual} such that the
corresponding dual polynomial $q:\bbK\to \R$
\begin{align}
q ( \u,\v,\w )  &:=  \langle \tQ,\u \otimes \v \otimes \w \rangle \label{eqn:dual_poly}
\end{align}
satisfies the following \emph{Boundedness and Interpolation Property (\textsf{BIP}):}
\begin{subequations}\label{eqn:bip}
	\begin{align}
	q ( \u_{p}^{\star} ,\v_{p}^{\star} ,\w_{p}^{\star} )   &=  1\ \tmop{ for } p\in[r] \  \textsf{(Interpolation)} \label{eqn:bip:i} \\
	q ( \u,\v,\w )   &< 1 \  \tmop{ in } {\bbK}\setminus{S^\star}\ \textsf{(Boundedness)} \label{eqn:bip:b}
	\end{align}
\end{subequations}
where $[r]:=\{1,\ldots,r\}$ and
\begin{align}\label{eqn:S:star}
S^\star:=\{(a_p\u_p^\star,b_p\v_p^\star,c_p\w_p^\star): &|a_p|=|b_p|=|c_p|= a_pb_pc_p=1, p\in[r]\},
\end{align}
then $\mu^\star$ given in \eqref{eqn:true_measure} is the unique optimal solution to  \eqref{eqn:method} up to sign ambiguity.
\end{proposition}
\begin{proof}\ \
In view of \eqref{eqn:dual}, any $\tQ$ that satisfies the \textsf{BIP} in \eqref{eqn:bip} is a dual feasible solution.    We also have
\begin{align*}
\langle \tQ,\tT \rangle  & =  \left\langle \tQ, \sum_{p=1}^{r} \lambda_{p}^{\star} \u_{p}^{\star}
\otimes \v_{p}^{\star} \otimes \w_{p}^{\star} \right\rangle
=  \sum_{p=1}^{r} \lambda_{p}^{\star} \langle \tQ,\u_{p}^{\star} \otimes
\v_{p}^{\star} \otimes \w_{p}^{\star} \rangle
=  \sum_{p=1}^{r} \lambda_{p}^{\star} q ( \u_{p}^{\star} ,\v_{p}^{\star}
,\w_{p}^{\star} )
=  \mu^\star (\bbK)
\end{align*}
establishing a zero-duality gap of the primal-dual feasible solution $( \mu^{\star} ,\tQ )$. As a consequence, $\mu^{\star}$ is a primal optimal solution to \eqref{eqn:method} and $\tQ$ is a dual optimal solution to \eqref{eqn:dual}.

For uniqueness, suppose $\hat{\mu}$ is another primal optimal solution to \eqref{eqn:method}. If $\hat{\mu}
( \bbK\setminus S^\star ) >0$, then
\begin{align*}
\mu^\star(\bbK)
= \langle \tQ,\tT \rangle
= \left\langle \tQ, \int_\bbK \u \otimes \v
\otimes \w \dif  \hat{\mu} \right\rangle
<
\hat{\mu} (S^\star )  + \int_{\bbK\setminus S^\star} 1\dif \hat{\mu}
= \hat{\mu} (\bbK)
\end{align*}
contradicting the optimality of $\hat{\mu}$. So all optimal solutions
are supported on $S^\star$. To remove the sign ambiguity, we can assume an optimal solution is supported on $\{ \u_{p}^{\star} \otimes \v_{p}^{\star} \otimes
\w_{p}^{\star} \}_{p=1}^r$. Since $\{ \u_{p}^{\star} \otimes \v_{p}^{\star} \otimes
\w_{p}^{\star} \}_{p=1}^r$  are linearly independent by assumption, the coefficients $\lambda_p^\star$ can be uniquely determined from solving the linear system of equations encoded in  $T = \sum_{p=1}^r \lambda_p^\star \u_{p}^{\star} \otimes \v_{p}^{\star} \otimes \w_{p}^{\star}$. This proves the uniqueness (up to sign ambiguity).
\end{proof}

\subsection{Dual Certificate and Subdifferential}

The dual optimal solution $\tQ$ satisfying the \textsf{BIP}  is called a {\em dual certificate}, which is used frequently as the starting point to derive several atomic decomposition and super-resolution results \cite{Candes:2014br,Tang:2013fo, Tang:2015gt, Bendory:2014tl}. In Section \ref{sec:proof}, we will explicitly construct a {\em dual certificate} to prove Theorem \ref{thm:main}. In this subsection, we will relate the {\em dual certificate} with the subdifferential of the tensor nuclear norm.

First, the dual norm of the tensor nuclear norm, \emph{i.e.}, the tensor spectral norm, of a tensor $\tQ$ is given by
\begin{align}
\|\tQ\| := \sup_{\tT: \|\tT\|_* \leq 1} \langle \tQ, \tT\rangle =  \sup_{(\u, \v, \w) \in\bbK} \langle \tQ, \u\otimes \v \otimes \w\rangle.
\end{align}
The equality is due to the fact that the atomic set $\A$ are the extreme points of the unit nuclear norm ball $\{\tT: \|\tT\|_* \leq 1\}$. In light of the spectral norm definition, we rewrite the dual problem \eqref{eqn:dual} as
\begin{align}
\operatorname*{maximize}_{\tQ\in \R^{n\times n \times n}} \ \langle \tQ,\tT \rangle \ \tmop{subject} \tmop{to} \ \|\tQ\| \leq 1
\end{align}
which is precisely the definition of the dual norm of the tensor spectral norm, \emph{i.e.}, the tensor nuclear norm.

The subdifferential (the set of subgradients) of the tensor nuclear norm is defined by \cite[Definition B.20]{foucart2013mathematical}
\begin{align}
\partial \|\cdot\|_*(\tT) =\{&\tQ \in \R^{n\times n \times n}:\|\tR\|_* \geq \|\tT\|_*+\langle \tR - \tT, \tQ\rangle,
\text{for all } \tR \in \R^{n\times n \times n}\},
\end{align}
which has an equivalent representation \cite[Section 1]{Watson:1992ha}
\begin{align}\label{eqn:subdifferential}
\partial \|\cdot\|_*(\tT) = \left \{\tQ \in \R^{n\times n \times n}: \|\tT\|_* = \langle \tQ, \tT\rangle, \|\tQ\| \leq 1 \right\}.
\end{align}

For $\tT$ having an atomic decomposition given in \eqref{eqn:true}, it can be established that the defining properties of subdifferential \eqref{eqn:subdifferential} are equivalent to
\begin{subequations}
\label{eqn:subgradient:condition}
\begin{align}	\label{eqn:subgradient:condition:a}
\langle \tQ, \u_{p}^{\star}\otimes\v_{p}^{\star}\otimes\w_{p}^{\star}\rangle & =   1,\ \tmop{for} p\in[r]\\
\langle \tQ, \u\otimes \v\otimes \w \rangle & \leq  1 {} ,\ \tmop{for}   ( \u,\v,\w ) \in \bbK
\label{eqn:subgradient:condition:b}
\end{align}
\end{subequations}
We recognize that the \textsf{BIP} \eqref{eqn:bip} is a strengthened version of the subdifferential conditions \eqref{eqn:subgradient:condition}. Therefore, a {\em dual certificate}, {\em i.e.,} any $\tQ$ satisfying the \textsf{BIP}, is an element of the subdifferential $\partial \|\cdot\|_*(\tT)$. The \textsf{BIP} in fact means that $\tQ$ is an interior point of $\partial \|\cdot\|_*(\tT)$. Our proof strategy for Theorem \ref{thm:main} is to construct such an interior point in Section \ref{sec:proof}. This is in contrast to the matrix case, for which we have an explicit characterization of the entire subdifferential of the nuclear norm using the singular value decomposition (more explicit than the one given in \eqref{eqn:subdifferential}). More specifically, suppose $\mX = \mU\mSigma \mV^\top$ is the (compact) singular value decomposition of $\mX\in \R^{m \times n}$ with $\mU \in \R^{m\times r}, \mV \in \R^{n\times r}$ and ${\mSigma}$ being an $r\times r$ diagonal matrix. Then the subdifferential of the matrix nuclear norm at $\mX$ is given by \cite[Eq. (2.9)]{Recht:2010hta}
\begin{align*}
\partial \|\cdot\|_*(\mX)= \{& \mU\mV^\top + \mW: \mU^\top \mW=\zero, \mW\mV=\zero, \|\mW\| \leq 1\}.
\end{align*}
It is challenging to obtain such a characterization for tensors unless the tensor admits an orthogonal decomposition.

\subsection{Extension: Regularization Using  Tensor Nuclear Norm}\label{sec:tensor:nuc:norm}
Independent from practical considerations, we investigate tensor decomposition  for theoretical reasons. Similar to regularizing matrix inverse problems using the matrix nuclear norm, the tensor nuclear norm can be used to regularize tensor inverse problems. Suppose we observe an unknown low-rank tensor $\tT^\star$ through the linear measurement model $\y = \mathcal{B} (\tT^\star)$, we would like to recover the tensor $\tT^\star$ from the observation $\y$. For instance, when $\mathcal{B}$ samples the individual entries of $\tT^\star$, we are looking at a tensor completion problem. {Remarkably, Yuan and Zhang exploited the tensor nuclear norm approach to tensor completion and improved the state-of-the-art sample complexity in the seminal work \cite{yuan2016tensor}.} We propose recovering $\tT^\star$ by solving
\begin{equation}
\minimize_{\tT \in \R^{n\times n \times n}} \|\tT\|_*\ \st \y = \mathcal{B}(\tT)
\end{equation}
which favors a low-rank solution.  To establish recoverability, we can construct a {\em dual certificate} $\tQ$ of the form $\mathcal{B}^*(\vlambda)$, whose corresponding dual polynomial satisfies the \textsf{BIP}. Here $\mathcal{B}^*$ is the adjoint operator of $\mathcal{B}$. When the operator $\mathcal{B}$ is random, the concentration of measure guarantees that we can construct a {\em dual certificate} $\mathcal{B}^*(\vlambda)$ that is close to the one constructed in the full data case.  This fact can then be exploited to verify the \textsf{BIP} of $\mathcal{B}^*(\vlambda)$ and to establish exact recovery. When the atoms are complex exponentials parameterized by continuous frequencies, this strategy is adopted to establish the compressed sensing off the grid result (the completion problem) \cite{Tang:2013fo} building upon the dual polynomial constructed for the super-resolution problem (the full data case) \cite{Candes:2014br}. It shows that the number of random linear measurements required for exact recovery approaches the information theoretical limit. In addition to exact recovery from noise-free measurements, the {\em dual certificate} for the full data case can also be utilized to derive near-minimax denoising performance \cite{Bhaskar:2013ki,Tang:2013gd}, approximate support recovery \cite{FernandezGranda:2013wp,li2018approximate}, and robust recovery from observations corrupted by outliers \cite{Tang:2014jk,fernandez2016demixing}. We expect that the dual polynomial constructed for tensor decomposition will play a similar role for tensor inverse problems, enabling the development of tensor results  parallel to their matrix counterparts such as matrix completion, denoising, and robust principal component analysis. We leave these as our future work.

\section{Proof of Theorem \ref{thm:main}}\label{sec:proof}

\subsection{Proof Outline}

\noindent
\makebox{
\begin{minipage}{0.55\textwidth}
The proof of Theorem \ref{thm:main} relies on the construction of a dual polynomial that satisfies the Boundedness and Interpolation  Property \eqref{eqn:bip}.
 Towards that end, we first partition $\bbK$ into the far region (controlled by Lemma \ref{lem:far:region}) and the  near region. To control the dual polynomial in the near region, we use an angular parametrization to 
further divide it into near vertex region 
(controlled by Lemma \ref{lem:near:vertex:region}) and  near band region (controlled by Lemma \ref{lem:near:band:region}).  
In the end, we can show the constructed dual polynomial satisfies the \textsf{BIP} in the whole region.  We summarize the proof map on the right. 
\end{minipage}
~~~~
\begin{minipage}{0.43\textwidth}
\vspace{-.2cm}
\includegraphics[width =1\textwidth]{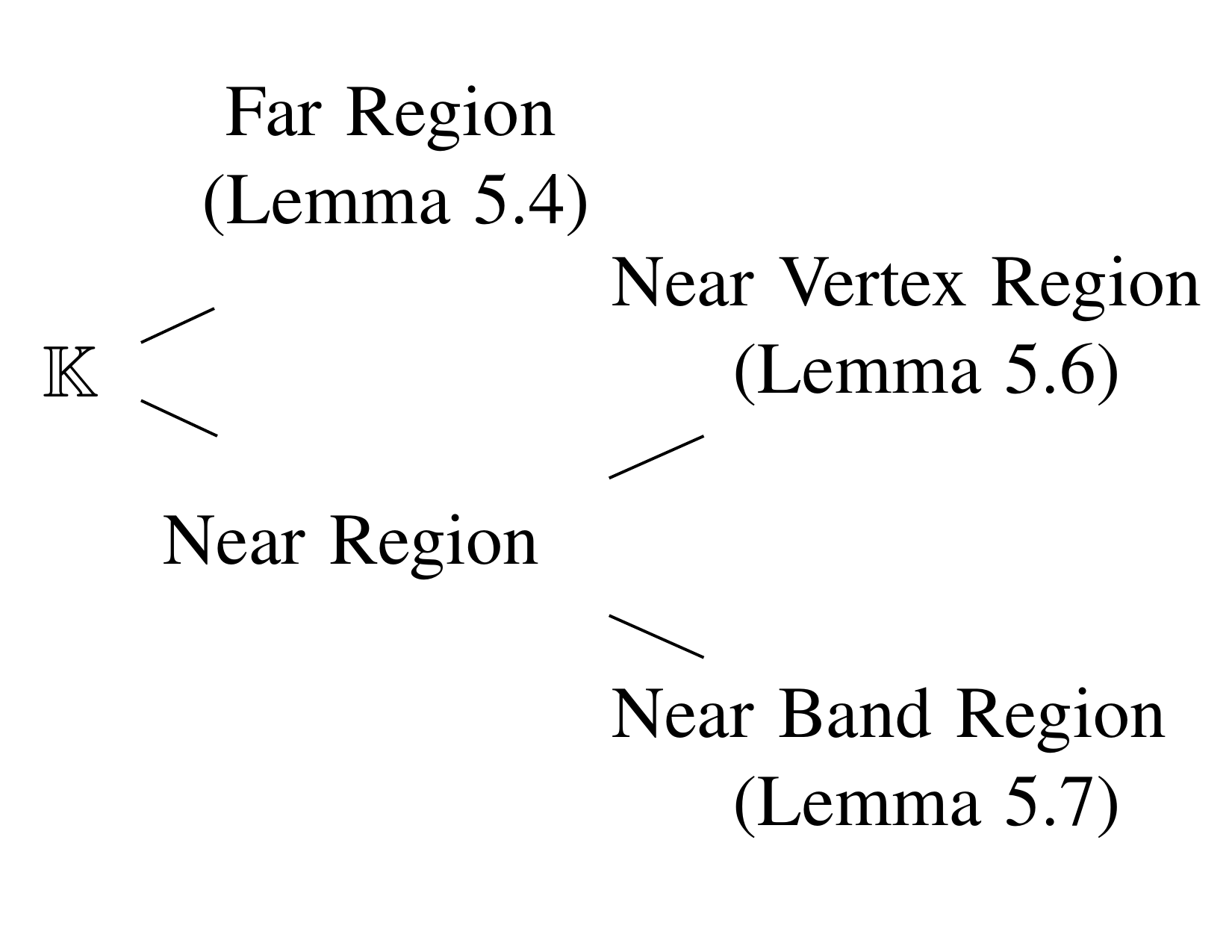}
\end{minipage}
}

\subsection{Minimal Energy Construction}
Since the \textsf{BIP} \eqref{eqn:bip} (especially the Boundedness property \eqref{eqn:bip:b}) is hard to enforce directly, we start from  a candidate {\em dual certificate} or pre-certificate $\tQ$ in the subdifferntial set $\partial \|\tT\|_*$
defined by \eqref{eqn:subgradient:condition}:
\begin{align*}	
\langle \tQ, \u_{p}^{\star}\otimes\v_{p}^{\star}\otimes\w_{p}^{\star}\rangle & =   1,\ \tmop{for} p\in[r]\\
\langle \tQ, \u\otimes \v\otimes \w \rangle & \leq  1 {} ,\ \tmop{for}   ( \u,\v,\w ) \in \bbK
\end{align*}	
which essentially characterizes the optimal solution set of following optimization
\begin{align}
\label{eqn:mip:optmization}
\maximize_{(\u,\v,\w)\in\bbK}\	\langle \tQ, \u\otimes \v\otimes \w \rangle
\end{align}
Then applying the Karush-Kuhn-Tucker (KKT)  conditions to the constrained optimization \eqref{eqn:mip:optmization}, we can  further relax the subdifferential conditions \eqref{eqn:subgradient:condition} to a set of linear constraints.
\begin{lemma}\label{lem:relaxedcondition}
The following conditions are necessary for \eqref{eqn:subgradient:condition}:
\begin{equation}
\begin{aligned}
\sum_{j,k} Q_{i j k} \v_{p}^{\star} ( j ) \w_{p}^{\star} ( k )   &=
\u_{p}^{\star} ( i ) ,\forall i\in[n],\forall p\in[r] ; \\
\sum_{i,k} Q_{i j k} \u_{p}^{\star} ( i ) \w_{p}^{\star} ( k )   &=
\v_{p}^{\star} ( j ) ,\forall i\in[n],\forall p\in[r] ;\\
\sum_{i,j} Q_{i j k} \u_{p}^{\star} ( i ) \v_{p}^{\star} ( j )   &=
\w_{p}^{\star} ( k ) ,\forall i\in[n],\forall p\in[r]
\end{aligned}
\quad \stackrel{or}{\iff} \quad
\begin{aligned}
& \tQ{\times}_2\v_p^\star
{\times}_3\w_p^\star  = \u_p^\star, \forall p\in[r]; \\
& \tQ{\times}_1\u_p^\star
{\times}_3\w_p^\star  = \v_p^\star, \forall p\in[r]; \\
& \tQ{\times}_1\u_p^\star
{\times}_2\v_p^\star   = \w_p^\star, \forall p\in[r]
\end{aligned}
\label{eqn:minimum:energy}
\end{equation}
where $\{{\times}_k\}$  are the $k$-mode tensor-vector product \cite{kolda2009tensor} whose definitions are apparent from context.
\end{lemma}
The proof of Lemma \ref{lem:relaxedcondition} is given in Appendix \ref{App:proof:lem:relaxedcondition}.

Apparently, the subdifferential conditions \eqref{eqn:subgradient:condition} is necessary for the \textsf{BIP} \eqref{eqn:bip},
but generally not sufficient, by comparing the second line of \eqref{eqn:subgradient:condition}  and the Boundedness Property \eqref{eqn:bip:b}. Indeed, as we argued before, any $\tQ$ satisfying the \textsf{BIP} is an interior point of the subdifferential $\partial \|\cdot\|_*(\tT)$. To satisfy the Boundedness Property \eqref{eqn:bip:b},   we further
minimize the energy $\|\tQ\|^{2}_{F} = \sum_{i j k} Q_{i j k}^{2}$ in the
hope that this will push $q ( \u,\v,\w )$ towards zero such that $\tQ$ is an interior point of  $\partial \|\cdot\|_*(\tT)$.
Thus, we propose solving the following   \emph{minimum-energy}
problem to obtain a pre-certificate:
\begin{align}
\operatorname*{minimize}_Q   \ \frac{1}{2}\|\tQ\|_{F}^{2} \
\st \eqref{eqn:minimum:energy}  \label{eqn:minimum:norm}
\end{align}
\begin{lemma}[Explicit form of the  pre-certificate] \label{lem:normalequation}
The solution of the least-norm problem (\ref{eqn:minimum:norm}) has the form (normal equation)
\begin{align}\label{eqn:Q:explicit:form}
\tQ= \sum_{p=1}^r ( \valpha^\star_{p} \otimes \v_{p}^{\star} \otimes
\w_{p}^{\star} +\u_{p}^{\star} \otimes \vbeta^\star_{p} \otimes \w_{p}^{\star}
+\u_{p}^{\star} \otimes \v_{p}^{\star} \otimes \vgamma^\star_{p} )
\end{align}
with the unknown coefficients $\{ \valpha^\star_{p} , \vbeta^\star_{p} , \vgamma^\star_{p} \}_{p=1}^r$ being chosen such that $\tQ$ in \eqref{eqn:Q:explicit:form} satisfies \eqref{eqn:minimum:energy}. So we get an explicit form of a pre-certificate
\begin{align}
	q ( \u,\v,\w ) & = \langle \tQ,\u \otimes \v \otimes \w \rangle\nonumber                                             \\
	               & = \sum_{p=1}^r [ \langle \valpha^\star_{p} ,\u \rangle \langle \v_{p}^{\star} ,\v
\rangle \langle \w_{p}^{\star} ,\w \rangle
+ \langle \u_{p}^{\star} ,\u \rangle
\langle \vbeta^\star_{p} ,\v \rangle \langle \w_{p}^{\star} ,\w \rangle
+ \langle
\u_{p}^{\star} ,\u \rangle \langle \v_{p}^{\star} ,\v \rangle \langle \vgamma^\star_{p}
,\w \rangle].\label{eqn:dual:polynomial:explicit:form}
\end{align}
\end{lemma}
The proof of Lemma \ref{lem:normalequation} is given in Appendix \ref{App:proof:lem:normalequation}.

To obtain some intuition of what these dual-polynomial coefficients $\{\valpha^\star_{p} , \vbeta^\star_{p} , \vgamma^\star_{p}\}_{p=1}^r$ would look like, let us
assume $\{\u_{p}^{\star}\}_{p=1}^r$, $\{\v_{p}^{\star}\}_{p=1}^r$, $\{\w_{p}^{\star}\}_{p=1}^r$ are almost orthogonal and  plug the explicit form  of $\tQ$ \eqref{eqn:Q:explicit:form} into the first equation in \eqref{eqn:minimum:energy}
\begin{align}\label{eqn:estimate:1}
\valpha_p^\star+\u_p^\star\lg\vbeta_p^\star,\v_p^\star\rg+\u_p^\star\lg\vgamma_p^\star,\w_p^\star\rg\approx \u_p^\star.
\end{align}
Then multiplying $\u_p^{\star\top}$  on both sides gives
\begin{align}\label{eqn:estimate:2}
\lg\valpha_p^\star,\u_p^\star\rg+\lg\vbeta_p^\star,\v_p^\star\rg+\lg\vgamma_p^\star,\w_p^\star\rg\approx 1.
\end{align}
Finally combining \eqref{eqn:estimate:1} and \eqref{eqn:estimate:2} together with the symmetry property of \eqref{eqn:Q:explicit:form}, we get  these coefficients  $\{ \valpha^\star_{p} , \vbeta^\star_{p} , \vgamma^\star_{p} \}_{p=1}^r$ are located approximately at  $
\{ \u^\star_{p}/3 , \v^\star_{p}/3 , \w^\star_{p}/3 \}_{p=1}^r$. The accurate description of this phenomenon is given by the following lemma with the proof listed in Appendix \ref{App:proof:lem:Bound_coef}.
\begin{lemma}[ Control the dual polynomial coefficients]\label{lem:Bound_coef}
Under Assumptions II and III together with $r=o({n^2}/{\kappa(\log n)^2})$, the following estimates are valid for sufficiently large $n$:
\begin{align*}
	\left\|\mA - \frac{1}{3}\mU\right\| & \leq 2\kappa(\log n)\left(\frac{\sqrt r}{n}+c{\frac{r}{n^{1.5}}}\right);  \\ 
	\left\|\mB - \frac{1}{3}\mV\right\| & \leq 2\kappa(\log n)\left(\frac{\sqrt r}{n}+c{\frac{r}{n^{1.5}}}\right) ; \\
	\left\|\mC - \frac{1}{3}\mW\right\| & \leq 2\kappa(\log n)\left(\frac{\sqrt r}{n}+c{\frac{r}{n^{1.5}}}\right)
\end{align*}
where the norm $\|\cdot\|$ is the matrix spectral norm and 
\[
\mA=\begin{bmatrix}\valpha_1^\star, \cdots, \valpha_r^\star\end{bmatrix},
\mB=\begin{bmatrix}\vbeta_1^\star, \cdots, \vbeta_r^\star\end{bmatrix},
\mC=\begin{bmatrix}\vgamma_1^\star, \cdots, \vgamma_r^\star\end{bmatrix}, 
\mU= \begin{bmatrix}\u_{1}^{\star}, \cdots, \u_{r}^{\star}\end{bmatrix},
\mV= \begin{bmatrix}\v_{1}^{\star}, \cdots, \v_{r}^{\star}\end{bmatrix},
\mW= \begin{bmatrix}\w_{1}^{\star}, \cdots, \w_{r}^{\star}\end{bmatrix}.
\]
\end{lemma}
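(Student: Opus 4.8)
The goal is to bound the spectral norm of $A^\star - \tfrac13 U$ (and the analogous quantities for $B^\star,C^\star$), where $A^\star,B^\star,C^\star$ are the coefficient matrices determined by substituting the ansatz of Lemma~\ref{lem:normalequation} into the linear system \eqref{eqn:minimum:energy}. The plan is to write that linear system explicitly in matrix form. Plugging
$Q = \sum_p(\balpha_p^\star\otimes\v_p^\star\otimes\w_p^\star + \u_p^\star\otimes\bbeta_p^\star\otimes\w_p^\star + \u_p^\star\otimes\v_p^\star\otimes\bgamma_p^\star)$
into the three families of constraints in \eqref{eqn:minimum:energy} and contracting against $\v_q^\star\otimes\w_q^\star$, $\u_q^\star\otimes\w_q^\star$, $\u_q^\star\otimes\v_q^\star$ respectively, one obtains a coupled linear system whose unknowns are the columns of $A^\star,B^\star,C^\star$. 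The key observation is that every coefficient in this system is an inner product $\langle\u_p^\star,\u_q^\star\rangle$ etc., so the system can be packaged using the Gram matrices $G_U = U^\top U$, $G_V=V^\top V$, $G_W = W^\top W$ and their Hadamard products. Concretely, collecting columns I expect a system of the schematic form
\begin{align*}
\begin{bmatrix} I & G_U\odot G_W' & G_U\odot G_V'\\ \cdot & I & \cdot \\ \cdot & \cdot & I\end{bmatrix}
\begin{bmatrix}A^{\star\top}\\ B^{\star\top}\\ C^{\star\top}\end{bmatrix}
=\begin{bmatrix}U^\top\\ V^\top\\ W^\top\end{bmatrix}
\end{align*}
where the off-diagonal blocks involve $G_U\odot G_V - I$, $G_U\odot G_W - I$, $G_V\odot G_W - I$, each of which is small by Assumption~III, plus terms coupling to $U^\top U - I$, $V^\top V - I$, $W^\top W-I$, which are small by Assumption~II (since $\|U^\top U - I\|\le \|U\|^2-1\lesssim c\sqrt{r/n}$ once $\|U\|\le 1+c\sqrt{r/n}$, at least for the relevant direction; the lower bound on singular values needs a separate but routine argument, or one works with $\|G_U\odot G_V - I\|$ directly which already controls the diagonal).

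**Main steps, in order.**
First, derive the exact linear system from the normal equations by the contraction computation sketched above; this is bookkeeping but must be done carefully to identify which Hadamard-product-of-Gram-matrix terms appear. Second, rewrite the system as $(\mathcal I + \mathcal E)\mathbf{x} = \mathbf{b}$ on the stacked variable $\mathbf{x}=(A^{\star\top};B^{\star\top};C^{\star\top})$, where $\mathcal I$ is (a scalar multiple of) the identity and $\mathcal E$ is an operator built from the ``error'' matrices $E_{UV}:=G_U\odot G_V - I$, $E_{UW}$, $E_{VW}$ and from $G_U - I, G_V - I, G_W - I$. Third, bound $\|\mathcal E\|$: Assumption~III gives $\|E_{UV}\|,\|E_{UW}\|,\|E_{VW}\|\le \kappa(\log n)\sqrt r/n$, and Assumption~II gives $\|G_U - I\|\le 2c\sqrt{r/n}+c^2 r/n$ and similarly for $V,W$; combining these yields $\|\mathcal E\|\lesssim \kappa(\log n)\sqrt r/n + c^2 r/n$, which is $o(1)$ precisely under the hypothesis $r\ll n^2/\kappa(\log n)^2$. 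Fourth, invert via Neumann series: $\mathbf x = (\mathcal I+\mathcal E)^{-1}\mathbf b$, so the ``main term'' is $\mathcal I^{-1}\mathbf b$, which one checks equals $\tfrac13(U^\top;V^\top;W^\top)$ — this is the $\balpha_i^\star\approx\tfrac13\u_i^\star$ heuristic of \eqref{eqn:estimate:q:2} made rigorous. Fifth, the remainder is $\mathcal I^{-1}\mathcal E(\mathcal I+\mathcal E)^{-1}\mathbf b$, whose norm is at most $\frac{\|\mathcal E\|}{1-\|\mathcal E\|}\cdot\|\mathcal I^{-1}\mathbf b\|$; since $\|\mathcal I^{-1}\mathbf b\|$ involves $\|U\|,\|V\|,\|W\|\le 1+c\sqrt{r/n}$, and expanding the product $\mathcal E\cdot\mathcal I^{-1}\mathbf b$ picks up an extra factor of $\|U\|$ (hence the $c\sqrt{r/n}$ correction), this gives exactly the claimed bound $2\kappa(\log n)(\sqrt r/n + c\, r/n^{1.5})$ after absorbing lower-order terms into the constant $2$ for $n$ large.

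**Where the difficulty lies.**
The routine-but-delicate part is Step~1: getting the off-diagonal blocks of the linear system exactly right, in particular correctly tracking whether one sees $G_U\odot G_V$, $G_U\odot G_W$, or cross terms like $(U^\top U)\odot(V^\top V)\odot(\text{something})$ — a miscount here changes which assumption is invoked. The genuinely substantive obstacle, however, is Step~3: one must show the coupling operator $\mathcal E$ is a contraction in spectral norm, and this requires that the Hadamard-Gram error bounds from Assumption~III and the near-isometry from Assumption~II fit together so that no term of order $r/n$ (without a compensating $\kappa(\log n)$ or $\sqrt{\cdot}$) survives uncontrolled; this is exactly why the hypothesis is $r\ll n^2/\kappa(\log n)^2$ rather than merely $r\ll n^2$. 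A secondary subtlety is that $\mathcal E$ is not symmetric and acts blockwise with different blocks scaled by $U,V,W$; one should verify the Neumann series converges in operator norm on the appropriate product space, perhaps by a change of variables that symmetrizes the block structure, or simply by bounding each of the finitely many block entries separately and using a $3\times 3$ norm-majorization argument.
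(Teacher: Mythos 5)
Your Step 1 already contains a concrete error that propagates through the whole argument. The schematic you write down, with identity blocks on the diagonal, small off-diagonal blocks, and right-hand side $(U^\top;V^\top;W^\top)$, would have leading-order solution $A^\star\approx U$, not $A^\star\approx\tfrac13 U$ — which directly contradicts the bound you are trying to prove. The true leading-order system (taking $U,V,W$ exactly orthonormal) is
\begin{align*}
A^\star + U\operatorname*{diag}\bigl(B^{\star\top}V\bigr) + U\operatorname*{diag}\bigl(C^{\star\top}W\bigr) &= U,
\end{align*}
and cyclic, whose solution is $\tfrac13(U,V,W)$ precisely because the three diagonal terms each contribute $\tfrac13 U$. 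In other words the cross-block coupling is $\Theta(1)$, not $o(1)$: it is \emph{not} controlled by Assumption III, and writing the system as $(\mathcal I+\mathcal E)\mathbf x=\mathbf b$ with small $\mathcal E$ mischaracterizes the structure. On top of that, the normal-equations operator $\A\A^*$ is singular — the paper explicitly exhibits a $2r$-dimensional null space spanned by the balance directions involving $\u_p^\star\otimes\e_p$, $\v_p^\star\otimes\e_p$, $\w_p^\star\otimes\e_p$ — so a Neumann-series inversion is ill-posed; you would at minimum need a pseudoinverse, a lower bound on the smallest nonzero singular value, and a verification that the right-hand side lies in the range.

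The paper's proof sidesteps both problems: it does not invert the system. Instead it introduces an artificial gradient-type iteration initialized at $\tfrac13(U,V,W)$, where the update for the $A$-block uses the \emph{true} $\bbeta^\star,\bgamma^\star$ inside $Q_1^t$, the $B$-block uses the iterate $\balpha^t$ and the true $\bgamma^\star$, and so on. This freezing of blocks at their $\star$-values makes the error recursion block-lower-triangular with diagonal contraction factor $I-\rho[(V^\top V)\odot(W^\top W)]$ (and cyclic permutations), so Assumption III guarantees geometric convergence without ever confronting the $\Theta(1)$ coupling head-on. The distance from the initialization to the limit is then bounded by a geometric series, and the one-step increment $\|C^1-C^0\|\le\rho\|W\|\,\|(U^\top U)\odot(V^\top V)-I\|$ supplies the $\kappa(\log n)\sqrt r/n$ factor. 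Your alternative remark about a ``$3\times3$ norm-majorization argument'' is in the right direction — it is essentially what the paper does — but to make it work you must first find the triangularizing device (freezing blocks at $\star$-values or an equivalent change of variables shifting by $\tfrac13(U,V,W)$ so the residual is small and in range); your current setup does not provide it.
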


\subsection{Far Region}
For a parameter $\delta\in(0,1)$, the far region  is defined by
\begin{align}
& \mathcal{F}(\delta) := \bigcap_{p=1}^r \{( \u,\v,\w ) \in \bbK:
| \langle \u,\u_{p}^{\star}
\rangle | \leq \delta  \text{ or } | \langle \v,\v_{p}^{\star} \rangle | \leq \delta  \text{ or }
|\langle \w,\w_{p}^{\star} \rangle | \leq \delta  \},\label{def:far:region}
\end{align}
which consists of points $(\u,\v,\w)$ in $\bbK$ that are far away (in the angular sense) from
\begin{align}
\bbS^\star:=\{(\pm\u^\star_p, \pm\v^\star_p, \pm\w^\star_p): p = 1, \ldots, r\}
\end{align}
in at least one coordinate of $(\u,\v,\w)$. For $n=3$ and $r = 2$, the far region projected onto the unit sphere $\{\u: \|\u\|_2 = 1\}$ is shown in Fig.  \ref{fig:far-away}.

\noindent
\makebox{
\begin{minipage}{0.5\textwidth}
\begin{figure}[H]
\centering
\includegraphics[width =0.6\textwidth]{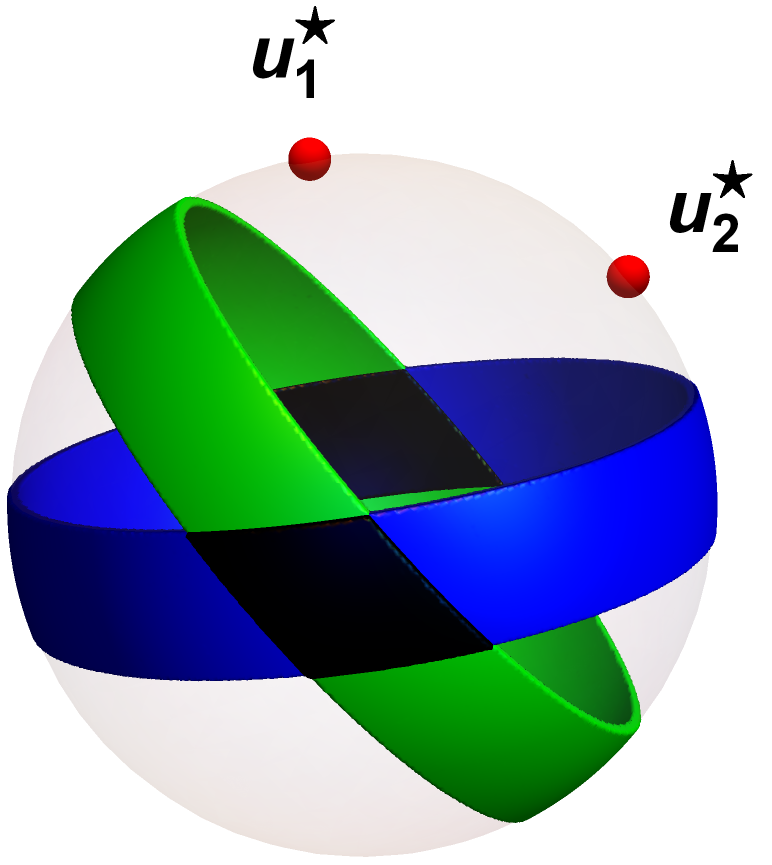}
\caption{Projection of the far region in the $\u$ coordinate. The blue band represents the region $\{\u: |\langle \u, \u_1^\star\rangle| \leq \delta\}$ that is far away from $\u_1^\star$, while the green region $\{\u: |\langle \u, \u_2^\star\rangle| \leq \delta\}$ is the far-region associated with $\u_2^\star$. The far region is their intersection $\bigcap_{p=1}^2\{\u: |\langle \u, \u_p^\star\rangle| \leq \delta\}$,  consisting of the two black diamonds.}
\label{fig:far-away}
\end{figure}	
\end{minipage}
~~~
\begin{minipage}{0.5\textwidth}
\begin{figure}[H]
\centering
\includegraphics[width =0.55\textwidth]{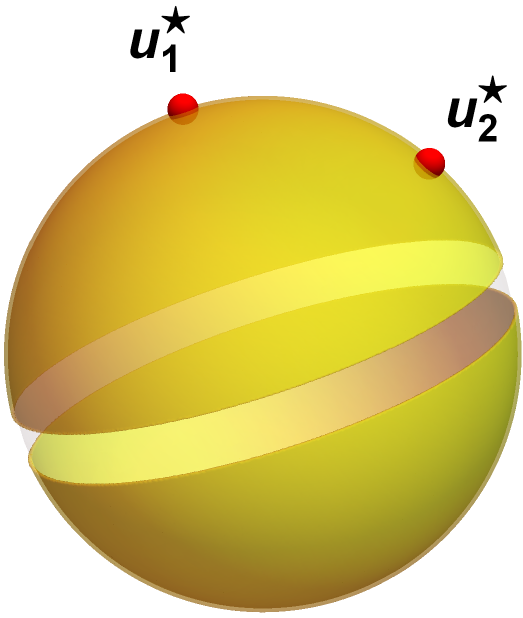}
\caption{The two yellow spherical caps form the near region $\mathcal{N}_1(\delta)$ around the point $(\u_1^\star, \v_1^\star, \w_1^\star)$ projected onto the $\u$ coordinates. $\calN_2(\delta)$, which is not shown here, consists of another two spherical caps. The union of $\calN_1(\delta), \calN_2(\delta)$ and the far region $\mathcal{F}(d)$ shown in Fig.  \ref{fig:far-away} will cover the entire sphere $\{\u: \|\u\|=1\}$.}
\label{fig:near_region}
\end{figure}
\end{minipage}}

\subsubsection{Controlling in Far Region}
Instead of bounding   the dual polynomial $q$ directly, we will bound its absolute value  $|q|$. To obtain some intuition of how to bound it,
we rewrite the explicit form \eqref{eqn:dual:polynomial:explicit:form} as follows
\begin{align}
&q ( \u,\v,\w )
\nn
\\
&= \sum_{p=1}^r \bigg[ \langle \valpha^\star_{p}-\frac{1}{3}\u_p^\star ,\u \rangle \langle \v_{p}^{\star} ,\v
\rangle \langle \w_{p}^{\star} ,\w \rangle
+ \langle \u_{p}^{\star} ,\u \rangle
\langle \vbeta^\star_{p}-\frac{1}{3}\v_p^\star,\v \rangle \langle \w_{p}^{\star} ,\w \rangle
+ \langle
\u_{p}^{\star} ,\u \rangle \langle \v_{p}^{\star} ,\v \rangle \langle \vgamma^\star_{p} -\frac{1}{3}\w_p^\star  ,\w \rangle \bigg]\label{eqn:far:terms:0}\\
&~~~~
+\sum_{p=1}^r  \langle\u_{p}^{\star} ,\u \rangle \langle \v_{p}^{\star} ,\v \rangle \langle\w_p^\star  ,\w \rangle \label{eqn:far:terms:2}.
\end{align}
The main idea is first using the closeness of    $\{ \valpha^\star_{p} , \vbeta^\star_{p} , \vgamma^\star_{p} \}_{p=1}^r$ and $
\{ \u^\star_{p}/3 , \v^\star_{p}/3 , \w^\star_{p}/3 \}_{p=1}^r$  to bound  \eqref{eqn:far:terms:0} and then using angular-distance between $\calF(\delta)$ and  $(\u_p^\star,\v_p^\star,\w_p^\star),\forall p$ to bound  \eqref{eqn:far:terms:2}.

The accurate argument is made by the following lemma with the proof given in  Appendix \ref{App:proof:lem:Far_away}.
\begin{lemma}[Controlling in Far Region]\label{lem:far:region}
Under  Assumptions I,  II, III, if $r\ll n^{1.25}$ and $r\leq \frac{n}{24\delta c^2}$ for $\delta\in(0,\frac{1}{24}]$, then for sufficiently large $n$, we have    $|q(\u,\v,\w)|<1$  in $\calF(\delta).$
\end{lemma}

\subsection{Near Region}


For the union of the far and near regions to cover the entire region $\bbK$,  	we define the near region as
\begin{align}	
\calN(\delta):=&\bbK\setminus \calF(\delta)
=\bigcup_{p=1}^r\{(\u,\v,\w)\in\bbK:|\langle\u_p^\star,\u\rangle|\geq \delta, |\langle\v_p^\star,\v\rangle|\geq \delta,
|\langle\w_p^\star,\w\rangle|\geq \delta\}:=\bigcup_{p=1}^r \calN_p(\delta)
\label{def:near:region}
\end{align}
with each individual near region  $\mathcal{N}_p(\delta)$
close to $(\u_p^\star,\v_p^\star,\w_p^\star)$ in all coordinate of $(\u,\v,\w)$.

For $n=3$, $r=2$,
we plot the near region $\mathcal{N}_1(\delta)$ projected onto the sphere $\{\u: \|\u\|_2 = 1\}$ in Fig.  \ref{fig:near_region}.

\subsubsection{Angular Parametrization of Near Region}

In order to show the dual polynomial satisfying the \textsf{BIP} in the entire near region $\calN(\delta)$, we use the ``Divide-and-conquer" idea to bound the dual polynomial in each individual near region $\calN_p(\delta)$ for $p\in[r].$ The main technique used to control each individual near region is applying angular parametrization to each individual near region.

As the domain $\bbK$ is essentially a direct product of spheres, we  re-parameterize each individual near region $\calN_p(\delta)$ in the angular sense.
Without loss of generality,  let us consider $p=1$. Pick $(\x,\y,\z)\in\bbK$ such that $\x \perp \u_1^\star, \y \perp \v_1^\star, \z \perp \w_1^\star$ and consider the parameterized points 
\begin{equation} \label{eqn:angular:parameter}
(\u(\theta_1),\v(\theta_2),\w(\theta_3))\in\bbK\quad\tmop{with}\quad
\begin{cases}
\u(\theta_1) = \u_1^\star \cos(\theta_1)\   + \x\sin(\theta_1)  \\
\v(\theta_2) = \v_1^\star \cos(\theta_2)\   + \y\sin(\theta_2)  \\
\w(\theta_3) = \w_1^\star \cos(\theta_3)   +  \z\sin(\theta_3).
\end{cases}
\end{equation}

\noindent
\makebox{
\begin{minipage}{0.6\textwidth}
When $\theta_1$ ranges from $0$ to $\pi$, $\u(\theta_1)$ traces out a 2D semi-circle that starts at $\u_1^\star$, passes through $\x$, and finally reaches $-\u_1^\star$; while for a fixed $\theta_1\in[0,\pi]$, the set $\bigcup_{\x \perp \u_1^\star}\{\u(\theta_1)\}$ parameterizes all the points on $\S^{n-1}$ having an angle of $\theta_1$ with $\u^\star_1$. The same properties hold for $\v(\theta_2)$ and $\w(\theta_3)$.	
This parametrization projected onto the $\u$ coordinate is shown on right.
\end{minipage}
 ~
\begin{minipage}{0.4\textwidth}
\centering
\includegraphics[width =0.7\textwidth]{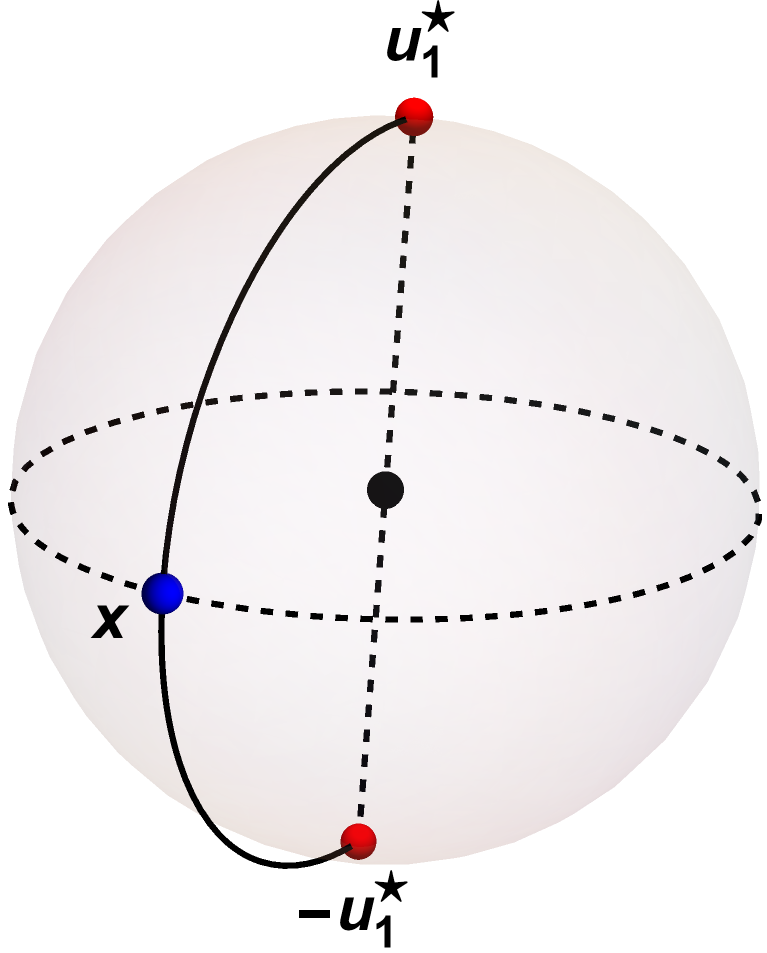}
\end{minipage}}

In fact, using this angular parametrization, the individual
near region $\calN_1(\delta)$ in \eqref{def:near:region} can be expressed as
\begin{align}
\calN_1(\delta)
=&\bigcup_{(\x,\y,
\z): \x \perp \u_1^\star, \y \perp \v_1^\star, \z \perp \w_1^\star} \{(\u(\theta_1),\v(\theta_2),\w(\theta_3)):  |\cos(\theta_i)|\geq\delta, \theta_i\in[0,\pi],   i=1,2,3\}.
\label{eqn:param:K}
\end{align}
\begin{proposition}[Near Angular Region]\label{pro:superset}
For any $\delta\in(0,1)$, the near region $\calN_1(\delta)$ is contained in the following set
\begin{align}
\calN_1(\delta)
\subset&
\bigcup_{(\x,\y,
	\z): \x \perp \u_1^\star, \y \perp \v_1^\star, \z \perp \w_1^\star} \{(\u(\theta_1),\v(\theta_2),\w(\theta_3)):(\theta_1,\theta_2,\theta_3)\in\bbN(\delta)\}
\label{eqn:larger:set}
\end{align}
with the near angular region $\bbN(\delta)$ defined by
\begin{align}\label{eqn:angular:near}
\bbN(\delta):=&\left\{(\theta_1,\theta_2,\theta_3): \theta_i\in\left[0,\frac{\pi}{2}-\delta\right]\cup\left[\frac{\pi}{2}+\delta,\pi\right], i=1,2,3 \right\}.
\end{align}
\end{proposition}
\begin{proof}
Since the function $|\cos (\theta)|$ is symmetric at $\frac{\pi}{2}$ on the interval $[0, \pi]$ and is decreasing on $[0, \pi/2]$, we know that
	$\{\theta:|\cos(\theta)|\geq\delta\}\cap[0,\pi] = [0, \arccos(\delta)]\cup [\pi-\arccos(\delta),\pi]$. Note that $\arccos (\delta)=\frac{\pi}{2}-\arcsin (\delta)$ and $\delta < \arcsin (\delta)$, so we get $\{\theta:|\cos(\theta)|\geq\delta\}\cap[0,\pi] \subset [0,\frac{\pi}{2}-\delta]\cup[\frac{\pi}{2}+\delta,\pi]$. The inclusion \eqref{eqn:larger:set} follows  from \eqref{eqn:param:K} immediately.
\end{proof}

The near angular region $\bbN(\delta)$ contains the eight cubes with side length $\frac{\pi}{2}-\delta$, located at the eight corners of the cube $[0,\pi]\times[0,\pi]\times[0,\pi]$. Moreover, one can see that the smaller the parameter $\delta$ is, the larger the near angular region $\bbN(\delta)$ will be. In particular, when $\delta$   approaches to zero, the near angular region $\bbN(\delta)$ becomes the whole   cube $\bbN(0)=[0,\pi]\times[0,\pi]\times[0,\pi]$. The near angular region $\bbN(\delta)$ is plotted in Fig.  \ref{fig:angular_region}.

\noindent
\makebox{
\begin{minipage}{0.5\textwidth}
\begin{figure}[H]
\centering
\includegraphics[width =0.7\textwidth]{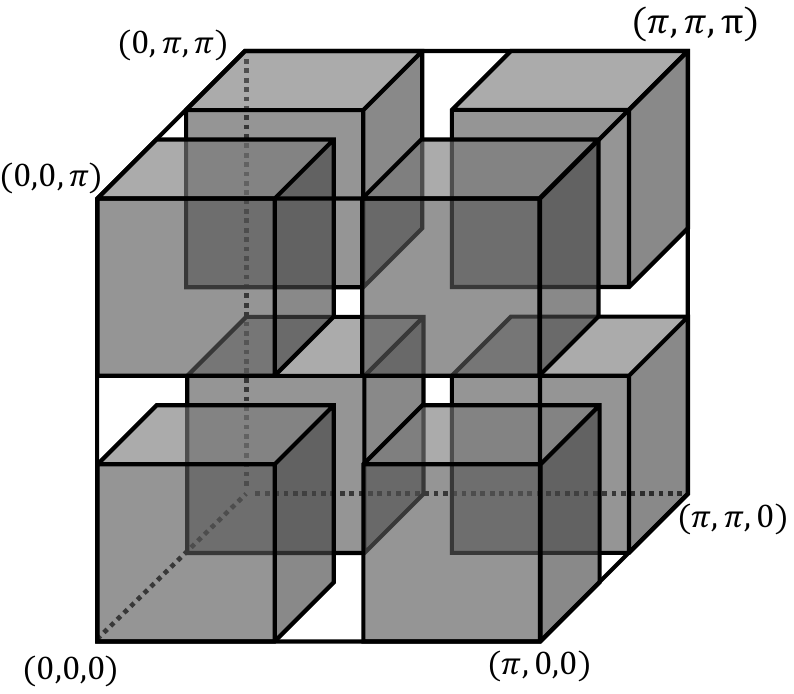}
\caption{The eight gray cubes of side-length $\pi/2-\delta$ at the corners form the near angular region $\bbN(\delta)$.}
\label{fig:angular_region}
\end{figure}	
\end{minipage}
~~~
\begin{minipage}{0.5\textwidth}
\begin{figure}[H]
\centering
\includegraphics[width =0.7\textwidth]{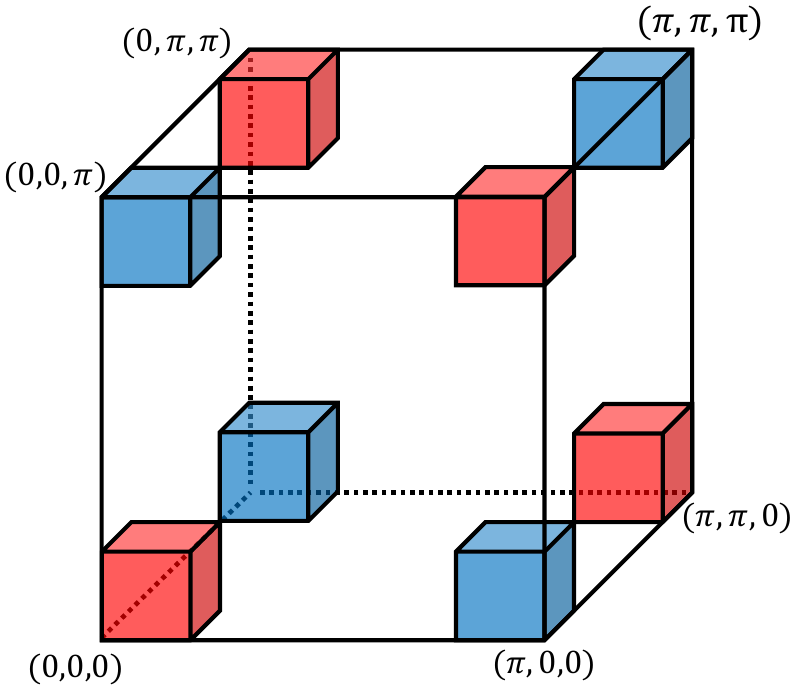}
\caption{The eight colored cubes of size $\delta_v\times\delta_v\times\delta_v$ form the near vertex region
	$\bbN_v(\delta_v)$: the red ones are corresponding to the vertexes in $\bbS^\star$ while the blue ones are corresponding to other vertexes in the  cube.}
\label{fig:vertex_region}
\end{figure}
\end{minipage}}

\subsubsection{Angular Parametrization of Dual Polynomial}
Evaluating the dual polynomial $q(\u,\v,\w)$ at $(\u(\theta_1),\v(\theta_2),\w(\theta_3))$ in \eqref{eqn:angular:parameter}, we get the angular dual polynomial $F(\theta_1,\theta_2,\theta_3)
:=
q(\u(\theta_1),\v(\theta_2),\w(\theta_3))$ as
\begin{align}
F(\theta_1,\theta_2,\theta_3)	
=& q(\u_1^\star,\v_1^\star,\w_1^\star)\cos(\theta_1)\cos(\theta_2)\cos(\theta_3)
+
q(\u_1^\star,\v_1^\star,\z)\cos(\theta_1)\cos(\theta_2)\sin(\theta_3)\nn\\
&+q(\u_1^\star,\y,\w_1^\star)\cos(\theta_1)\sin(\theta_2)\cos(\theta_3)
+q(\x,\v_1^\star,\w_1^\star)\sin(\theta_1)\cos(\theta_2)\cos(\theta_3)\nn\\
&+
q(\u_1^\star,\y,\z)\cos(\theta_1)\sin(\theta_2)\sin(\theta_3)
+q(\x,\v_1^\star,\z)\sin(\theta_1)\cos(\theta_2)\sin(\theta_3)
\nn\\
&+q(\x,\y,\w_1^\star)\sin(\theta_1)\sin(\theta_2)\cos(\theta_3)+q(\x,\y,\z)\sin(\theta_1)\sin(\theta_2)\sin(\theta_3).
\label{eqn:angular:form}
\end{align}
Among these $8$ terms, the first term is $\cos(\theta_1)\cos(\theta_2)\cos(\theta_3)$ since $q(\u_1^\star,\v_1^\star,\w_1^\star) = 1$.
The next three terms involving one sine function are zero as, for example,
\begin{align*}
q(\u_1^\star,\v_1^\star,\z)
&= \tQ{\times}_1\u_1^\star{\times}_2\v_1^\star{\times}_3\z=\w_1^\star{\times}_3\z={\w_1^\star}^\top \z=0,
\end{align*}
where we have used $Q{\times}_1 \u_1^\star {\times}_2 \v_1^\star = \w_1^\star$ and the third equality of \eqref{eqn:minimum:energy}.
Hence, we get a more concise form of $F$:
\begin{align}
F(\theta_1,\theta_2,\theta_3)
=& \cos(\theta_1)\cos(\theta_2)\cos(\theta_3)
+q(\u_1^\star,\y,\z)\cos(\theta_1)\sin(\theta_2)\sin(\theta_3) +q(\x,\v_1^\star,\z)\sin(\theta_1)\cos(\theta_2)\sin(\theta_3) \nn\\
&+q(\x,\y,\w_1^\star)\sin(\theta_1)\sin(\theta_2)\cos(\theta_3) +q(\x,\y,\z)\sin(\theta_1)\sin(\theta_2)\sin(\theta_3).
\end{align}
By further bounding the other quantities $q(\u_1^\star,\y,\z)$, $q(\x,\v_1^\star,\z)$, $q(\x,\y,\w_1^\star)$ and $q(\x,\y,\z)$, we get the following lemma to uniformly upper-bound   $F(\theta_1,\theta_2,\theta_3)$ with the proof given in Appendix \ref{App:proof:lem:Parameterize}.
\begin{lemma}[Upper Bound of Angular Dual Polynomial]\label{lem:Parameterize}
Under  Assumptions I, II, III, if $r\leq n^{1.25-1.5 r_c}$ with $r_c\in(0,\frac{1}{6})$,
then for sufficiently large $n$, we have
\begin{align}
&|F(\theta_1,\theta_2,\theta_3)|
\leq |\cos(\theta_1)\cos(\theta_2)\cos(\theta_3)| + |\sin(\theta_1)\sin(\theta_2)\sin(\theta_3)|+\frac{4}{3}\tau(\log n) n^{-r_c}.
\label{eqn:bound:angular:form}
\end{align}
\end{lemma}

 \subsubsection{Angular Parametrization of  Boundedness and Interpolation Property}
By Proposition \ref{pro:superset}, a sufficient condition for the \textsf{BIP} \eqref{eqn:bip} to hold in the individual  near region $\calN_1(\delta)$, is the following Angular Boundedness and Interpolation Property  (\textsf{Angular-BIP}):
\begin{subequations}\label{eqn:angular:bip}
	\begin{align}
	F(\theta_1,\theta_2,\theta_3)&=1 \text{ in }
	\bbS^\star\label{eqn:angular:interpolation}
	\ \
	(\textsf{Angular Interpolation})\\
	F(\theta_1,\theta_2,\theta_3)&<1 \text{ in }
	\bbN(\delta)\setminus \bbS^\star
	\ \ (\textsf{Angular Boundedness})
	\label{eqn:angular:boundedness}
	\end{align}
\end{subequations}
with $\bbS^\star:=\{(0,0,0),(0,\pi,\pi),(\pi,0,\pi),(\pi,\pi,0)\}$ such that $\u(\theta_1)\otimes\v(\theta_2)\otimes\w(\theta_3)=\u_1^\star\otimes\v_1^\star\otimes\w_1^\star$ for any $(\theta_1,\theta_2,\theta_3)\in\bbS^\star.$

Similar as before,  the Angular Interpolation property \eqref{eqn:angular:interpolation} is a consequence of the construction process. In the rest of the paper, we will focus on showing the Angular Boundedness property \eqref{eqn:angular:boundedness}. Specifically, we will divide the near angular region into near vertex region and near band region, and then control the 
angular dual polynomial $F$ in both near vertex region and near band region.

\subsubsection{Near Vertex Region}
The near vertex region, denoted by $\bbN_v(\delta_v)$, is defined as the union of the eight small cubes all with side length $\delta_v$ in 8 corners of the  cube $[0,\pi]^3$. We plot the near vertex region $\bbN_v(\delta_v)$ in Fig.  \ref{fig:vertex_region}. Comparing with the definition of the near angular region $\bbN(\cdot)$, the near vertex region is also an near angular region but with a different parameter:
\begin{align}
\bbN_v(\delta_v)=\bbN(\frac{\pi}{2}-\delta_v).
\label{eqn:vertex:vs:near}
\end{align}
Without loss of generality, we can always assume the near vertex region $\bbN_v(\delta_v)$ is included in the near angular region $\bbN(\delta)$; otherwise, we only need to show the \textsf{Angular-BIP} holds in $\bbN_v(\delta_v)$. This assumption together with \eqref{eqn:vertex:vs:near}  implies
\begin{align}
\label{assumption:delta}
\delta_v\leq \frac{\pi}{2}-\delta.
\end{align}
Note that ${\pi}/{2}-\delta$ is the side length of the corner-cubes in $\bbN(\delta).$

\paragraph{Controlling in Near Vertex Region}	
To control the angular dual polynomial $F$ in the near vertex region $\bbN_v(\delta_v)$, we further classify the eight small cubes in $\bbN_v(\delta_v)$ into two groups depending on if their vertices are in $\bbS^\star$ or not.

\begin{lemma}[Controlling in Near Vertex Region]\label{lem:near:vertex:region}
Under  Assumptions I, II, III, if $r\ll n^{1.25}$, then for any $\xi_i\in\bigg(-\frac{\sqrt 2-1}{3},\frac{\sqrt 2-1}{3}\bigg)$, we have
\begin{align}
&F(\theta_1+\xi_1,\theta_2+\xi_2,\theta_3+\xi_3) \leq 1
\label{eqn:near:vertex:region}
\end{align}
for $(\theta_1,\theta_2,\theta_3)\in\{(0,0,0),(0,\pi,\pi),(\pi,0,\pi),(\pi,\pi,0)\}$ and
\begin{align}
&F(\theta_1+\xi_1,\theta_2+\xi_2,\theta_3+\xi_3) <0
\label{eqn:near:vertex:region:2}
\end{align}
for $(\theta_1,\theta_2,\theta_3)\in\{(\pi,\pi,\pi),(\pi,0,0),(0,\pi,0),(0,0,\pi)\}$.  Here, equality in \eqref{eqn:near:vertex:region} holds only if  $\xi_1=\xi_2=\xi_3=0$.
\end{lemma}
The proof of Lemma \ref{lem:near:vertex:region} is  in Appendix \ref{App:proof:lem:near:region}.

\begin{remark}Lemma \ref{lem:near:vertex:region} proves the \textsf{Angular-BIP} holds  in the near vertex region $\bbN_v(\delta_v)$ with
$\delta_v=\frac{\sqrt2-1}{3}$:
\begin{align}
F(\theta_1,\theta_2,\theta_3)&=1 \text{ in }
\bbS^\star\nn \\
F(\theta_1,\theta_2,\theta_3)&<1 \text{ in }
\bbN_v(\delta_v)\setminus \bbS^\star
\nn
\end{align}
\end{remark}

\subsubsection{Near Band Region}
The near band region is introduced to cover the remaining region $\bbN(\delta)\setminus \bbN_v(\delta_v)$.
Invoking the definitions of the near angular region \eqref{eqn:angular:near} and the near vertex region \eqref{eqn:vertex:vs:near}:	
\begin{align*}
\bbN(\delta)=&\left\{(\theta_1,\theta_2,\theta_3): \theta_i\in\left[0,\frac{\pi}{2}-\delta\right]\cup\left[\frac{\pi}{2}+\delta,\pi\right]\right\}
\\
\bbN_v(\delta_v)=&\left\{(\theta_1,\theta_2,\theta_3): \theta_i\in[0,\delta_v]\cup[\pi-\delta_v,\pi]\right\}
\end{align*}
we have
\begin{align}
\bbN(\delta)\setminus \bbN_v(\delta_v)
&=\left\{(\theta_1,\theta_2,\theta_3): \theta_i\in\left(\delta_v,\frac{\pi}{2}-\delta\right)\cup\left(\frac{\pi}{2}+\delta,\pi-\delta_v\right)
\right\}\cap \bbN(\delta),
\label{def:remain:region}
\end{align}
which is nonempty since $\delta_v\leq \pi/2-\delta$ by the assumption \eqref{assumption:delta}.
We plot the remaining region $\bbN(\delta)\setminus \bbN_v(\delta_v)$ projected onto the $(\theta_1,\theta_2)$-coordinates   in Fig.  \ref{fig:remain_region}.

\noindent
\makebox{
\begin{minipage}{0.5\textwidth}
\begin{figure}[H]
\centering
\includegraphics[width =0.6\textwidth]{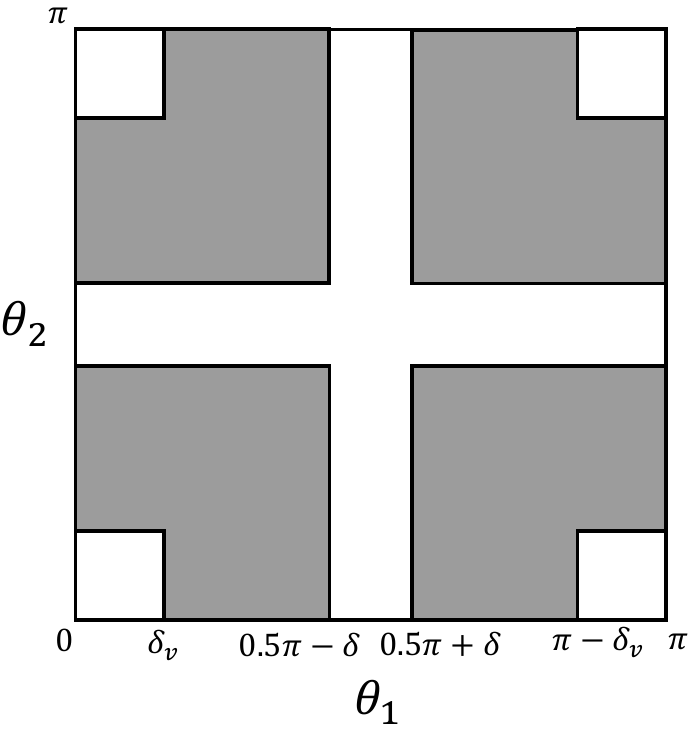}
\caption{The
	remaining region $\bbN(\delta)\setminus \bbN_v(\delta_v)$ projected onto the $(\theta_1,\theta_2)$-coordinates.}
\label{fig:remain_region}
\end{figure}
\end{minipage}
~~~~~
\begin{minipage}{0.5\textwidth}
\begin{figure}[H]
\centering
\includegraphics[width =0.6\textwidth]{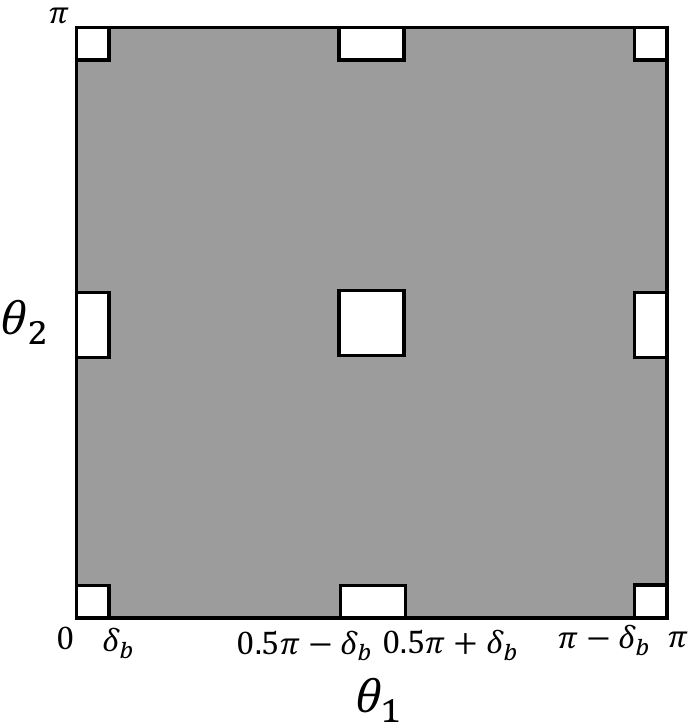}
\caption{The near
	band region $\bbN_b(\delta_b)$ projected onto the $(\theta_1,\theta_2)$-coordinates.}
\label{fig:band_region}
\end{figure}	
\end{minipage}}
\bigskip

To let the near band region cover $\bbN(\delta)\setminus \bbN_v(\delta_v)$, we define it as
\begin{align}
\bbN_b(\delta_b):=
&\left\{(\theta_1,\theta_2,\theta_3): \theta_i \in\left(\delta_b,\frac{\pi}{2}-\delta_b\right)\cup\left(\frac{\pi}{2}+\delta_b,\pi-\delta_b\right), i=1,2,3\right \}.
\label{eqn:near:band:region}
\end{align}
We plot the near band region $\bbN_b(\delta_b)$ projected onto the $(\theta_1,\theta_2)$-coordinates   in Fig.  \ref{fig:band_region}.

\begin{remark}
From  \eqref{def:remain:region} and \eqref{eqn:near:band:region}, we have   $\bbN_b(\delta_b)$  covers   $\bbN(\delta)\setminus \bbN_v(\delta_v)$ if $\delta_b\leq\min\{\delta_v,\delta\}$, or equivalently,
\begin{align}
\bbN(\delta)\subset  \bbN_b(\delta_b)\cup \bbN_v(\delta_v), \quad \text{if }
\delta_b\leq\min\{\delta_v,\delta\}.
\label{eqn:delta:b}
\end{align}	
\end{remark}

\paragraph{Controlling in Near Band Region}	
We start with the uniform upper-bound in Lemma \ref{lem:Parameterize}:
\begin{align}
&|F(\theta_1,\theta_2,\theta_3)|
\leq |\cos(\theta_1)\cos(\theta_2)\cos(\theta_3)|+ |\sin(\theta_1)\sin(\theta_2)\sin(\theta_3)|
+ \frac{4}{3}\tau(\log n)n^{-r_c}\nn\\
\leq& \frac{1}{3}(|\cos(\theta_1)|^3 +|\cos(\theta_2)|^3+|\cos(\theta_3)|^3 )
+\frac{1}{3}(|\sin(\theta_1)|^3+|\sin(\theta_2)|^3+|\sin(\theta_3)|^3)+\frac{4}{3}\tau(\log n) n^{-r_c}
\nn\\
\leq& \frac{1}{3}(|\cos(\theta_i)|^3 + |\sin(\theta_i)|^3)+\frac{2}{3}+\frac{4}{3}\tau(\log n) n^{-r_c},~\forall i\in\{1,2,3\}
\label{eqn:require:theta1:0}
\end{align}
where the first inequality follows from \eqref{eqn:bound:angular:form} in Lemma \ref{lem:Parameterize} (under Assumptions I-III and $r\leq n^{1.25-1.5 r_c}$ with $r_c\in(0,\frac{1}{6})$), the second inequality follows from  the inequality of arithmetic and geometric means, and the last one is a consequence of $|\sin(\theta)|^3 + |\cos(\theta)|^3 \leq 1$. So,   $|F(\theta_1,\theta_2,\theta_3)|<1$ in $\bbN_b(\delta_b)$ if
\begin{align}\label{eqn:require:theta1}
|\cos(\theta_i)|^3 + |\sin(\theta_i)|^3 < 1- 4\tau(\log n) n^{-r_c}
\end{align}
for \emph{some} $i \in\{1,2,3\}$. The final result is summarized in the following lemma, with the proof listed in Appendix \ref{App:proof:lem:middle:region}.
\begin{lemma}[Controlling in Near Band Region]\label{lem:near:band:region}
Under  Assumptions I, II, III,   if $r\leq n^{1.25-1.5 r_c}$   with $r_c\in(0,\frac{1}{6})$, then for sufficiently large $n$,
we have $|F(\theta_1,\theta_2,\theta_3)|<1$ in $\bbN_b(\delta_b)$ for $\delta_b=\sqrt{\frac{80\tau(\log n)}{3}}n^{-0.5 r_c}$.
\end{lemma}

\subsubsection{Combining the Near Vertex Region and Near Band Region}
Finally the \textsf{Angular-BIP} \eqref{eqn:angular:bip}  follows from Lemma \ref{lem:near:vertex:region}  and Lemma \ref{lem:near:band:region} if the union of the near vertex region $\bbN_v(\delta_v)$ and the near band region $\bbN_b(\delta_b)$   covers the near angular region $\bbN(\delta)$:
\[\bbN(\delta)\subset \bbN_v(\delta_v)\cup\bbN_b(\delta_b).\]
From \eqref{eqn:delta:b},
this happens when
\[
\delta_b\le\min\{\delta,\delta_v\},
\]
which is equivalent to
\begin{align}\label{eqn:cond:delta}
\delta_b\leq\delta,
\end{align}
since
$\delta_b=\sqrt{\frac{80\tau(\log n)}{3}}n^{-0.5 r_c}\ll\frac{\sqrt2-1}{3}=\delta_v.$

Then by Proposition \ref{pro:superset},   $q$ satisfies the \textsf{BIP} in $\calN_1(\delta)$. Similar results apply to all individual near region $\calN_p(\delta),$ for $p\in[r]$. Therefore we claim the \textsf{BIP} holds in the whole near region $\calN(\delta)=\bigcup_{p=1}^r\calN_p(\delta)$.
\begin{lemma}[Near-Region Bound]
\label{lem:near:region}
Under  Assumptions I, II, III,    if $r\leq n^{1.25-1.5 r_c}$   with $r_c\in(0,\frac{1}{6})$,
then for sufficiently large $n$,  the dual polynomial $q$ satisfies the {\textsf{BIP}} in $\calN(\delta)$ for any $\delta\ge\delta_b$.
\end{lemma}

\subsection{Combining the Far Region and Near Region}
Combining Lemma \ref{lem:far:region} (for far region)   and  Lemma \ref{lem:near:region} (for near region), we conclude that
the \textsf{BIP}  holds in the whole domain $\bbK$ if Assumptions I, II, III are satisfied and
\begin{align}
r&\leq \frac{n}{24\delta c^2} \text{ for } \delta\in[\delta_b, \frac{1}{24}] 
\quad\text{ and }\quad
r\leq n^{1.25-1.5 r_c} \text{ for }r_c\in(0,\frac{1}{6})\label{eqn:final:2}.
\end{align}
Then letting  $\delta=\delta_b$ (to maximize $r$) and $r_c=\frac{1}{8}$, the requirements \eqref{eqn:final:2} on $r$ are reduced to the desired bound \eqref{bound:r}:
$r\leq \frac{n^{17/16}}{32c^2\sqrt{15\tau(\log n)}}.$
The proof of Theorem \ref{thm:main} is completed.
\hfill$\square$

\section{Computational Method}\label{sec:computation}

Theorem \ref{thm:main} shows that when the  tensor factors $\{(\u_p^\star,\v_p^\star,\w_p^\star)\}_{p=1}^r$ satisfy  Assumptions I, II, III, we can recover the tensor decomposition of $r$ up to the order of $n^{17/16}$ by solving the convex, infinite-dimensional optimization  \eqref{eqn:method}. However, as a measure optimization problem, optimization problem~\eqref{eqn:method} is not directly solvable on a computer. In this section, we first propose a computational method  based on the popular Burer-Monteiro factorization method \cite{Burer:2003fg} and then test it by numerical experiments.

\begin{theorem}\label{pro:nonlinear}
Suppose the decomposition that achieves the tensor nuclear norm $\| \tT \|_{\ast}$
involves $r$ terms and $ \tilde{r} \geq r$, then $\| \tT \|_{\ast}$ is equal to the optimal value of the
following optimization:

\begin{align}
\minimize_{\{\u_p, \v_p, \w_p\}_{p=1}^{\tilde{r}}}  &
\sum_{p=1}^{\tilde{r}}\frac{1}{3} \left( \| \u_p \|_{2}^{3} + \| \v_p \|_{2}^{3} +
\| \w_p \|_{2}^{3}  \right)\ 
\st  \tT= \sum_{p=1}^{\tilde{r}} \u_p \otimes \v_p \otimes \w_p \label{eqn:nonlinear}
\end{align}
\end{theorem}

\begin{proof}
Suppose the tensor nuclear norm is achieved by the decomposition
\[
\tT = \sum_{p=1}^{r} \lambda_{p}^{\star} \u_{p}^{\star} \otimes \v_{p}^{\star}
\otimes \w_{p}^{\star}
.\]
Then we note that $\{{\lambda^\star_p}^{1/3}\u_p^\star, {\lambda^\star_p}^{1/3}\v_p^\star, {\lambda^\star_p}^{1/3}\w_p^\star\}_{p=1}^{\tilde{r}}$ forms a feasible solution to \eqref{eqn:nonlinear} when $\tilde{r} =r$. When $\tilde{r} >r$, we can zero-pad the remaining  factors $\{\u_p, \v_p, \w_p\}_{p=r+1}^{\tilde{r}}$. The objective
function value at this feasible solution is
$\frac{1}{3} (\sum_{p=1}^{\tilde{r}} 3 \lambda_{p}^{\star} ) = \| \tT \|_{\ast}$.
This shows that
$\|\tT\|_\ast$ is greater than the optimal value of \eqref{eqn:nonlinear}.

To show the other, suppose an optimal solution of \eqref{eqn:nonlinear} is $\{\u_p, \v_p, \w_p\}_{p=1}^{\tilde{r}}$. Define $\lambda_{p} := \| \u_p \|_{2} \| \v_p \|_{2} \| \w_p \|_{2}$, for $p\in[\tilde r].$
Then,
\[
\tT   = \sum_{p: \lambda_p \neq 0} \lambda_p \frac{\u_p}{\|\u_p\|_2} \otimes \frac{\v_p}{\|\v_p\|_2} \otimes \frac{\w_p}{\|\w_p\|_2}.
\]
By definition of the tensor nuclear norm \eqref{eqn:atomic:tensor}, we have
\begin{align*}
\| \tT \|_{\ast}
&\leq \sum_{p: \lambda_p \neq 0}  \lambda_{p}
= \sum_{p=1}^{\tilde r}  \lambda_{p}
=   \sum_{p=1}^{\tilde r} \| \u_p \|_{2}\| \v_p \|_{2}\| \w_p \|_{2}
\leq  \frac{1}{3} \sum_{p=1}^{\tilde{r}} \left[ \| \u_p \|_{2}^{3} + \| \v_p \|_{2}^3 + \| \w_p \|_{2}^3\right],
\end{align*}
which is the optimal value of \eqref{eqn:nonlinear}. Therefore, the optimal value of \eqref{eqn:nonlinear} is equal to $\| \tT \|_{\ast}$.
\end{proof}

Theorem \ref{pro:nonlinear} implies that when an upper bound on $r$ is known, we can solve the nonlinear (and non-convex) program \eqref{eqn:nonlinear} to compute the tensor nuclear norm (and obtain the corresponding decomposition). Despite the nonconvex nature of \eqref{eqn:nonlinear},  numerical simulations suggest that the ADMM approach \cite{boyd:2011bwa} has superior performance in solving \eqref{eqn:nonlinear}.
 
\section{Numerical Experiments and Beyond} 
Now we perform some numerical results to test the performance of the proposed  Burer-Monteiro factorization method. In particular, we will examine the phase transition of the rate of success for the ADMM implementation of the proposed Burer-Monteiro factorization approach \eqref{eqn:nonlinear}  with random initialization.  To illustrate the superiority of the proposed  tensor nuclear norm approach, we compare it  with the Least Squares formulation, that is, the L2 error minimization problem 
\begin{equation}
\minimize_{\{\u_p, \v_p, \w_p\}_{p=1}^{\tilde{r}}}\left\|\tT- \sum_{p=1}^{\tilde r} \u_{p} \otimes \v_{p} \otimes \w_{p}\right\|_{F}^{2}.
\label{eqn:LS}
\end{equation}
In the experiments, the $r$ tensor factors $\{(\u_p^\star, \v_p^\star, \w_p^\star)\}_{p=1}^r$ were generated following i.i.d. Gaussian distribution, and then each $\u_p^\star, \v_p^\star, \w_p^\star$ was normalized to have a unit norm. We set the coefficients  $\lambda_p^\star = (1 + \varepsilon_p^2)/2$, where $\varepsilon_p$ is chosen from the standard normal distribution, to ensure a minimal coefficient of at least $1/2$.  
With the generated ground-truth factors $\{(\u_p^\star,\v_p^\star,\w_p^\star)\}_{p=1}^r$ and coefficients $\{\lambda_p\}_{p=1}^r$, we generated the tensor $\tT=\sum_{p=1}^r  \lambda_p^\star \u_p^\star\otimes \v_p^\star \otimes \w_p^\star .$
To generate the phase transition plot, we varied the dimension $n$ and factor-number $r$, and for each fixed $(r, n)$ pair, 20 instances of such tensor were generated.   We then ran the ADMM algorithm to minimize \eqref{eqn:nonlinear}, and ran LBFGS to minimize the L2 error function \eqref{eqn:LS}, from the same random initialization. We remark that the global minimum value of the minimization \eqref{eqn:nonlinear} keeps the same for any $\tilde r\ge r$ but the global minimum solution doesn't. Therefore, to find all the true tensor factors, we choose $\tilde r=r$ in both methods.    For each instance, we declared success if the relative recovery error $\mathrm{Err}\left(\{(\widehat\u_p,\widehat\v_p,\widehat\w_p)\}_{p=1}^r\right)$ of the output tensor factors $\{(\widehat\u_p,\widehat\v_p,\widehat\w_p)\}_{p=1}^r$ (after removing sign and permutation ambiguities) is within $10^{-3}$ where
\[
\mathrm{Err}\left(\{(\widehat\u_p,\widehat\v_p,\widehat\w_p)\}_{p=1}^r\right):=\sum_{p=1}^r \left(\frac{\|\widehat{\u}_p-\u_p^\star\|_2}{\|\u^\star_p\|_2}+\frac{\|\widehat{\v}_p-\v_p^\star\|_2}{\|\v^\star_p\|_2}+\frac{\|\widehat{\w}_p-\w_p^\star\|_2}{\|\w^\star_p\|_2}\right).
\] 
We plot the experiment results in  Fig. \ref{fig:phase_transition}, which shows that the proposed method is clearly superior compared to the traditional Least Squares method. 

\begin{figure*}[ht!]
\centering
\includegraphics[width=0.95\textwidth]{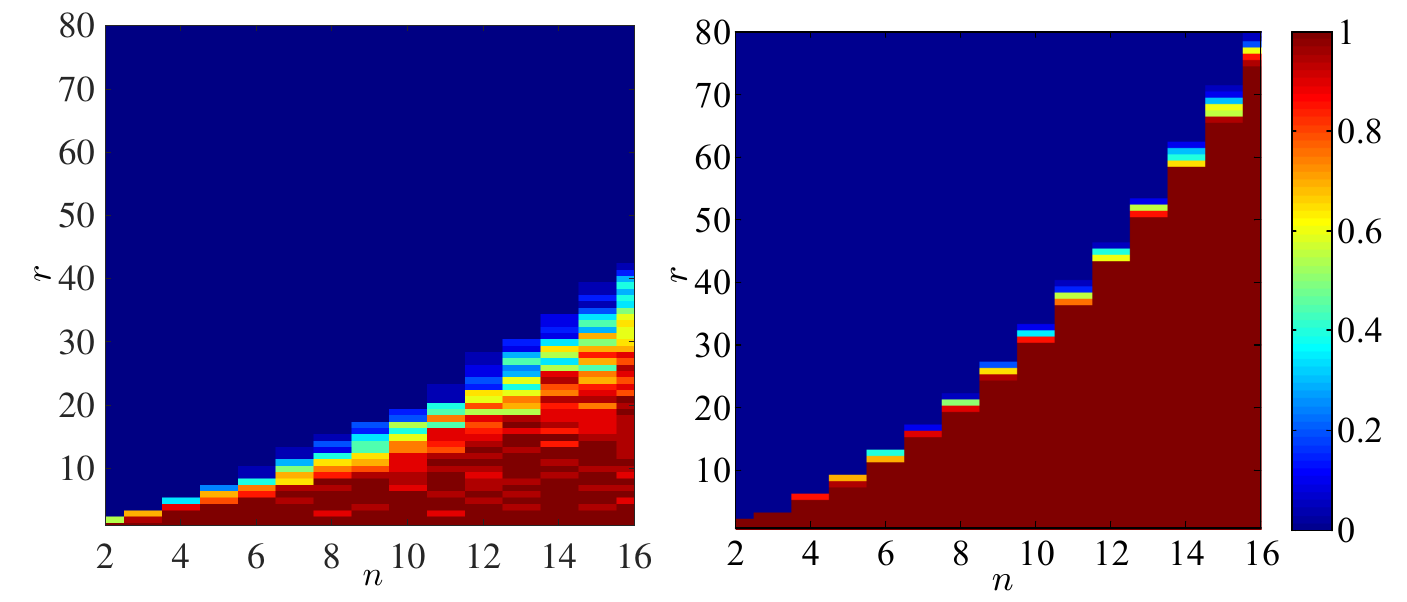}
	\caption{Rate of success using Least Squares method (Left) and ADMM implementation of \eqref{eqn:nonlinear} (Right) for tensor decomposition, respectively.}
	\label{fig:phase_transition}
\end{figure*}

Due to the nonconvexity nature of the two tensor decomposition formulations, 
we believe that the performance gain achieved by the proposed ADMM approach is because the optimization landscape of the Least Squares formulation of tensor decomposition is not as good as that of the tensor nuclear norm formulation \eqref{eqn:nonlinear}. We therefore conjecture that the tensor nuclear norm is crucial in flatting out the spurious local minima and high-order saddle points so that it helps to provide a benign optimization landscape, e.g., ``strict saddle property'', i.e., every critical point is either a strict saddle (where the Hessian has negative eigenvalues) or a global minimizer, see \cite{li2018non, zhu2018global, zhu2021global, li2017geometry, zhu2019distributed,li2020global, li2019alternating,li2019provable, li2017convex, li2022tensor,chi2019nonconvex} for more literature of landscape analysis. In contrast, the Least Squares formulation of tensor decomposition doesn't  satisfy the ``strict saddle property''. To verify this conjecture, we perform some preliminary analysis.  To simplify the notations and analysis, we consider the symmetric case as a first step. We believe the nonsymmetric case will have similar properties. More precisely, we consider the following L2 loss function
\begin{equation}
 g( \mU)=\frac{1}{6}\|\u_1\otimes^3+\u_2\otimes^3-\a_1\otimes^3-\a_2\otimes^3\|_F^2   
\end{equation}
where $\u\otimes^3:=\u\otimes\u\otimes\u$ and $ \mU:=[\u_1, \u_2]\in\mathbb{R}^{2\times 2}$. For this function, the critical points are given by the following equation:
\begin{equation}
\nabla g( \mU)= \mU[( \mU^\top  \mU)\odot ( \mU^\top  \mU)]- \mA[( \mA^\top \mU)\odot ( \mA^\top  \mU)]=\zero.   
\end{equation}
For simplicity, assume $ \mA=\eye$ the identity matrix. Then the stationary equation reduces to
\begin{equation}
 \mU[( \mU^\top  \mU)\odot ( \mU^\top  \mU)] =  \mU\odot  \mU   
 \label{eq:cri}
\end{equation}
Directly solving the above equation \eqref{eq:cri} (through \textsc{Mathematica}) generates three sets of local solutions.

\begin{description}
\item[\bf Case I]
\[\mU_1(x)=
\begin{bmatrix}
0 &~~ 0 \\
\sqrt[3]{1-x^3} &~~ x
\end{bmatrix} \text{ or } \begin{bmatrix}
\sqrt[3]{1-x^3} &~~ x\\
0 &~~ 0
\end{bmatrix} \text{ where $x\in\mathbb{R}$.}
\]
In this case, we then compute the eigenvalues of the Hessian matrix:
\[
{\vlambda}(\nabla^2g(U_1(x)))=
\begin{bmatrix}
0 \\
0 \\
\underbrace{\left(x^4-x^3 \sqrt[3]{1-x^3}+\sqrt[3]{1-x^3}\right)}_{\geq 3(2^{\frac{2}{3}})}
\\
3 \underbrace{\left(x^4-x^3 \sqrt[3]{1-x^3}+\sqrt[3]{1-x^3}\right)}_{\geq 3(2^{\frac{2}{3}})}
\end{bmatrix},
\]
where ${\vlambda}(\cdot)$ denotes the eigenvalue list of its argument.
We conclude that the first set of critical points $\mU_1(x)$ are neither strict saddle points (since Hessian matrix has no negative eigenvalues) nor global minima (since it is not a permuted version of Identify matrix). Therefore, the Least Squares formulation of tensor decomposition doesn't satisfy the ``strict saddle property''. In addition, since the Hessian doesn't have negative curvature at these critical points, the iterative algorithm such as gradient descent easily gets trapped by these points. This explains the relatively poor performance of the Least Squares formulation of tensor decomposition in Fig. \ref{fig:phase_transition}. 

\item[\bf Case II]
\[
\mU_2(x)=
\begin{bmatrix}
\sqrt[3]{0.25\, -x^3} &~~ x \\
\sqrt[3]{0.25\, -x^3} &~~ x
\end{bmatrix}
\]
For this set of critical points, there are no closed-form eigenvalues of its Hessian matrix. For convenience, we plot the four eigenvalues as a function of $x$.
\begin{center}
\includegraphics[width=.92\textwidth]{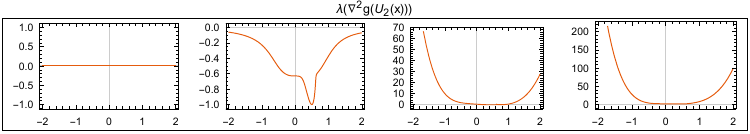}  
\end{center}
We therefore conclude that this set of critical points are strict saddle points at least when $x\in[-2,2]$ because the Hessian matrix has a negative eigenvalue for $x\in[-2,2]$.

\item[\bf Case III]
\[
\mU_3(x)=
\begin{bmatrix}
0 &~~ 0 \\
0 &~~ 0
\end{bmatrix}
\]
It is clear that the zero point is a special critical point and it is actually a high-order saddle point, because the Hessian matrix at zero point is also zero.
\end{description}
Although the $2$-by-2 case is very simple, it provides some evidence that the Least Squares formulation of tensor decomposition doesn't satisfy the ``strict saddle property'' and has high-order saddle points.

\section{Conclusion}\label{sec:conclusion}
By explicitly constructing a  dual certificate, we derive similar incoherence conditions (as the separation conditions in super-resolution problem) for a tensor
decomposition to achieve the tensor nuclear norm. This implies that the infinite dimensional total mass minimization can globally identify those decompositions satisfying the developed incoherence conditions. Computational method based on Burer-Monteiro factorization approach is used to solve the measure optimization.
Numerical experiments show
that the  Burer-Monteiro factorization approach achieves amazingly superior performance. Future
work will analyze the nonconvex landscape of  the   Burer-Monteiro factorization approach.

\appendix

\section{Proof of Lemma \ref{lem:relaxedcondition}} 
\label{App:proof:lem:relaxedcondition}

\begin{proof}
From the KKT conditions of the constrained optimization \eqref{eqn:mip:optmization}, we have the partial derivatives of its Lagrangian
\begin{align*}
\calL(\u,\v,\w,a,b,c) =&  q ( \u,\v,\w ) -a ( \| \u \|_{2}^{2} -1 ) -b ( \| \v \|_{2}^{2}-1 )  -c ( \| \w \|_{2}^{2} -1 )
\end{align*}
at $\u=\u_p^\star$, $\v=\v_p^\star$, and $\w=\w_p^\star$, $p=1,\ldots, r$, must vanish. Therefore,
\begin{equation}
\begin{aligned}
\frac{\partial \calL( \u_{p}^{\star} ,\v_{p}^{\star} ,\w_{p}^{\star},a,b,c )}{\partial\u}
& =  \frac{\partial q( \u_{p}^{\star} ,\v_{p}^{\star},\w_{p}^{\star})}{\partial \u} -2a\u_{p}^{\star} =0,\\
\frac{\partial \calL( \u_{p}^{\star} ,\v_{p}^{\star} ,\w_{p}^{\star},a,b,c )}{\partial\v}
& =  \frac{\partial q( \u_{p}^{\star} ,\v_{p}^{\star},\w_{p}^{\star})}{\partial \v} -2b\v_{p}^{\star} =0,\\
\frac{\partial \calL( \u_{p}^{\star} ,\v_{p}^{\star} ,\w_{p}^{\star},a,b,c )}{\partial\w}
& =  \frac{\partial q( \u_{p}^{\star} ,\v_{p}^{\star},\w_{p}^{\star} )}{\partial \w} -2c\w_{p}^{\star} =0.
\end{aligned}
\label{eqn:grad:zero}
\end{equation}
Hence, $2a=\langle \frac{\partial q( \u_{p}^{\star} ,\v_{p}^{\star},\w_{p}^{\star} )}{\partial \u}, \u_{p}^{\star} \rangle$, $2b=\langle \frac{\partial q( \u_{p}^{\star} ,\v_{p}^{\star},\w_{p}^{\star} )}{\partial \v}, \v_{p}^{\star} \rangle$, and $2c=\langle \frac{\partial q( \u_{p}^{\star} ,\v_{p}^{\star},\w_{p}^{\star} )}{\partial \w}, \w_{p}^{\star} \rangle$. Note that $q$ satisfies the Interpolation condition  and $\frac{\partial q ( \u,\v,\w )}{\partial \u ( i )} = \sum_{j,k} Q_{i j k} \v ( j ) \w ( k )$, we have that
\begin{align*}
2a&=   \sum_{i,j,k} Q_{i j k} \u_{p}^{\star} ( i )\v_{p}^{\star} ( j ) \w_{p}^{\star} ( k ) =  q ( \u_{p}^{\star} ,\v_{p}^{\star} ,\w_{p}^{\star} ) =1.
\end{align*}
That is $a=1/2$. With similar arguments, one can show that $b=c=1/2$. The conclusion of this lemma follows from \eqref{eqn:grad:zero}.
\end{proof}

\section{Proof of Lemma \ref{lem:normalequation}}
\label{App:proof:lem:normalequation}

\begin{proof}

First, the Lagrangian form of \eqref{eqn:minimum:norm} is
\begin{align*}
\calL(\tQ,\{\valpha_p^\star,\vbeta_p^\star,\vgamma_p^\star\}_{p=1}^r)
&=\frac{1}{2}\|\tQ\|_{F}^{2}-\sum_{p=1}^r\left(\tQ{\times}_1\valpha_p^\star{\times}_2\v_p^\star{\times}_3\w_p^\star+\tQ{\times}_1\u_p^\star{\times}_2\vbeta_p^\star{\times}_3\w_p^\star+\tQ{\times}_1\u_p^\star{\times}_2\v_p^\star{\times}_3\vgamma_p^\star\right)\\
&=\frac{1}{2}\|\tQ\|_{F}^{2}- \left\lg \tQ, \sum_{p=1}^r
\valpha_p^\star\otimes\v_p^\star\otimes\w_p^\star+ \u_p^\star\otimes\vbeta_p^\star\otimes\w_p^\star+
\u_p^\star\otimes\v_p^\star\otimes\vgamma_p^\star\right\rg
\end{align*}
with  the Lagrangian multipliers $\{\valpha_p^\star,\vbeta_p^\star,\vgamma_p^\star\}_{p=1}^r$ to be chosen such that $\tQ$  satisfies \eqref{eqn:minimum:energy}.
Then, by the KKT necessary conditions,  the solution of the least-norm problem (\ref{eqn:minimum:norm}) should satisfy
\begin{align*}
\zero=&\frac{\partial \calL(\tQ,\{\valpha_p^\star,\vbeta_p^\star,\vgamma_p^\star\}_{p=1}^r)}{\partial \tQ}
= \tQ-\sum_{p=1}^r \left( \valpha^\star_{p} \otimes \v_{p}^{\star} \otimes
\w_{p}^{\star} +\u_{p}^{\star} \otimes \vbeta^\star_{p} \otimes \w_{p}^{\star}
+\u_{p}^{\star} \otimes \v_{p}^{\star} \otimes \vgamma^\star_{p} \right).
\end{align*}	

\end{proof}

\section{{Proof of Lemma \ref{lem:Bound_coef}}}	\label{App:proof:lem:Bound_coef}

\begin{proof}

We need to find coefficients $\{\valpha^\star_p, \vbeta^\star_p, \vgamma^\star_p\}_{p=1}^r$ so that  
\[
\tQ = \sum_{p=1}^r \left( \valpha^\star_{p} \otimes \v_{p}^{\star} \otimes\w_{p}^{\star}
+\u_{p}^{\star} \otimes \vbeta^\star_{p} \otimes \w_{p}^{\star}
+\u_{p}^{\star} \otimes \v_{p}^{\star} \otimes \vgamma^\star_{p} \right)
\] 
satisfies
\begin{align}\label{eqn:tensor_eigenvalue}
&\tQ{\times}_2\v_p^\star
{\times}_3\w_p^\star = \u_p^\star,\quad \forall  p\in[r] ,\nonumber\\
& \tQ{\times}_1\u_p^\star
{\times}_3\w_p^\star = \v_p^\star,\quad \forall  p\in[r] ,\nonumber\\
& \tQ{\times}_1\u_p^\star
{\times}_2\v_p^\star = \w_p^\star,\quad \forall  p\in[r] .
\end{align}

\subsection{An Iteration Scheme}
We adopt the following {\em iterative scheme} to find such $\{\valpha_p^\star, \vbeta^\star_p, \vgamma^\star_p\}_{p=1}^r$:
\begin{align}
\valpha^{t+1}_q &= \valpha_q^t -\rho\left(\tQ_1^t{\times}_2\v_p^\star
{\times}_3\w_q^\star-\u_q^\star\right),\quad q\in[r],\nn\\
\vbeta^{t+1}_q  &= \vbeta_q^t -\rho\left(\tQ_2^t{\times}_1\u_p^\star
{\times}_3\w_q^\star-\v_q^\star\right), \quad q\in[r],\nn\\
\vgamma^{t+1}_q &= \vgamma_q^t -\rho\left(\tQ_3^t{\times}_1\u_p^\star
{\times}_2\v_q^\star-\w_q^\star\right),\quad q\in[r],\label{eqn:gamma:t+1}
\end{align}
initialized by $\valpha_q^0=\frac{1}{3}\u_q^\star$, $\vbeta_q^0=\frac{1}{3}\v_q^\star$, and $\vgamma_q^0=\frac{1}{3}\w_q^\star$ with $q\in [r]$.	Here the parameter $\rho$ is a step size  to be chosen later and the tensors
\begin{align}
\tQ_1^t:=& \sum_{p=1}^r \left( \valpha_{p}^t \otimes \v_{p}^{\star} \otimes
\w_{p}^{\star} +\u_{p}^{\star} \otimes \vbeta_{p}^\star \otimes \w_{p}^{\star}
+\u_{p}^{\star} \otimes \v_{p}^{\star} \otimes \vgamma_{p}^\star \right),
\nn\\
\tQ_2^t:=& \sum_{p=1}^r \left( \valpha_{p}^t \otimes \v_{p}^{\star} \otimes
\w_{p}^{\star} +\u_{p}^{\star} \otimes \vbeta_{p}^t \otimes \w_{p}^{\star}
+\u_{p}^{\star} \otimes \v_{p}^{\star} \otimes \vgamma_{p}^\star \right),
\nn\\
\tQ_3^t:=& \sum_{p=1}^r \left( \valpha_{p}^t \otimes \v_{p}^{\star} \otimes
\w_{p}^{\star} +\u_{p}^{\star} \otimes \vbeta_{p}^t \otimes \w_{p}^{\star}
+\u_{p}^{\star} \otimes \v_{p}^{\star} \otimes \vgamma_{p}^t\right).
\label{eqn:Qt}
\end{align}
Note that the above iterative scheme is for theoretical analysis only as we used $\{\valpha_p^\star, \vbeta^\star_p, \vgamma^\star_p\}_{p=1}^r$ in the definitions of $\tQ_1^t, \tQ_2^t$ and $\tQ_3^t$.

\subsection{Convergence Analysis}
We next establish the convergence of the iterations \eqref{eqn:gamma:t+1}. Plugging the tensor eigenvalue equations \eqref{eqn:tensor_eigenvalue} into \eqref{eqn:gamma:t+1} followed by subtracting the true solutions from both sides  yields  for $q\in[r]$
\begin{align}
\valpha^{t+1}_q - \valpha_q^\star
=& \valpha_q^t - \valpha_q^\star -\rho[\tQ_1^t-\tQ]{\times}_2\v_q^\star
{\times}_3\w_q^\star,\nn\\
\vbeta^{t+1}_q - \vbeta_q^\star
=& \vbeta_q^t - \vbeta_q^\star -\rho[\tQ_2^t-\tQ]{\times}_1\u_q^\star
{\times}_3\w_q^\star,\nn\\
\vgamma^{t+1}_q - \vgamma_q^\star
=& \vgamma_q^t - \vgamma_q^\star -\rho[\tQ_3^t-\tQ]{\times}_1\u_q^\star
{\times}_2\v_q^\star. \label{eqn:iterate}
\end{align}
Then plugging the definitions of $\tQ_1^t, \tQ_2^t,\tQ_3^t$ \eqref{eqn:Qt} into  \eqref{eqn:iterate} and
using the following matrix notations
\begin{align*}
\mA^t  &:=    \begin{bmatrix} \valpha^t_1, \cdots, \valpha^t_r\end{bmatrix},
\mA := \begin{bmatrix}\valpha_1^\star, \cdots, \valpha_r^\star\end{bmatrix},
\\
\mB^t & :=    \begin{bmatrix} \valpha^t_1, \cdots, \valpha^t_r\end{bmatrix},
\mB := \begin{bmatrix}\valpha_1^\star, \cdots, \valpha_r^\star\end{bmatrix}, 
\\
\mC^t &:=\begin{bmatrix} \vgamma^t_1, \cdots, \vgamma^t_r\end{bmatrix},~
\mC := \begin{bmatrix}\vgamma_1^\star, \cdots, \vgamma_r^\star\end{bmatrix},
\end{align*}
we have
\begin{align}
\mA^{t+1}-\mA
=& (\mA^t-\mA)(\mI-\rho\left[(\mV^\top \mV)\odot(\mW^\top \mW)\right]),
\nn\\
\mB^{t+1}-\mB
=& (\mB^t-\mB)\big(\mI-\rho[(\mU^\top \mU)\odot(\mW^\top \mW)]\big) -\rho \mV\left[((\mA^t-\mA)^\top \mU)\odot(\mW^\top \mW)\right],
\nn\\
\mC^{t+1}-\mC=& (\mC^t-\mC)(\mI-\rho[(\mU^\top \mU)\odot(\mV^\top \mV)])\nn 
\\
&-\rho  \mW\left\{[((\mA^t-\mA)^\top \mU)\odot(\mV^\top \mV)]
+[(\mU^\top \mU)\odot((\mB^t-\mB)^\top \mV)]\right\}.
\label{eqn:ABC:it}
\end{align}
Denoting  $e_a^t=\|\mA^t-\mA\|,e_b^t=\|\mB^t-\mB\|,e_c^t=\|\mC^t-\mC\|$ and
\[
\tilde{\rho}:=\rho\min
\left\{
\begin{matrix}
\lambda_{\min}((\mV^\top \mV)\odot(\mW^\top \mW))\\
\lambda_{\min}((\mU^\top \mU)\odot(\mW^\top \mW))\\
\lambda_{\min}((\mU^\top \mU)\odot(\mV^\top \mV))
\end{matrix}
\right\},
\]
it follows from \eqref{eqn:ABC:it} that
\begin{align}
e_a^{t+1}&\leq (1-\tilde\rho)e_a^t,\nn\\
e_b^{t+1}&\leq \rho\|\mU\|\|\mV\|\|\mW\|^2 e_a^t+(1-\tilde\rho)e_b^t,\nn\\
e_c^{t+1}&\leq \rho\|\mU\|^2\|\mV\|\|\mW\|e_a^t+\rho\|\mU\|^2\|\mV\|\|\mW\|e_b^t+(1-\tilde\rho)e_c^t,
\label{eqn:eaebec:it}
\end{align}
where we have used that $\|\mP\odot \mQ\|\leq \|\mP\otimes \mQ\| = \|\mP\|\|\mQ\|$.
Converting \eqref{eqn:eaebec:it} into matrix form gives
\begin{align*}
&\begin{bmatrix}
e_a^{t+1}\\
e_b^{t+1}\\
e_c^{t+1}
\end{bmatrix}
\leq
\begin{bmatrix}
1-\tilde{\rho} & 0 & 0\\
\rho\|\mU\|\|\mV\|\|\mW\|^2 & 1-\tilde{\rho} & 0\\
\rho \|\mU\|\|\mW\|\|\mV\|^2 & \rho\|\mU\|^2\|\mV\|\|\mW\| & 1-\tilde{\rho}
\end{bmatrix}
\begin{bmatrix}
e_a^t\\
e_b^t\\
e_c^t\\
\end{bmatrix},
\end{align*}
where the lower triangular  system matrix share the same value
\begin{align}
\eta&=1-\tilde{\rho}
\in\left[1-\rho\left(1+\frac{\kappa(\log n)\sqrt{r}}{n}\right),
1-\rho\left(1-\frac{\kappa(\log n)\sqrt{r}}{n}\right)\right]
\subset(0,1)
\label{eqn:eta}
\end{align}
where ``$\in$'' follows from applying \textsf{Weyl's inequality} to \eqref{cond:gram} in Assumption III and ``$\subset$'' holds for any $\rho\in \left(0,(1+\frac{\kappa(\log n)\sqrt{r}}{n})^{-1}\right)$.
 
Therefore, the error sequence $(e_a^t,e_b^t,e_c^t)$ is convergent to $(0,0,0)$ geometrically with a rate $\eta\in(0,1)$. Thus,
\[
\lim_{t\to\infty} (\mA^t,\mB^t,\mC^t)=(\mA,\mB,\mC).
\]

\subsection{Convergence of  $\{\|\mA^{t}-\mA^{t-1}\|\},\{\|\mB^{t}-\mB^{t-1}\|\},\{\|\mC^{t}-\mC^{t-1}\|\}$ } 
Subtracting the following two consecutive iterations for $\{\mA^t\}$ in \eqref{eqn:ABC:it}:
\begin{equation*}
\begin{aligned}
&\mA^{t+1}-\mA= (\mA^t-\mA)(\mI-\rho\left[(\mV^\top \mV)\odot(\mW^\top \mW)\right])\\
&\mA^{t}-\mA= (\mA^{t-1}-\mA)(\mI-\rho\left[(\mV^\top \mV)\odot(\mW^\top \mW)\right])
\end{aligned}
\quad\Longrightarrow
\mA^{t+1}-\mA^t =(\mA^t-A^{t-1})(\mI-\rho\left[(\mV^\top \mV)\odot(\mW^\top \mW)\right]).
\end{equation*}
Similar manipulations applied to $\{\mB^t\}$ and $\{\mC^t\}$ lead to
\begin{align*}
\mB^{t+1}-\mB^t=& (\mB^t-\mB^{t-1})(\mI-\rho\left[(\mU^\top \mU)\odot(\mW^\top \mW)\right])-\rho \mV\left[((\mA^t-\mA^{t-1})^\top \mU)\odot(\mW^\top \mW)\right],
\\
\mC^{t+1}-\mC^t=& (\mC^t-\mC^{t-1})(\mI-\rho\left[(\mU^\top \mU)\odot(\mV^\top \mV)\right])
\\
& -\rho \mW\left\{\left[((\mA^t-\mA^{t-1})^\top \mU)\odot(\mV^\top \mV)\right]+\left[(\mU^\top \mU)\odot((\mB^t-\mB^{t-1})^\top \mV)\right]\right\}
\end{align*}
Defining $\hat{e}_a^t=\|\mA^t-\mA^{t-1}\|,\hat{e}_b^t=\|\mB^t-\mB^{t-1}\|,\hat{e}_c^t=\|\mC^t-\mC^{t-1}\|$, we can get the same form as\eqref{eqn:eaebec:it} and therefore claim that $(\hat{e}_a^t,\hat{e}_b^t,\hat{e}_c^t)$ converge to $(0,0,0)$ geometrically with the same rate $\eta \in (0, 1)$ in \eqref{eqn:eta}.

\subsection{Controlling the Accumulative Errors} The geometric convergence of $\{\|\mC^t-\mC^{t-1}\|\}$ implies
\[\|\mC^t-\mC^{t-1}\| \leq \eta^{t-1}\|\mC^1-\mC^0\|\]
which implies that
\[
\|\mC^t-\mC^0\|
\leq \sum_{s=0}^{t-1} \|\mC^{s+1}-\mC^s\|
\leq  \sum_{s=0}^{t-1} \eta^s \|\mC^{1}-\mC^0\|
\leq  \frac{1}{1-\eta}\|\mC^1-\mC^0\|.
\]
 Let $t$ go to infinity:
\begin{align}
\|\mC - \mC^0\| &\leq \frac{1}{1-\eta} \|\mC^1-\mC^0\|.\label{C:C0}
\end{align}
We next bound $\|\mC^1-\mC^0\|$. From \eqref{eqn:gamma:t+1}, we have
\begin{align*}
&\vgamma^1_q-\vgamma^0_q=\rho (\tQ_3^0{\times}_1\u_q^\star{\times}_2\v_q^\star-\w_q^\star)
=\rho \left(\sum_{p=1}^r \lg \u_p^\star,\u_q^\star\rg\lg\v_p^\star,\v_q^\star\rg \w_p^\star- \w_q^\star\right)
\\
&\Longrightarrow
\mC^1 - \mC^0 = \rho \mW ((\mU^\top \mU)\odot (\mV^\top \mV)-\mI).
\end{align*}
Then from Assumptions II and III, we have
\begin{align}
\|\mC^1-\mC^0\|
\leq  \rho \|\mW\| \|(\mU^\top \mU)\odot (\mV^\top \mV)-\mI\|
\leq  \rho \bigg(1+c\sqrt{\frac{r}{n}}\bigg) \frac{\kappa(\log n)\sqrt{r}}{n}\label{C1:C0}.
\end{align}

\bigskip
\noindent{\bf Combining All }
Finally,  combining \eqref{eqn:eta}, \eqref{C:C0} and \eqref{C1:C0} and using $\mC_0=\frac{1}{3}\mW$,   we have
\begin{align*}
\left\|\mC - \frac{1}{3}\mW\right\|
\leq& \frac{1+c\sqrt{\frac{r}{n}}}{1- \frac{\kappa(\log n)\sqrt{r}}{n}}  \frac{\kappa(\log n)\sqrt{r}}{n}
\leq  2 \left(1+c\sqrt{\frac{r}{n}}\right) \frac{\kappa(\log n)\sqrt{r}}{n}
= 2\kappa(\log n)\bigg(\frac{\sqrt r}{n}+c{\frac{r}{n^{1.5}}}\bigg)
\end{align*}
where the second inequality follows from the assumption $r=o({n^2}/{\kappa(\log n)^2})$ which implies $1- \frac{\kappa(\log n)\sqrt{r}}{n}\geq \frac{1}{2}$ for a sufficiently large $n$.
Similar arguments and bounds apply to $\|\mA - \frac{1}{3}\mU\|$ and $\|\mB - \frac{1}{3}\mV\|$.
\end{proof}

\section{{Proof of Lemma \ref{lem:far:region}}}\label{App:proof:lem:Far_away}

\begin{proof}

The following lemma is required in the proof of Lemma \ref{lem:far:region}. Let us first admit Lemma~\ref{lem:p_norm} to prove Lemma \ref{lem:far:region}. Since $q$ is the sum of two parts given in 	\eqref{eqn:far:terms:0} and  \eqref{eqn:far:terms:2}, to bound $|q|$, we will control these parts separately.
 \begin{lemma}\label{lem:p_norm}
	Under  Assumptions I and II,  if $r\leq n^{1.25-1.5r_c} \text{ with } r_c\in(0,{1}/{6})$, then  for any integer $p\geq 3$,
	\begin{align*}
	\|\mU^\top \|_{2\to p}&\leq 1+\frac{1}{p}\tau(\log n) n^{-r_c}
	\end{align*}
	The same bounds hold for $\mV$ and $\mW$. Here, we define $\|\mH\|_{2\rightarrow p}:=\sup\{\|\mH \x\|_p: \x \in \S^{n-1}\}$.
\end{lemma}
\begin{proof}[Proof of Lemma \ref{lem:p_norm}]
See Appendix \ref{App:proof:lem:p_norm}.	
\end{proof}

\bigskip
\noindent{\bf  Bounding absolute value of \eqref{eqn:far:terms:0}:}
\begin{align*}
\sum_{p=1}^r |\langle \valpha^\star_{p}-\frac{1}{3}\u_p^\star ,\u \rangle \langle \v_{p}^{\star} ,\v
\rangle \langle \w_{p}^{\star} ,\w \rangle|
&\leq \sqrt{\sum_{p=1}^r \langle \valpha^\star_{p}-\frac{1}{3}\u_p^\star ,\u \rangle^2}
\sqrt{\sum_{p=1}^r \langle \v_{p}^{\star} ,\v
\rangle^2 \langle \w_{p}^{\star} ,\w \rangle^2}\\
&\leq \sqrt{\sum_{p=1}^r \langle \valpha^\star_{p}-\frac{1}{3}\u_p^\star ,\u \rangle^2}
\sqrt[4]{\sum_{p=1}^r \langle \v_{p}^{\star} ,\v
\rangle^4}
\sqrt[4]{\sum_{p=1}^r \langle \w_{p}^{\star} ,\w \rangle^4}\\
&=\|(\mA-\frac{1}{3}\mU)^\top\u\|_2\|\mV^\top\v\|_4\|\mW^\top\w\|_4\\
&\leq\|\mA-\frac{1}{3}\mU\| \|\mV^\top\|_{2\to 4}\|\mW^\top\|_{2\to 4}\\
&\leq
2\kappa(\log n)\left(\frac{\sqrt r}{n}+c\frac{r}{n^{1.5}}\right)(1+o(1))\\
&=o(1),
\end{align*}
where the last second line follows from Lemma \ref{lem:Bound_coef} and Lemma \ref{lem:p_norm} when $r\ll n^{1.25}$ (by letting $r_c$ in ``$r\ll n^{1.25-r_c}$" approach to zero). The last line holds for
$r\ll \frac{n^{1.5}}{\kappa(\log n)}.$

Similar bounds hold for the other two terms in \eqref{eqn:far:terms:0}.

\bigskip
\noindent{\bf Bounding the absolute value of \eqref{eqn:far:terms:2}:}
First of all, for any $(\u,\v,\w)\in\calF(\delta)$, there exists a division of $[r]=\Omega_u\cup\Omega_v\cup\Omega_w$ such that
\begin{equation}\label{eqn:divide:far}
\begin{aligned}
|\langle \u_p^\star,\u\rangle|&\leq \delta,\quad \forall p\in\Omega_u,\\
|\langle \v_p^\star,\v\rangle|&\leq \delta,\quad \forall p\in\Omega_v,\\
|\langle \w_p^\star,\u\rangle|&\leq \delta,\quad \forall p\in\Omega_w.
\end{aligned}
\end{equation}
We will denote by $\mU_{\Omega_u}$ the submatrix of $\mU$ forming from those columns of $\mU$ with indexes in $\Omega_u$. Similarly, we can define  $\mV_{\Omega_v}$ and $\mW_{\Omega_w}$. With these preparation, we have that
\begin{align*}
\sum_{p=1}^r | \langle \u_{p}^{\star} ,\u \rangle \langle\v_{p}^{\star} ,\v \rangle \langle \w_{p}^{\star} \w \rangle|
&= \sum_{p\in\Omega_u\cup\Omega_v\cup\Omega_w}| \langle \u_{p}^{\star} ,\u \rangle \langle\v_{p}^{\star} ,\v \rangle \langle \w_{p}^{\star} \w \rangle|\\
&\leq \delta(\|\mV_{\Omega_u}\|\|\mW_{\Omega_u}\|+\|\mU_{\Omega_v}\|\|\mW_{\Omega_v}\|+\|\mU_{\Omega_w}\|\|\mV_{\Omega_w}\|)\\
&\leq 3\delta\left(1+c \sqrt{\frac{r}{n}}\right)^2\\
&\leq 12\delta\max\{1,c^2r/n\}\\
&\leq \frac{1}{2},
\end{align*}
where the first inequality  follows from \eqref{eqn:divide:far} and
$ \sum_{p\in\Omega_u} |\langle\v_{p}^{\star} ,\v \rangle \langle  \w_{p}^{\star} ,\w \rangle| \leq \|\mV_{\Omega_u}\| \|\mW_{\Omega_u}\|$, etc.
The  second inequality uses the fact that the spectral norm of any submatrix is smaller than the original one and Assumption II.
The last inequality holds when $\delta\leq \frac{1}{24}$ and and $r\leq {n}/(24\delta c^2).$

\bigskip
\noindent{\bf Combining All }
Under  Assumptions I,  II, III, if $r\ll n^{1.25}$ and $r\leq \frac{n}{24\delta c^2}$ for $\delta\in(0,\frac{1}{24}]$,
we have $|q|\le o(1) +\frac{1}{2}<1$ in $\calF(\delta)$ for sufficiently large $n$.
\end{proof}

\subsection{Proof of Lemma \ref{lem:p_norm}}\label{App:proof:lem:p_norm}
The proof refines the one for Lemma 4 of \cite{anandkumar2015learning}. We only prove it for $\mU$ since the same arguments apply to $\mW$ and $\mV$.
We start with a general integer $p \geq 3$.
\begin{align}\label{eqn:E:1:a}
\|\mU^\top \|_{2\to p}= \sup_{\x\in\S^{n-1}} \|\mU^\top \x\|_p:=\|\mU^\top \x^\star\|_p
\end{align}
where we define $\x^\star\in \mathbb{S}^{n-1}$ to be the optimal solution of  $\sup_{\x\in\S^{n-1}} \|\mU^\top \x\|^p_p$. Further note that
\begin{align}\label{eqn:E:1:b}
\|\mU^\top \x^\star\|_p^p=\|\mU_S^\top \x^\star\|_p^p+\|\mU_{S^c}^\top \x^\star\|_p^p
\end{align}
where  $S$ denotes the indices of the largest (in absolute value) $L$ entries of $\mU^\top \x^\star$ and  $\mU_S$ denotes the column submatrix of $\mU$ indexed by $S$.
Similar notations apply to its complement set   $S^c=[r]\setminus S.$

\bigskip
\noindent{\bf  Bound the first term:}
\begin{align}
\|\mU_S^\top \x^\star\|_p^p
\leq \|\mU_S^\top \x^\star\|_2^2
\leq \|\mU_S\mU_S^\top \|
\leq 1+\sum_{i\in S\setminus \{j\}}|\langle \u_i, \u_j \rangle|
\leq 1+(L-1)\frac{\tau(\log n)}{\sqrt n}\label{eqn:first:bound}.
\end{align}
{\bf Note} this upper-bound is independent of $p$.
Here, the first inequality is because $|\u_i^{\star\top}\x^\star|\leq \|\u_i^\star\|_2\|\x^\star\|_2 = 1$ and the last second inequality follows from \textsf{Gershgorin's circle theorem}. Finally the last inequality  is from Assumption I and $L$ being the cardinality of the set $S$.

\bigskip
\noindent{\bf  Bound the second term:} First note that
\begin{align*}
\min_{i\in S} |\u_i^\top \x^\star|^2
\leq \frac{1}{L} \sum_{i\in S}|\u_i^\top \x^\star|^2
\leq \frac{1}{L}\|\mU_S\mU^\top _S\| \|\x^\star\|^2_2
\leq \frac{1}{L} (1+o(1))\leq \frac{2}{L}
\end{align*}
for sufficiently large $n$. The last second inequality follows from \eqref{eqn:first:bound} and an additional assumption on $L$
\begin{align}
(L-1)\frac{\tau(\log n)}{\sqrt n}=o(1).\label{eqn:cond:L}
\end{align}
We conclude that
\[
\max_{i\in S^c} |\u_i^\top \x^\star|^2 \leq \min_{i\in S} |\u_i^\top \x^\star|^2 \leq \frac{2}{L},
\]
since $S$ consists of the indices of the $L$ largest (in absolute value) elements of $\mU^\top\x^\star$.  As a consequence, we have
\begin{align}
\|\mU_{S^c}^\top \x^\star\|_p^p
&=    \sum_{i\notin S} |\u_i^\top  \x^\star|^p\nn
\\
&\leq \big(\max_{i\notin S} |\u_i^\top \x^\star|^{p-2}\big)\sum_{i\notin S} |\u_i^\top  \x|^2
&=    \big(\max_{i\notin S} |\u_i^\top \x^\star|^{p-2}\big)\|\mU_{S^c}^\top x^\star\|_2^2
\leq \bigg(\frac{2}{L}\bigg)^{\frac{p}{2}-1}\bigg(1+c \sqrt{\frac{r}{n}}\bigg)^2\label{eqn:general:bound}
\end{align}
where the last inequality follows from the fact that
$\|\mU_{S^c}^\top \x^\star\|_2^2\leq\|\mU_{S^c}\|^2\leq\|\mU\|^2\leq (1+c \sqrt{\frac{r}{n}})^2$ by Assumption II. Furthermore, since $(1+c \sqrt{\frac{r}{n}})^2 \le 4 \max\{1,c^2\frac{r}{n}\}$,   $c^2\frac{r}{n} \le c^2 n^{0.25-1.5r_c}$ from the condition of $r\leq n^{1.25-1.5r_c}$, and  $1\ll  c^2 n^{0.25-1.5r_c}$ for $r_c\in(0,1/6)$, we have
$(1+c \sqrt{\frac{r}{n}})^2 \le 4c^2 n^{0.25-1.5r_c}$ for $r_c\in(0,1/6)$. So from \eqref{eqn:general:bound}, we get
\begin{equation}\label{bound:p=3and4}
\|\mU_{S^c}^\top \x^\star\|_p^p \le 4\bigg(\frac{2}{L}\bigg)^{\frac{p}{2}-1}c^2 n^{0.25-1.5 r_c}.
\end{equation}

From \eqref{eqn:E:1:b}, \eqref{eqn:first:bound}, and \eqref{bound:p=3and4}, we have
\begin{equation*}
\|\mU^\top \x^\star\|_p^p \le 1+(L-1)\frac{\tau(\log n)}{\sqrt n} + 4\bigg(\frac{2}{L}\bigg)^{\frac{p}{2}-1}c^2 n^{0.25-1.5 r_c}.
\end{equation*}
By choosing
$
L=\left\lceil\frac{1}{2}n^{0.5-r_c}\right\rceil\Rightarrow
\begin{cases}L\leq \frac{1}{2}n^{0.5-r_c}+1\\
L\geq \frac{1}{2}n^{0.5-r_c}\end{cases}
$
(which satisfies the condition  \eqref{eqn:cond:L}), we have that
\begin{align*}
\|\mU^\top \x^\star\|_p^p
&\le 1+\frac{1}{2}\tau(\log n) n^{-r_c} + 4^{\frac{p}{2}} c^2 n^{(\frac{3}{4}-\frac{p}{4})+(\frac{p}{2}-\frac{5}{2})r_c}.
\end{align*}
Then from the assumptions $p\geq 3$ and $r_c\in(0,\frac{1}{6})$, we get
\begin{align}
\left(\frac{3}{4}-\frac{p}{4}\right)+\left(\frac{p}{2}-\frac{5}{2}\right)r_c\leq \left(\frac{3}{4}-\frac{p}{4}\right)6r_c+\left(\frac{p}{2}-\frac{5}{2}\right)r_c=(2-p)r_c\leq -r_c.
\label{eqn:r_c}
\end{align}
So, we have
$$
\|\mU^\top \x^\star\|_p^p \le 1+\left(\frac{1}{2}\tau(\log n)  + 4^{\frac{p}{2}} c^2 \right) n^{-r_c}.
$$
Since $4^{\frac{p}{2}} c^2\ll \frac{1}{2}\tau(\log n)$  and $(1+t)^{1/p} \le 1+\frac{1}{p}t$ for all $t\ge 0$, then
$$
\|\mU^\top \x^\star\|_p \le 1+ \frac{1}{p} \tau(\log n) n^{-r_c}
$$
holds for any $ p \ge 3$. This completes the proof since $\|\mU^\top \|_{2\to p}=\|\mU^\top \x^\star\|_p$ by \eqref{eqn:E:1:a}.

\section{Proof of Lemma \ref{lem:Parameterize}}\label{App:proof:lem:Parameterize}

\begin{proof}
We start by the angular dual polynomial 
\begin{align*}
q(\u(\theta_1),\v(\theta_2),\w(\theta_3))
=& \cos(\theta_1)\cos(\theta_2)\cos(\theta_3)
\\ 
&+q(\u_1^\star,\y,\z)\cos(\theta_1)\sin(\theta_2)\sin(\theta_3)+q(\x,\v_1^\star,\z)\sin(\theta_1)\cos(\theta_2)\sin(\theta_3)\nonumber\\
&+q(\x,\y,\w_1^\star)\sin(\theta_1)\sin(\theta_2)\cos(\theta_3)+q(\x,\y,\z)\sin(\theta_1)\sin(\theta_2)\sin(\theta_3).
\end{align*}
To bound $q$, we only need to bound the coefficients $q(\u_1^\star,\y,\z)$, $q(\x,\v_1^\star,\z)$, $q(\x,\y,\w_1^\star)$, and $q(\x,\y,\z)$.

We first show that $q(\u_1^\star,\y,\z)$, $q(\x,\v_1^\star,\z)$, and $q(\x,\y,\w_1^\star)$ are close to zero. To see this, we examine
\begin{align*}
&q(\x,\y,\w_1^\star)
= \sum_{p=1}^r [ \langle \valpha^\star_{p} ,\x \rangle \langle \v_{p}^{\star} ,\y
\rangle \langle \w_{p}^{\star} ,\w_1^\star \rangle
+ \langle \u_{p}^{\star} ,\x \rangle
\langle \vbeta^\star_{p} ,\y \rangle \langle \w_{p}^{\star} ,\w_1^\star \rangle
+ \langle
\u_{p}^{\star} ,\x \rangle \langle \v_{p}^{\star} ,\y \rangle \langle \vgamma^\star_{p}
,\w_1^\star \rangle]\\
=& \x^\top [\mA\diag(\mW^\top \w_1^\star)\mV^\top + \mU\diag(\mW^\top \w_1^\star)\mB^{\top}
+ \mU\diag(\mC^{\top} \w_1^\star)\mV^\top]\y\\
=& \x^\top \bigg(\mA\diag(\mW^\top \w_1^\star)\mV^\top-\frac{1}{3}\u_1^\star\v_1^\star
+ \mU\diag(\mW^\top \w_1^\star)\mB^{\top}-\frac{1}{3}\u_1^\star\v_1^\star
+ \mU\diag(\mC^{\top} \w_1^\star)\mV^\top -\frac{1}{3}\u_1^\star\v_1^\star\bigg)\y,
\end{align*}
since $\x\perp\u_1^\star,\y\perp\v_1^\star.$ This implies
\begin{align*}
|q(\x,\y,\w_1^\star)|
\leq &\left\|\mA\diag(\mW^\top \w_1^\star)\mV^\top-\frac{1}{3}\u_1^\star\v_1^\star\right\|
\\
&+ \left\|\mU\diag(\mW^\top \w_1^\star)\mB^{\top}-\frac{1}{3}\u_1^\star\v_1^\star\right\|
+ \left|\x^\top \left(\mU\diag(\mC^{\top} \w_1^\star)\mV^\top -\frac{1}{3}\u_1^\star\v_1^\star \right)\y\right|.
\end{align*}

We first bound $\left\|\mA\diag(\mW^\top \w_1^\star)\mV^\top-\frac{1}{3}\u_1^\star\v_1^\star\right\|$.
\begin{align}
\left\|\mA\diag(\mW^\top \w_1^\star)\mV^\top - \frac{1}{3}\u_1^\star \v_1^{\star\top}\right\|
\leq & \left\|\mA\diag(\mW^\top \w_1^\star)\mV^\top - \frac{1}{3} \mU\tmop{diag}(\mW^\top \w_1^\star)\mV^\top\right\| + \left\|\frac{1}{3}\mU\tmop{diag}(\mW^\top \w_1^\star)\mV^\top-\frac{1}{3}\u_1^\star \v_1^{\star\top}\right\|
\nn	\\
\leq & \left\|\mA - \frac{1}{3} \mU\right\| \|\tmop{diag}(\mW^\top \w_1^\star)\|\|\mV\| +\frac{1}{3} \|\mU\|\|\tmop{diag}(\mW^\top \w_1^\star-{\e_1})\|\mV^\top \|\nn\\
\leq & 2 \kappa(\log n) \left(\frac{\sqrt{r}}{n} + c \frac{r}{n^{1.5}}\right) \left(1 + c \sqrt{\frac{r}{n}}\right) +\frac{\tau(\log n)}{3\sqrt{n}} \bigg(1+c\sqrt{\frac{r}{n}}\bigg)^2	\nn\\
=&\left[2 \kappa(\log n) \frac{\sqrt{r}}{n} + \frac{\tau(\log n)}{3\sqrt{n}} \right]\left(1 + c \sqrt{\frac{r}{n}}\right)^2,
\nn
\end{align}
where the third inequality first
uses the facts $\|\tmop{diag}(\mW^\top \w_1^\star)\|=1$ and $\|\tmop{diag}(\mW^\top \w_1^\star-{\e_1})\|= \max_{p\neq 1}|\lg\w_p^\star,\w_1^\star\rg|$ and then follows from
Assumptions I and II and Lemma \ref{lem:Bound_coef}.

Similarly,
\begin{align*}
&\left\|\mU\diag(\mW^\top \w_1^\star)\mB^{\top}-\frac{1}{3}\u_1^\star\v_1^\star\right\|
\leq\left[2 \kappa(\log n) \frac{\sqrt{r}}{n} + \frac{\tau(\log n)}{3\sqrt{n}} \right]\left(1 + c \sqrt{\frac{r}{n}}\right)^2.
\end{align*}

The similar arguments also apply to bounding $|\x^\top(\mU\diag(\mC^{\top} \w_1^\star)\mV^\top -\frac{1}{3}\u_1^\star\v_1^\star)\y|$. Note that
\begin{align*}
\x^\top\left(\mU^\star\diag(\mC^{\top} \w_1^\star)\mV^{\top} - \frac{1}{3} \u_1^\star\v_1^{\star\top}\right)\y
=& \x^\top(\mU\diag((\mC-\mW/3)^\top \w_1^\star)\mV^{\top})\y
+\frac{1}{3}\x^\top(\mU\diag(\mW^{\top} \w_1^\star-\e_1)\mV^\top)\y
\end{align*}
and the first term can be rewritten as
\begin{align*}
\x^\top(\mU\diag((\mC-\mW/3)^\top \w_1^\star)\mV^\top)\y
&=\sum_{i=1}^r \x^\top\left( (\c_i-\w_i/3)^\top\w_1^\star\u_i\v_i^\top \right)\y
\\
&=\sum_{i=1}^r   (\x^\top\u_i)(\v_i^\top\y)(\c_i-\w_i/3)^\top\w_1^\star)
\\
&=  \x^\top\sum_{i=1}^r \left(\u_i(\v_i^\top\y)(\c_i-\w_i/3)^\top\right)\w_1^\star
\\
&= \x^\top\left(\mU\diag( \mV^\top\y)(\mC-\mW/3)^\top\right)\w_1^\star,
\end{align*}
and so
\begin{align*}
\left|\x^\top\left(\mU^\star\diag(\mC^{\top} \w_1^\star)\mV^{\top} - \frac{1}{3} \u_1^\star\v_1^{\star\top}\right)\y\right|
\leq& \|\mU\|\|\diag( \mV^{\top}\y)\|\|\mC-\mW/3\|+
\frac{1}{3}\|\mU\|\|\diag(\mW^{\top} \w_1^\star-\e_1)\|\mV^{\top}\|.
\end{align*}

Finally, we obtain
\begin{align*}
|q(\x,\y,\w_1^\star)|
\leq \left[6 \kappa(\log n) \frac{\sqrt{r}}{n} + \frac{\tau(\log n)}{\sqrt{n}} \right]\left(1 + c \sqrt{\frac{r}{n}}\right)^2
=&O\left(\frac{\kappa(\log n)\sqrt r}{n},\frac{\tau(\log n)}{\sqrt n},\frac{\kappa(\log n)r^{1.5}}{n^2},\frac{\tau(\log n)r}{n^{1.5}}\right)
\\
=&O\left( \frac{\kappa(\log n)}{n^{3/8+\frac{3}{4}r_c}},\frac{\tau(\log n)}{n^{5/8-\frac{3}{4}r_c}}, \frac{\kappa(\log n)}{n^{1/8+\frac{9}{4}r_c}},\frac{\tau(\log n)}{n^{\frac{1}{4}+1.5 r_c}}\right)
\\
= &O( \kappa(\log n)n^{-3r_c},\tau(\log n)n^{-3r_c})
=  o(n^{-2r_c})
\end{align*}
with the notation $O(f(n),g(n)):=\max\{O(f(n)),O(g(n))\}$.
The the last second line holds if $r\leq n^{1.25-1.5 r_c}$ and the last line follows from the assumption $r_c\in(0,1/6)$.

The same bound holds for $|q(\x,\v_1^\star,\z)|$ and $|q(\u_1^\star,\y,\z)|$.

The coefficient of the last term of \eqref{eqn:angular:form}  is $q(\x,\y,\z)$ and its absolute value is bounded by the tensor spectral norm of $\tQ$, and should be close to constant as $\tQ$ is close to $\sum_{p=1}^r \u_p^\star\otimes \v_p^\star \otimes \w_p^\star$, the spectral norm of which is $1+O(n^{-r_c})$ by the following lemma.

\begin{lemma}\label{lem:D:2}
	Under  Assumptions I and II, and if $r \leq n^{1.25-1.5 r_c}$ with $r_c\in(0,1/6)$,
	\begin{align*}
	\Bigg\|\sum_{p=1}^r \u_p^\star\otimes \v_p^\star \otimes \w_p^\star\Bigg\|\leq 1+\frac{5}{4}\tau(\log n) n^{-r_c}.
	\end{align*}
\end{lemma}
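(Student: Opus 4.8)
The plan is to estimate the tensor spectral norm directly from its variational characterization. By the definition recalled earlier,
\[
\Bigg\|\sum_{p=1}^r \u_p^\star\otimes\v_p^\star\otimes\w_p^\star\Bigg\|
= \sup_{(\u,\v,\w)\in\K}\ \sum_{p=1}^r \langle\u_p^\star,\u\rangle\,\langle\v_p^\star,\v\rangle\,\langle\w_p^\star,\w\rangle,
\]
so it suffices to bound the trilinear sum at an arbitrary fixed $(\u,\v,\w)\in\K$.

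Writing $a_p=\langle\u_p^\star,\u\rangle$, $b_p=\langle\v_p^\star,\v\rangle$, $c_p=\langle\w_p^\star,\w\rangle$, I would apply H\"older's inequality with the three conjugate exponents $(3,3,3)$,
\[
\sum_{p=1}^r a_p b_p c_p \le \sum_{p=1}^r |a_p b_p c_p| \le \Big(\sum_p|a_p|^3\Big)^{1/3}\Big(\sum_p|b_p|^3\Big)^{1/3}\Big(\sum_p|c_p|^3\Big)^{1/3} = \|U^\top\u\|_3\,\|V^\top\v\|_3\,\|W^\top\w\|_3.
\]
Since $\|\u\|_2=\|\v\|_2=\|\w\|_2=1$, each factor is at most the corresponding operator norm, e.g.\ $\|U^\top\u\|_3\le\|U^\top\|_{2\rightarrow 3}$, and taking the supremum over $\K$ gives
\[
\Bigg\|\sum_{p=1}^r \u_p^\star\otimes\v_p^\star\otimes\w_p^\star\Bigg\|\le \|U^\top\|_{2\rightarrow 3}\,\|V^\top\|_{2\rightarrow 3}\,\|W^\top\|_{2\rightarrow 3}.
\]

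It then remains to invoke Lemma \ref{lem:p_norm}, which under Assumptions I and II and the hypothesis $r\le n^{1.25-1.5r_c}$ with $r_c\in(0,1/6)$ bounds each of $\|U^\top\|_{2\rightarrow 3},\|V^\top\|_{2\rightarrow 3},\|W^\top\|_{2\rightarrow 3}$ by $1+\tfrac13\tau(\log n)n^{-r_c}$. Cubing, $\big(1+\tfrac13\tau(\log n)n^{-r_c}\big)^3 = 1+\tau(\log n)n^{-r_c}+\tfrac13\tau(\log n)^2 n^{-2r_c}+\tfrac1{27}\tau(\log n)^3 n^{-3r_c}$; since $\tau$ is polynomial in $\log n$ the last two terms are $o\big(\tau(\log n)n^{-r_c}\big)$, so the product is at most $1+2\tau(\log n)n^{-r_c}$ for $n$ sufficiently large, which is exactly the asserted bound.

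There is no genuine obstacle internal to this lemma: H\"older plus elementary algebra do everything once Lemma \ref{lem:p_norm} is available. The real work sits one layer down, in Lemma \ref{lem:p_norm} itself — controlling the $2\rightarrow 3$ matrix norm of $U^\top$, equivalently $\sup_{\|\u\|_2=1}\sum_p|\langle\u_p^\star,\u\rangle|^3$, from the incoherence bound (Assumption I) and the spectral bound (Assumption II), presumably via an $\varepsilon$-net argument together with a union bound or a direct moment estimate — which is also where the precise constraint $r\le n^{1.25-1.5r_c}$ originates.
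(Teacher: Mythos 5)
Your argument is correct and is essentially the paper's own proof of Lemma~\ref{lem:D:2}: both rewrite the tensor spectral norm as a supremum of the trilinear form, apply the three-factor H\"older inequality with exponents $(3,3,3)$ to reduce to the $2\to3$ operator norms of $U^\top, V^\top, W^\top$, and then invoke Lemma~\ref{lem:p_norm} and expand the cube $\big(1+\tfrac13\tau(\log n)n^{-r_c}\big)^3$. The paper merely derives the three-factor H\"older bound from two applications of two-factor H\"older rather than citing it directly, a cosmetic difference.
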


\begin{proof}[Proof of Lemma \ref{lem:D:2}]
	\begin{align*}
	\bigg\|\sum_{p=1}^r \u_p^\star\otimes \v_p^\star \otimes  \w_p^\star \bigg\|
	=&   \sup_{(\a,\b,\c)\in \mathbb{K}}
	\lg \mU^\top\a, (\mV^\top\b)\odot(\mW^\top\c)\rg
	\\
	\le&   \sup_{(\a,\b,\c)\in \mathbb{K}}
	\| \mU^\top\a\|_3 \|(\mV^\top\b)\odot(\mW^\top\c)\|_{3/2}
	\\
	\le&   \sup_{(\a,\b,\c)\in \mathbb{K}}
	\| \mU^\top\a\|_3\| \mV^\top\b\|_3\| \mW^\top\v\|_3
	\\
	\leq& \|\mU^\top \|_{2\to3}\|\mV^\top \|_{2\to3}\|\mW^\top \|_{2\to3}\nonumber\\
	\leq& \left(1+ \frac{1}{3}\tau(\log n) n^{-r_c}\right)^3
	\\
	=& 1+ \tau(\log n)n^{-r_c}+\frac{1}{3}\tau(\log n)^2n^{-r_c}+\frac{1}{9}\tau(\log n)^3n^{-3r_c}
	\leq 1+\frac{5}{4}\tau(\log n) n^{-r_c}\nn,
	\end{align*}
	where the first inequality follows from \textsf{H\"{o}lder's inequality} and the second follows from \textsf{Cauchy's inequality}. The fourth  follows from Lemma \ref{lem:p_norm} when
	$r \leq n^{1.25-1.5 r_c}$ with $r_c\in(0,\frac{1}{6}).$ The last  holds since
	$\frac{1}{3}\tau(\log n)^2n^{-r_c}+\frac{1}{9}\tau(\log n)^3n^{-3r_c}\ll\frac{1}{4}n^{-r_c}.$
\end{proof}

\bigskip

It remains to bound the difference between $\tQ$ and $\sum_{p=1}^r \u_p^\star\otimes
\v_p^\star \otimes \w_p^\star$:
\begin{align*}
&\bigg\|\mQ-\sum_{p=1}^r \u_p^\star\otimes\v_p^\star \otimes \w_p^\star\bigg\|
\\
&\leq \underbrace{\bigg\|\sum_{p=1}^r (\valpha_p^{\star} - \frac{1}{3} \u_p^\star)\otimes \v_p^\star \otimes \w_p^\star\bigg\|}_{\Pi_1}
+ \underbrace{\bigg\|\sum_{p=1}^r \u_p^\star \otimes (\vbeta_p^{\star} - \frac{1}{3} \v_p^\star) \otimes \w_p^\star\bigg\|}_{\Pi_2}
+\underbrace{\bigg\|\sum_{p=1}^r \u_p^\star\otimes  \v_p^\star \otimes (\vgamma_p^{\star} - \frac{1}{3} \w_p^\star)\bigg\|}_{\Pi_3}
\end{align*}
First we bound $\Pi_1$:
\begin{align}
\Pi_1
&= \sup_{(\a,\b,\c)\in\bbK}
\lg (\mA-\frac{1}{3}\mU)^\top\a,
(\mV^\top\b)\odot(\mW^\top\c)\rg
\nn\\
&\leq \sup_{(\a,\b,\c)\in \mathbb{K}} \|(\mA-\frac{1}{3}\mU)^\top\x\|_2
\|(\mV^\top\b)\odot(\mW^\top\c)\|_2
\nn\\
&\leq  \sup_{(\a,\b,\c)\in \mathbb{K}}\|(\mA-\frac{1}{3}\mU)^\top\x\|_2
\|(\mV^\top\b)\|_4\|(\mW^\top\c)\|_4
\nn\\
&\leq \|\mA-\frac{1}{3}\mU\|
\|\mV^\top\|_{2\to4}\|\mW^\top\|_{2\to4}
\nn\\
&\leq 2\kappa(\log n)\bigg(\frac{\sqrt r}{n}+c{\frac{r}{n^{1.5}}}\bigg)(1+o(1))
\leq  8\kappa(\log n)\max\Bigg\{\frac{\sqrt r}{n},c{\frac{r}{n^{1.5}}}\Bigg\}
\leq 8\kappa(\log n) n^{-3r_c}
= o(n^{-2r_c})
\nn
\end{align}
where the first and second inequalities follows from \textsf{Cauchy's inequality} 	
and the fourth inequality follows from
Lemma \ref{lem:Bound_coef} and Lemma  \ref{lem:p_norm} when $r\ll n^{1.25}$.
The last inequality follows by plugging $r\leq n^{1.25-1.5 r_c}$ with $r_c\in(0,\frac{1}{6})$.

The same bound also holds for $\Pi_2$ and $\Pi_3 $.

\bigskip
\noindent{\bf Combining All }  If $r\leq n^{1.25-1.5 r_c} \text{ with } r_c\in(0,{1}/{6})$, we have
\begin{equation}
\begin{aligned}
& |q(\u_1^\star,\y,\z)| = o(n^{-2r_c}),
\\
&|q(\x,\v_1^\star,\z)| = o(n^{-2r_c}),
\\
& |q(\x,\y,\w_1^\star)| = o(n^{-2r_c}),
\\
&|q(\x,\y,\z)| \leq 1+\frac{5}{4}\tau(\log n) n^{-r_c}+ o(n^{-2r_c})
\end{aligned}
\label{eqn:est:qxyz}
\end{equation}
which together with \eqref{eqn:angular:form}  gives
\begin{align*}
|q(\u(\theta_1),\v(\theta_2),\w(\theta_3))|
\leq& |\cos(\theta_1)\cos(\theta_2)\cos(\theta_3)|+ |\sin(\theta_1)\sin(\theta_2)\sin(\theta_3)| +\frac{5}{4}\tau(\log n) n^{-r_c}+ o(n^{-2r_c})\\
\leq& |\cos(\theta_1)\cos(\theta_2)\cos(\theta_3)|
+ |\sin(\theta_1)\sin(\theta_2)\sin(\theta_3)|+\frac{4}{3}\tau(\log n) n^{-r_c}
\end{align*}
where the last inequality follows from $o(n^{-2r_c})\ll \frac{1}{12}\tau(\log n)n^{-r_c}$.
\end{proof}

\section{Proof of Lemma \ref{lem:near:vertex:region}}\label{App:proof:lem:near:region}

\begin{proof}

Recall that	
\begin{equation}\label{eqn:expression:F}
\begin{aligned}
F(\theta_1, \theta_2, \theta_3)
=& \cos(\theta_1)\cos(\theta_2)\cos(\theta_3)
    +q(\u_1^\star,\y,\z)\cos(\theta_1)\sin(\theta_2)\sin(\theta_3)    
    \\
    &+q(\x,\v_1^\star,\z)\sin(\theta_1)\cos(\theta_2)\sin(\theta_3)
    +q(\x,\y,\w_1^\star)\sin(\theta_1)\sin(\theta_2)\cos(\theta_3)  
    \\
    &  +q(\x,\y,\z)\sin(\theta_1)\sin(\theta_2)\sin(\theta_3).
\end{aligned}
\end{equation}

The points of special interest are the eight vertices of the cube $[0,\pi]\times[0,\pi]\times[0,\pi]$, \emph{i.e.,}
\[\{(\theta_1,\theta_2,\theta_3): \theta_i\in\{0,\pi\}, i=1,2,3\}\]
which we classify into two sets:
\begin{enumerate}
	\item[(1)] The first set of vertices involve an even number of $\pi$: $(0,0,0),(0,\pi,\pi),(\pi,0,\pi),(\pi,\pi,0)$;
	 
	\item[(2)]  The second set of vertices involve an odd number of $\pi$: $(\pi,0,0),(0,\pi,0),(0,0,\pi),(\pi,\pi,\pi)$.
\end{enumerate}

\subsection{Control the First Vertex Set}
For the first set of points, we only show that
\begin{align*}
&F(\theta_1+\xi_1,\theta_2+\xi_2,\theta_3+\xi_3) \leq1,\quad
\forall \xi_i\in\bigg(-\frac{\sqrt 2-1}{3},\frac{\sqrt 2-1}{3}\bigg)
\bigcup
\bigg(\frac{\pi}{2}-\frac{\sqrt 2-1}{3},\frac{\pi}{2}+\frac{\sqrt 2-1}{3}\bigg)
\end{align*}
holds for $(\theta_1,\theta_2,\theta_3)=(0,0,0)$. The same arguments apply to the other cases $(\pi,0,\pi),(0,\pi,\pi),(\pi,\pi ,0)$ since \eqref{eqn:expression:F} implies
\begin{equation}\nn
\begin{aligned}
F(\xi_1,\xi_2,\xi_3)&=F(\xi_1,\pi+\xi_2,\pi+\xi_3)
=F(\pi+\xi_1,\xi_2,\pi+\xi_3)
=F(\pi+\xi_1,\pi+\xi_2,\xi_3)
\end{aligned}
\end{equation}
for all $\xi_1,\xi_2\,\xi_3\in\R.$

Let us apply the first-order Taylor expansion to $F(\theta_1, \theta_2, \theta_3)$ over some smaller cube
$[-\theta_0, \theta_0]\times[-\theta_0, \theta_0]\times[-\theta_0, \theta_0]$ with $\theta_0 \in (0, \pi/2)$
to be determined later,
\begin{align*}
F(\theta_1, \theta_2, \theta_3)
=&     F(0,0,0) + \vtheta^\top \nabla F(\xi_1,\xi_2,\xi_3)
\geq  1 - \|\vtheta\|_1\sup_{|\xi_1|,|\xi_2|,|\xi_3|\leq \theta_0}\|\nabla F(\xi_1,\xi_2,\xi_3)\|_\infty
,\end{align*}
where $\vtheta = \begin{bmatrix}
\theta_1&
\theta_2&
\theta_3
\end{bmatrix}^\top$.
Since
\begin{align*}
\frac{\partial}{\partial \theta_1} F(\xi_1,\xi_2,\xi_3)
= &-\sin(\xi_1)\cos(\xi_2)\cos(\xi_3)
  -q(\u_1^\star,\y,\z)\sin(\xi_1)\sin(\xi_2)\sin(\xi_3)  \\
  &+q(\x,\v_1^\star,\z)\cos(\xi_1)\cos(\xi_2)\sin(\xi_3)  +q(\x,\y,\w_1^\star)\cos(\xi_1)\sin(\xi_2)\cos(\xi_3) 
  \\
  & +q(\x,\y,\z)\cos(\xi_1)\sin(\xi_2)\sin(\xi_3),
\end{align*}	
we have
\begin{align}
\bigg|\frac{\partial}{\partial \theta_1} F(\xi_1,\xi_2,\xi_3)\bigg|
\leq& |\sin(\theta_0)|+ o(1) (|\sin(\theta_0)|^3+2|\sin(\theta_0)|)
+(1+o(1))|\sin(\theta_0)|^2
\nn\\
\leq& |\sin(\theta_0)|+|\sin(\theta_0)|^2+o(1)
\leq 3|\sin(\theta_0)|\nn
\end{align}
where the first inequality follows from  \eqref{eqn:est:qxyz}, and so
\begin{equation}
\begin{aligned}
|q(\u_1^\star,\y,\z)|&=o(1),
\\
|q(\x,\v_1^\star,\z)|&=o(1), 
\\
|q(\x,\y,\w_1^\star)|&=o(1), 
\\
|q(\x,\y,\z)|&=1+o(1)
\end{aligned}
\label{eqn:q:xyz:o1}
\end{equation}
under Assumptions I-III and $r\ll n^{1.25}$ (by letting $r_c$ in ``$r\ll n^{1.25-r_c}$" approach to zero).
The last inequality uses the facts that $|\sin(\theta_0)|^2\leq |\sin(\theta_0)|$
and $o(1)\leq |\sin(\theta_0)|$ for sufficiently large $n$.
The same bound holds for $\big|\frac{\partial}{\partial \theta_2} F(\xi_1,\xi_2,\xi_3)\big|$ and
$\big|\frac{\partial}{\partial \theta_3} F(\xi_1,\xi_2,\xi_3)\big|$.
We therefore have
\begin{align}
F(\theta_1, \theta_2, \theta_3) & \geq 1 - 3\|\vtheta\|_1|\sin(\theta_0)|
\geq 1-9\theta_0^2. \label{eqn:lower:F}
\end{align}

Let us recall the integral form of  the second-order Taylor expansion of $F(\theta_1,\theta_2,\theta_3)$:
\begin{align*}
F(\theta_1, \theta_2, \theta_3)
= F(0,0,0) &+ \vtheta^\top \nabla F(0,0,0)+ \int_{0}^1 \frac{t^2}{2}\vtheta^\top \nabla^2F(t\theta_1,t\theta_2,t\theta_3)\vtheta  \dif t
\end{align*}
As a consequence of the construction process of the dual polynomial, we have $F(0,0,0)=1$ and $\nabla F(0,0,0)=0$, implying
$$
F(\theta_1, \theta_2, \theta_3)  =1 + \int_{0}^1 \frac{t^2}{2}\vtheta^\top \nabla^2F(t\theta_1,t\theta_2,t\theta_3)\vtheta \dif t
$$
Therefore, as long as the Hessian matrix $\nabla^2F$  is negative definite over the region $[-\theta_0,\theta_0]^3$ for some  $\theta_0>0$,
then $F(\theta_1, \theta_2, \theta_3)\leq 1$ for any $(\theta_1, \theta_2, \theta_3)\in[-\theta_0,\theta_0]^3$ with equality holds only if $(\theta_1, \theta_2, \theta_3)=(0,0,0)$.

We next estimate the Hessian matrix $\nabla^2 F(\xi_1,\xi_2,\xi_3)$. Direct computation gives
\begin{align*}
\nabla^2 F(\xi_1,\xi_2,\xi_3)
=   & \begin{bmatrix}
-F(\xi_1,\xi_2,\xi_3) & * & *\\
* & -F(\xi_1,\xi_2,\xi_3) & * \\
* & * & -F(\xi_1,\xi_2,\xi_3)
\end{bmatrix}
\end{align*}
whose off-diagonal elements are nonsymmetric partial derivatives of $F$, for example,
\begin{align*}
\frac{\partial^2}{\partial \theta_1\partial \theta_2} F(\xi_1,\xi_2,\xi_3)
= \sin(\xi_1) \sin(\xi_2)\cos(\xi_3)
 &- q(\u_1^\star,\y,\z)\sin(\xi_1)\cos(\xi_2) \sin(\xi_3)
 \\
 &+ q(\x,\y,\w_1^\star) \cos(\xi_1)\cos(\xi_2)\cos(\xi_3)
 \\
& - q(\x,\v_1^\star,\z) \cos(\xi_1)\sin(\xi_2)\sin(\xi_3) \\
&+ q(\x,\y,\z) \cos(\xi_1)\cos(\xi_2)\sin(\xi_3),
\end{align*}
which implies by \eqref{eqn:q:xyz:o1} that
for any $|\xi_i|\le \theta_0,i=1,2,3$,
\begin{align*}
\bigg|\frac{\partial^2}{\partial \xi_1\partial \xi_2} F(\xi_1,\xi_2,\xi_3)\bigg|
\leq& |\sin(\theta_0)|^2 + o(1)(1+2|\sin(\theta_0)|^2) + (1+o(1))|\sin(\theta_0)| 
\\
\leq& |\sin(\theta_0)|+|\sin(\theta_0)|^2+o(1)
\leq 3|\sin(\theta_0)|.  \nn
\end{align*}
The same bound holds for other mixed partial derivatives
$ \big|\frac{\partial^2}{\partial \xi_i\partial \xi_j} F(\xi_1,\xi_2,\xi_3)\big|$ with $i,j=1,2,3$ and $i\neq j$.

To make $\nabla^2 F(\xi_1,\xi_2,\xi_3)$ negative definite, by \textsf{Gershgorin's circle theorem} and the bound \eqref{eqn:lower:F}, we only need
\begin{align*}
-F(\xi_1,\xi_2,\xi_3)+6|\sin(\theta_0)| \leq -1+9\theta_0^2+ 6\theta_0 < 0
\end{align*}
which holds for any $\theta_0\in(\frac{-\sqrt 2-1}{3},\frac{\sqrt 2-1}{3})$, including
$(\frac{-\sqrt 2+1}{3} ,\frac{\sqrt 2-1}{3})$. This completes the first part of the proof.

\subsection{Control the Second Vertex Set}
Similarly as before, we first show
\[
F(\pi+\xi_1,\pi+\xi_2,\pi+\xi_3)<0,~\forall |\xi_i|<\frac{\sqrt 2-1}{3}.
\]
It follows from the intermediate result  \eqref{eqn:lower:F}:
\[
F(\xi_1,\xi_2,\xi_3)  \geq 1-9\theta_0^2>0,~\forall  |\xi_i|\leq\theta_0
\]
by recognizing that $F(\pi+\xi_1,\pi+\xi_2,\pi+\xi_3)=-F(\xi_1,\xi_2,\xi_3), \forall \xi_1,\xi_2,\xi_3$ and choosing  $\theta_0={(\sqrt 2-1)}/{3}.$

Finally, we claim the same conclusion applies to the remaining three cases since
\[
F(\pi+\xi_1,\pi+\xi_2,\pi+\xi_3)=F(\pi+\xi_1, \xi_2, \xi_3)
=F( \xi_1,\pi+\xi_2, \xi_3)
=F( \xi_1, \xi_2,\pi+\xi_3)
\]
for all $\xi_1,\xi_2,\xi_3\in\R.$
\end{proof}

\section{Proof of Lemma \ref{lem:near:band:region}}\label{App:proof:lem:middle:region}

\begin{proof}

First, solve for $\theta$  such that
\begin{align}\label{eqn:sin:cos:bound}
|\cos(\theta)^3| + |\sin(\theta)|^3 < 1- 4\tau(\log n) n^{-r_c}.
\end{align}
To this end, we define $f(\theta):=|\cos(\theta)^3| + |\sin(\theta)|^3$ for $\theta \in [0, \pi]$. It can be verified directly that $f$ is symmetric around $\frac{\pi}{2}$ on $[0, \pi]$, symmetric around $\frac{\pi}{4}$ on $[0, \frac{\pi}{2}]$, and strictly decreasing on $[0, \frac{\pi}{4}]$. Since $1- 4\tau(\log n) n^{-r_c}\in(0,1)$, there exists a unique $\varpi \in (0, \frac{\pi}{4})$ such that $f(\varpi)=1- 4\tau(\log n) n^{-r_c}\in(0,1)$. Thus the inequality \eqref{eqn:sin:cos:bound} holds on $(\varpi, \frac{\pi}{2}-\varpi) \cup (\frac{\pi}{2}+\varpi, \pi-\varpi)$.

To have an approximation of $\varpi$, we need the following lemma.

\begin{lemma}\label{lem:fact}
	Let  $f$ and $g$ be any two real functions with  $g$ being strictly decreasing in some interval $(\alpha,\beta)$ and satisfying $g(x)\geq f(x),\forall x\in(\alpha,\beta)$. Suppose both equations $f(x)=b$ and $g(x)=b$ admit one root in $[\alpha,\beta]$, denoted by $x_f$ and $x_g$ respectively. Then  $x_g\geq x_f$.
\end{lemma}
\begin{proof}[Proof of Lemma \ref{lem:fact}]
	Since $g(x)>g(x_f)\geq f(x_f)=b$ for any $x\in[\alpha,x_f)$,  $g(x_g)=b$ could only happen within $[x_f,\beta]$.
\end{proof}

We recognize that
\begin{align}\label{eqn:upper:bound}
f(\theta)\leq 1-\frac{3}{20} \theta^2 ,\text{ for } \theta \in [0,\pi/4]
\end{align}
and $g(\theta):=1-\frac{3}{20}\theta^2$ is strictly deceasing $[0,\pi/4]$. Clearly,
$$
\delta_b:=\sqrt{\frac{80\tau(\log n)}{3}}n^{-0.5 r_c}
$$
is the root of $g(\theta)=1- 4\tau(\log n) n^{-r_c}$ over the interval $[0, \frac{\pi}{4}]$. By Lemma~\ref{lem:fact}, $\delta_b \ge \varpi$. Therefore, \eqref{eqn:sin:cos:bound} holds on $(\delta_b, \frac{\pi}{2}-\delta_b) \cup (\frac{\pi}{2}+\delta_b, \pi-\delta_b)$. By \eqref{eqn:require:theta1:0}, we obtain
$F(\theta_1,\theta_2,\theta_3)<1 \text{ for }(\theta_1,\theta_2,\theta_3)\in\bbN_b(\delta_b).
$
\end{proof}

	\subsection{Proof of (\ref{eqn:upper:bound})}
Showing \eqref{eqn:upper:bound}
is equivalent to showing
\begin{align}
\sin^3(x)+\cos^3(x)\leq 1-\frac{3}{20}x^2,~\forall x\in[0,\pi/4]
\label{eqn:sin:cos:final}
\end{align}
since $\sin(x),\cos(x)>0$ for $x\in[0,\pi/4]$. Before moving on, we need the following lemma to prove \eqref{eqn:sin:cos:final}.
	\begin{lemma}\label{pair}
		The following inequality
		\begin{equation}\label{eq1}
		\frac{(3^{2n-1}-3)}{4\cdot (2n-1)!}x^{2n-1}+ \frac{(3^{2n}+3)}{4\cdot (2n)!}x^{2n}-\frac{(3^{2n+1}-3)}{4\cdot (2n+1)!}x^{2n+1}- \frac{(3^{2n+2}+3)}{4\cdot (2n+2)!}x^{2n+2} \ge 0
		\end{equation}
		holds for all $x \in [0, \pi/4]$ and $n \ge 2$,
	\end{lemma}
\begin{proof}
	Let $p$ equal the expression on the left side of Equation~\eqref{eq1}. A simplification on $p$ yields
	\[
	p(x)= q_1(x) \frac{x^{2n-1}}{4(2n-1)!} + q_2(x)\frac{x^{2n+2}}{4(2n)!},
	\]
	where $\displaystyle q_1(x) = (3^{2n-1}-3) -  \frac{3^{2n+1}-3}{2n(2n+1)}x^2$ and $\displaystyle q_2(x) = (3^{2n}+3) - \frac{3^{2n+2}+3}{(2n+1)(2n+2)}x^2$.
	 
	As functions of $x$, $q_1$ and $q_2$ have roots at
	\[
	\pm \sqrt{\frac{2n(2n+1)(3^{2n-1}-3)}{3^{2n+1}-3}} \quad \text{and} \quad \pm\sqrt{\frac{(2n+1)(2n+2)(3^{2n}+3)}{3^{2n+2}+3}},
	\] respectively, provided $n\geq 1$.  
	
	Since
	\begin{align*}
	   10(3^{2n-1}-3) &\ge 3^{2n+1}-3 , \text{ for all }n\geq 2,
	   \\
	   9(3^{2n}+3)&>(3^{2n+2}+3), \text{ for all }n\geq 2,
	\end{align*}
it follows that the positive root of $q_1$ satisfies
	\[ \sqrt{\frac{2n(2n+1)(3^{2n-1}-3)}{3^{2n+1}-3}} \ge \sqrt{\frac{2n(2n+1)}{10}} > \sqrt{2} > \frac{\pi}{4}, \text{ for }n\geq 2,\]
	and the positive root of $q_2$ satisfies
	\[ \sqrt{\frac{(2n+1)(2n+2)(3^{2n}+3)}{3^{2n+2}+3}} > \sqrt{\frac{(2n+1)(2n+2)}{9}} > \sqrt{\frac{10}{3}} > \frac{\pi}{4}, \text{ for }n\geq 2. \]
	Therefore both $q_1$ and  $q_2$ are positive  on $[0,\pi/4]$ for all $n\geq 2$, and Equation~\eqref{eq1} holds.
\end{proof}

\begin{lemma}The following statement
	$$
	\sin^3(x) + \cos^3 (x) \le 1-\frac{3}{20}x^2
	$$
	holds for all $x\in [0, \frac{\pi}{4}]$.
\end{lemma}

\begin{proof}
	Recall that
	\begin{align*}
	 \sin^3(x) &= \frac{1}{4}\left(3\sin(x) - \sin(3x)\right),
	 \\
	 \cos^3(x) &= \frac{1}{4}\left(3\cos(x) + \cos(3x)\right).
	\end{align*}
 Therefore,
	\begin{align*}
	&\sin^3(x) = x^3 + \sum_{n=5}^\infty(-1)^{n} \frac{3^{2n-1}-3}{4(2n-1)!}x^{2n-1},	
	\\
	&\cos^3(x) = 1 - \frac{3}{2}x^2 + \frac{7}{8}x^4 +
	\sum_{n=3}^\infty (-1)^n\frac{3^{2n}+3}{4(2n)!}x^{2n}.
	\end{align*}
	Thus
	\[
	\sin^3(x) + \cos^3(x) \leq 1- \frac{3}{2}x^2 +  x^3  + \frac{7}{8}x^4,
	\]
	for all $x \in [0, \pi/4]$ since by Lemma~\ref{pair}
	\begin{align*}
	&	\sum_{n=3}^\infty(-1)^{n} \frac{3^{2n-1}-3}{4(2n-1)!}x^{2n-1} +  \sum_{n=3}^\infty (-1)^n\frac{3^{2n}+3}{4(2n)!}x^{2n} \\
	=& -\sum_{n=3, \; n\text{ odd}}^\infty \left(\frac{3^{2n-1}-3}{4(2n-1)!}x^{2n-1} +  \frac{3^{2n}+3}{4(2n)!}x^{2n} - \frac{3^{2n+1}-3}{4(2n+1)!}x^{2n+1} - \frac{3^{2n+2}}{4(2n+2)!}x^{2n+2}\right) 
	\\
	\le& 0.
	\end{align*}
	
	Finally, note that
	\[1-\frac{3}{2}x^2+x^3+\frac{7}{8}x^4 = 1-\frac{3}{20}x^2 + x^2h(x)\]
	with
	\[
	h(x)=-\frac{27}{20}+x+\frac{7}{8}x^2\ge 0 \text{ when }x\in[0,\pi/4].
	\]
\end{proof}

\section*{Funding}
This work was supported by the National Science Foundation [DMS-1913039 to L.S., CCF-2203060, CCF-2106834 to G.T.].

\section*{Data Availability Statements}
No new data were generated or analysed in support of this research.

\bibliographystyle{plain}
\bibliography{nonconvex_overcomplete}

\begin{thebibliography}{10}

\bibitem{anandkumar2015learning}
Animashree Anandkumar, Rong Ge, and Majid Janzamin.
\newblock Learning overcomplete latent variable models through tensor methods.
\newblock In {\em Conference on Learning Theory}, pages 36--112. PMLR, 2015.

\bibitem{anandkumar2017analyzing}
Animashree Anandkumar, Rong Ge, and Majid Janzamin.
\newblock {Analyzing tensor power method dynamics in overcomplete regime}.
\newblock {\em Journal of Machine Learning Research}, 18(22):1--40, 2017.

\bibitem{Barak:2014vw}
Boaz Barak, Jonathan~A Kelner, and David Steurer.
\newblock {Dictionary learning and tensor decomposition via the sum-of-squares
  method}.
\newblock In {\em Proceedings of the forty-seventh annual ACM symposium on
  Theory of computing}, pages 143--151. ACM, 2015.

\bibitem{Barvinok:2002vr}
Alexander Barvinok.
\newblock {\em {A Course in Convexity}}.
\newblock American Mathematical Soc., 2002.

\bibitem{Bendory:2014tl}
Tamir Bendory, Shai Dekel, and Arie Feuer.
\newblock {Super-resolution on the sphere using convex optimization}.
\newblock {\em Signal Processing, IEEE Transactions on}, 63(9):2253--2262,
  2015.

\bibitem{Liu:2013bh}
Johann~A Bengua, Ho~N Phien, Hoang~Duong Tuan, and Minh~N Do.
\newblock Efficient tensor completion for color image and video recovery:
  Low-rank tensor train.
\newblock {\em IEEE Transactions on Image Processing}, 26(5):2466--2479, 2017.

\bibitem{Bhaskar:2013ki}
B~N Bhaskar, Gongguo Tang, and B~Recht.
\newblock {Atomic norm denoising with applications to line spectral
  estimation}.
\newblock {\em Signal Processing, IEEE Transactions on}, 61(23):5987--5999,
  2013.

\bibitem{boyd:2011bwa}
Stephen Boyd.
\newblock {Distributed optimization and statistical learning via the
  alternating direction method of multipliers}.
\newblock {\em Foundations and Trends in Machine Learning}, 3(1):1--122, 2011.

\bibitem{Burer:2003fg}
Samuel Burer and Renato D~C Monteiro.
\newblock {A nonlinear programming algorithm for solving semidefinite programs
  via low-rank factorization}.
\newblock {\em Mathematical Programming}, 95(2):329--357, February 2003.

\bibitem{cai2019nonconvex}
Changxiao Cai, Gen Li, H~Vincent Poor, and Yuxin Chen.
\newblock Nonconvex low-rank symmetric tensor completion from noisy data.
\newblock {\em Advances in neural information processing systems}, 2019.

\bibitem{Candes:2008wp}
Emmanuel~J Cand{\`e}s.
\newblock {The restricted isometry property and its implications for compressed
  sensing}.
\newblock {\em Comptes Rendus Mathematique}, 346(9-10):589--592, May 2008.

\bibitem{Candes:2014br}
Emmanuel~J Cand{\`e}s and Carlos Fernandez-Granda.
\newblock {Towards a mathematical theory of super-resolution}.
\newblock {\em Communications on Pure and Applied Mathematics}, 67(6):906--956,
  June 2014.

\bibitem{Candes:2009kja}
Emmanuel~J Cand{\`e}s and Benjamin Recht.
\newblock {Exact matrix completion via convex optimization}.
\newblock {\em Foundations of Computational Mathematics}, 9(6):717--772, 2009.

\bibitem{Cardoso:1989eo}
Jean-Francois Cardoso.
\newblock {Source separation using higher order moments}.
\newblock {\em International Conference on Acoustics, Speech, and Signal
  Processing}, pages 2109--2112 vol.4, 1989.

\bibitem{Chandrasekaran:2010hl}
Venkat Chandrasekaran, Benjamin Recht, Pablo~A Parrilo, and Alan~S Willsky.
\newblock {The convex geometry of linear inverse problems}.
\newblock {\em Foundations of Computational Mathematics}, 12(6):805--849, 2012.

\bibitem{chi2014compressive}
Yuejie Chi and Yuxin Chen.
\newblock Compressive two-dimensional harmonic retrieval via atomic norm
  minimization.
\newblock {\em IEEE Transactions on Signal Processing}, 63(4):1030--1042, 2014.

\bibitem{chi2019nonconvex}
Yuejie Chi, Yue~M Lu, and Yuxin Chen.
\newblock Nonconvex optimization meets low-rank matrix factorization: An
  overview.
\newblock {\em IEEE Transactions on Signal Processing}, 67(20):5239--5269,
  2019.

\bibitem{comon:2009ur}
Pierre Comon.
\newblock {Tensor decompositions, state of the art and applications}.
\newblock {\em IMA Conf. Mathematics in Signal Processing}, May 2009.

\bibitem{davenport2016overview}
Mark~A Davenport and Justin Romberg.
\newblock An overview of low-rank matrix recovery from incomplete observations.
\newblock {\em IEEE Journal of Selected Topics in Signal Processing},
  10(4):608--622, 2016.

\bibitem{spark}
David~L Donoho and Michael Elad.
\newblock {Optimally sparse representation in general (nonorthogonal)
  dictionaries via 1 minimization}.
\newblock {\em Proceedings of the National Academy of Sciences},
  100(5):2197--2202, 2003.

\bibitem{eftekhari2019sparse}
Armin Eftekhari, Jared Tanner, Andrew Thompson, Bogdan Toader, and Hemant
  Tyagi.
\newblock Sparse non-negative super-resolution—simplified and stabilised.
\newblock {\em Applied and Computational Harmonic Analysis}, 2019.

\bibitem{FernandezGranda:2013wp}
Carlos Fernandez-Granda.
\newblock Support detection in super-resolution.
\newblock In {\em Proceedings of the 10th International Conference on Sampling
  Theory and Applications (SampTA 2013)}, pages 145--148, 2013.

\bibitem{fernandez2016demixing}
Carlos Fernandez-Granda, Gongguo Tang, Xiaodong Wang, and Le~Zheng.
\newblock {Demixing sines and spikes: Robust spectral super-resolution in the
  presence of outliers}.
\newblock {\em Information and Inference: A Journal of the IMA}, 2016.

\bibitem{foucart2013mathematical}
S.~Foucart and H.~Rauhut.
\newblock {\em {A mathematical introduction to compressive sensing}}.
\newblock Applied and Numerical Harmonic Analysis. Springer New York, 2013.

\bibitem{friedland2016nuclear}
Shmuel Friedland and Lek-Heng Lim.
\newblock {Nuclear norm of higher-order tensors}.
\newblock {\em Mathematics of Computation}, 2017.

\bibitem{Gandy:2011bh}
Silvia Gandy, Benjamin Recht, and Isao Yamada.
\newblock {Tensor completion and low-n-rank tensor recovery via convex
  optimization}.
\newblock {\em Inverse Problems}, 27(2):025010, February 2011.

\bibitem{ge2017learning}
Rong Ge, Jason~D Lee, and Tengyu Ma.
\newblock Learning one-hidden-layer neural networks with landscape design.
\newblock In {\em 6th International Conference on Learning Representations,
  ICLR 2018}, 2018.

\bibitem{Heckel:2014ts}
Reinhard Heckel, Veniamin~I Morgenshtern, and Mahdi Soltanolkotabi.
\newblock {Super-resolution radar}.
\newblock {\em Information and Inference: A Journal of the IMA}, 5(1):22--75,
  2016.

\bibitem{Hillar:2013by}
Christopher~J Hillar and Lek-Heng Lim.
\newblock {Most tensor problems are NP-Hard}.
\newblock {\em Journal of the ACM (JACM)}, 60(6):45--39, November 2013.

\bibitem{hopkins2019robust}
Samuel~B Hopkins, Tselil Schramm, and Jonathan Shi.
\newblock A robust spectral algorithm for overcomplete tensor decomposition.
\newblock In {\em Conference on Learning Theory}, pages 1683--1722. PMLR, 2019.

\bibitem{Chen:2005jn}
Jieqiong Hou and Haifeng Qian.
\newblock Collaboratively filtering malware infections: a tensor decomposition
  approach.
\newblock In {\em Proceedings of the ACM Turing 50th Celebration
  Conference-China}, page~28. ACM, 2017.

\bibitem{Huang:2014vc}
Bo~Huang, Cun Mu, Donald Goldfarb, and John Wright.
\newblock {Provable low-rank tensor recovery}.
\newblock {\em Optimization-Online}, 4252, 2014.

\bibitem{Jain:2014wm}
Prateek Jain and Sewoong Oh.
\newblock {Provable tensor factorization with missing data}.
\newblock In {\em Advances in Neural Information Processing Systems}, pages
  1431--1439, 2014.

\bibitem{kolda2009tensor}
Tamara~G Kolda and Brett~W Bader.
\newblock {Tensor decompositions and applications}.
\newblock {\em SIAM review}, 51(3):455--500, 2009.

\bibitem{Kruskal:1977dx}
Joseph~B Kruskal.
\newblock {Three-way arrays: rank and uniqueness of trilinear decompositions,
  with application to arithmetic complexity and statistics}.
\newblock {\em Linear Algebra and its Applications}, 18(2):95--138, January
  1977.

\bibitem{li2015overcomplete}
Qiuwei Li, Ashley Prater, Lixin Shen, and Gongguo Tang.
\newblock Overcomplete tensor decomposition via convex optimization.
\newblock In {\em 2015 IEEE 6th International Workshop on Computational
  Advances in Multi-Sensor Adaptive Processing (CAMSAP)}, pages 53--56. IEEE,
  2015.

\bibitem{li2017convex}
Qiuwei Li and Gongguo Tang.
\newblock Convex and nonconvex geometries of symmetric tensor factorization.
\newblock In {\em Asilomar Conference on Signals, Systems, and Computers},
  2017.

\bibitem{li2018approximate}
Qiuwei Li and Gongguo Tang.
\newblock Approximate support recovery of atomic line spectral estimation: A
  tale of resolution and precision.
\newblock {\em Applied and Computational Harmonic Analysis}, 2018.

\bibitem{li2017geometry}
Qiuwei Li, Zhihui Zhu, and Gongguo Tang.
\newblock Geometry of factored nuclear norm regularization.
\newblock {\em arXiv preprint arXiv:1704.01265}, 2017.

\bibitem{li2018non}
Qiuwei Li, Zhihui Zhu, and Gongguo Tang.
\newblock The non-convex geometry of low-rank matrix optimization.
\newblock {\em Information and Inference: A Journal of the IMA}, 8(1):51--96,
  2018.

\bibitem{li2019alternating}
Qiuwei Li, Zhihui Zhu, and Gongguo Tang.
\newblock Alternating minimizations converge to second-order optimal solutions.
\newblock In {\em International Conference on Machine Learning}, pages
  3935--3943. PMLR, 2019.

\bibitem{li2019provable}
Qiuwei Li, Zhihui Zhu, Gongguo Tang, and Michael~B Wakin.
\newblock Provable bregman-divergence based methods for nonconvex and
  non-lipschitz problems.
\newblock {\em arXiv preprint arXiv:1904.09712}, 2019.

\bibitem{li2022tensor}
Shuang Li and Qiuwei Li.
\newblock Local and global convergence of general burer-monteiro tensor
  optimizations.
\newblock In {\em Proceedings of the AAAI Conference on Artificial
  Intelligence}, volume~36, 2022.

\bibitem{li2020global}
Shuang Li, Qiuwei Li, Zhihui Zhu, Gongguo Tang, and Michael~B Wakin.
\newblock The global geometry of centralized and distributed low-rank matrix
  recovery without regularization.
\newblock {\em IEEE Signal Processing Letters}, 27:1400--1404, 2020.

\bibitem{lu2016tensor}
Canyi Lu, Jiashi Feng, Yudong Chen, Wei Liu, Zhouchen Lin, and Shuicheng Yan.
\newblock {Tensor robust principal component analysis: exact recovery of
  corrupted low-rank tensors via convex optimization}.
\newblock In {\em Proceedings of the IEEE Conference on Computer Vision and
  Pattern Recognition}, pages 5249--5257, 2016.

\bibitem{ma2016polynomial}
Tengyu Ma, Jonathan Shi, and David Steurer.
\newblock Polynomial-time tensor decompositions with sum-of-squares.
\newblock In {\em 2016 IEEE 57th Annual Symposium on Foundations of Computer
  Science (FOCS)}, pages 438--446. IEEE, 2016.

\bibitem{Mu:2013tz}
Cun Mu, Bo~Huang, John Wright, and Donald Goldfarb.
\newblock {Square deal: lower bounds and improved relaxations for tensor
  recovery}.
\newblock In {\em International Conference on Machine Learning}, pages 73--81,
  2014.

\bibitem{potechin2017exact}
Aaron Potechin and David Steurer.
\newblock Exact tensor completion with sum-of-squares.
\newblock In {\em Conference on Learning Theory}, pages 1619--1673. PMLR, 2017.

\bibitem{Recht:2011up}
Benjamin Recht.
\newblock {A simpler approach to matrix completion}.
\newblock {\em Journal of Machine Learning Research}, 12(Dec):3413--3430, 2011.

\bibitem{Recht:2010hta}
Benjamin Recht, Maryam Fazel, and Pablo~A Parrilo.
\newblock {Guaranteed minimum-rank solutions of linear matrix equations via
  nuclear norm minimization}.
\newblock {\em SIAM Review}, 52(3):471--501, August 2010.

\bibitem{Hsu:2013iz}
Hanie Sedghi, Majid Janzamin, and Anima Anandkumar.
\newblock Provable tensor methods for learning mixtures of generalized linear
  models.
\newblock In {\em Artificial Intelligence and Statistics}, pages 1223--1231,
  2016.

\bibitem{Lim:2010if}
Nicholas~D Sidiropoulos, Lieven De~Lathauwer, Xiao Fu, Kejun Huang, Evangelos~E
  Papalexakis, and Christos Faloutsos.
\newblock Tensor decomposition for signal processing and machine learning.
\newblock {\em IEEE Transactions on Signal Processing}, 65(13):3551--3582,
  2017.

\bibitem{Smilde:2005wf}
Age Smilde, Rasmus Bro, and Paul Geladi.
\newblock {\em {Multi-Way Analysis: Applications in the Chemical Sciences}}.
\newblock John Wiley {\&} Sons, 2005.

\bibitem{Tang:2013fo}
Gongguo Tang, B~N Bhaskar, P~Shah, and B~Recht.
\newblock {Compressed sensing off the grid}.
\newblock {\em Information Theory, IEEE Transactions on}, 59(11):7465--7490,
  2013.

\bibitem{Tang:2013gd}
Gongguo Tang, Badri~Narayan Bhaskar, and Benjamin Recht.
\newblock {Near minimax line spectral estimation}.
\newblock {\em IEEE Transactions on Information Theory}, 61(1):499--512, 2015.

\bibitem{Tang:2013:ad}
Gongguo Tang and Benjamin Recht.
\newblock {Atomic decomposition of mixtures of translation-invariant signals}.
\newblock In {\em IEEE International Workshop on Computational Advances in
  Multi-Sensor Adaptive Processing CAMSAP}, Saint Martin, December 2013.

\bibitem{Tang:2015gt}
Gongguo Tang and Parikshit Shah.
\newblock {Guaranteed tensor decomposition: a moment approach}.
\newblock In {\em International Conference on Machine Learning}, Lille, France,
  2015.

\bibitem{Tang:2014jk}
Gongguo Tang, Parikshit Shah, Badri~Narayan Bhaskar, and Benjamin Recht.
\newblock {Robust line spectral estimation}.
\newblock In {\em 2014 48th Asilomar Conference on Signals, Systems and
  Computers}, pages 301--305. IEEE, 2014.

\bibitem{Watson:1992ha}
G~A Watson.
\newblock {Characterization of the subdifferential of some matrix norms}.
\newblock {\em Linear Algebra and its Applications}, 170:33--45, June 1992.

\bibitem{yuan2016tensor}
Ming Yuan and Cun-Hui Zhang.
\newblock On tensor completion via nuclear norm minimization.
\newblock {\em Foundations of Computational Mathematics}, 16(4):1031--1068,
  2016.

\bibitem{zhu2018global}
Zhihui Zhu, Qiuwei Li, Gongguo Tang, and Michael~B Wakin.
\newblock Global optimality in low-rank matrix optimization.
\newblock {\em IEEE Transactions on Signal Processing}, 66(13):3614--3628,
  2018.

\bibitem{zhu2021global}
Zhihui Zhu, Qiuwei Li, Gongguo Tang, and Michael~B Wakin.
\newblock The global optimization geometry of low-rank matrix optimization.
\newblock {\em IEEE Transactions on Information Theory}, 67(2):1308--1331,
  2021.

\bibitem{zhu2019distributed}
Zhihui Zhu, Qiuwei Li, Xinshuo Yang, Gongguo Tang, and Michael~B Wakin.
\newblock Distributed low-rank matrix factorization with exact consensus.
\newblock {\em Advances in Neural Information Processing Systems},
  32:8422--8432, 2019.

\end{thebibliography}

\end{document}